\providecommand{\BibLatexMode}[1]{}
\providecommand{\BibTexMode}[1]{#1}
  \renewcommand{\BibLatexMode}[1]{}
  \renewcommand{\BibTexMode}[1]{#1}
  \renewcommand{\BibLatexMode}[1]{#1}
  \renewcommand{\BibTexMode}[1]{}
\newcommand{\kentdelete}[1]{}
\newcommand{\IfPrinterVer}[2]{#2}%
\providecommand{\Mh}[1]{{#1}}%
   \renewcommand{\IfPrinterVer}[2]{#1}%
\renewcommand{\Mh}[1]{{\textcolor{red}{#1}}}%
\newcommand{\etal}{\textit{et~al.}\xspace}
\newcommand{\Hastad}{H\r{a}stad\xspace}
\newlength{\savedparindent}
\newcommand{\SaveIndent}{\setlength{\savedparindent}{\parindent}}
\newcommand{\RestoreIndent}{\setlength{\parindent}{\savedparindent}}
\newcommand{\Term}[1]{\textsf{#1}}
\definecolor{blue25}{rgb}{0, 0, 11}
\newcommand{\emphic}[2]{%
  \textcolor{blue25}{%
    \textbf{\emph{#1}}}%
  \index{#2}}
\renewcommand{\emphic}[2]{\textbf{\emph{#1}}}
\newcommand{\emphi}[1]{\emphic{#1}{#1}}
\newcommand{\cardin}[1]{\left| {#1} \right|}%
\newcommand{\ceil}[1]{\left\lceil {#1} \right\rceil}
\newcommand{\pth}[1]{\mleft({#1}\mright)}
\newcommand{\sepw}[1]{\left|\, {#1} \right.}
\newcommand{\brc}[1]{\left\{ {#1} \right\}}
\newcommand{\setof}[1]{\left\{ {#1} \right\}}
\newcommand{\Set}[2]{\left\{ #1 \;\middle\vert\; #2 \right\}}
\newcommand{\pbrc}[1]{\mleft[ {#1} \mright]}
\newcommand{\SetDiff}{\triangle}
\newcommand{\Ex}[1]{\mathop{\mathbf{E}}\pbrc{#1}}
\newcommand{\Prob}[1]{\mathop{\mathbf{Pr}}\pbrc{#1}}
\newcommand{\remove}[1]{}
\newtheorem{theorem}{Theorem}
\newtheorem{lemma}[theorem]{Lemma}%
\newtheorem*{restate*}[theorem]{Restatement of }%
\newtheorem{corollary}[theorem]{Corollary}%
\newtheorem{observation}[theorem]{Observation}%
\newcommand{\myqedsymbol}{\rule{2mm}{2mm}}
\theoremstyle{remark}%
\newtheorem{defn}[theorem]{Definition}
\newtheorem*{defn:unnumbered}{Definition}%
\newtheorem*{remark:unnumbered}[theorem]{Remark}%
\newtheorem{remark}[theorem]{Remark}%
\newtheorem{example}[theorem]{Example}%
\theoremstyle{nonumberplain}%
\newtheorem{proof}{Proof:}%
\numberwithin{figure}{section}%
\numberwithin{table}{section}%
\numberwithin{equation}{section}%
\newcommand{\HLinkSuffix}[3]{\hyperref[#2]{#1\ref*{#2}{#3}}}
\newcommand{\HLinkShort}[2]{\hyperref[#2]{#1\ref*{#2}}}
\newcommand{\HLink}[2]{\hyperref[#2]{#1~\ref*{#2}}}
\newcommand{\HLinkPage}[2]{\hyperref[#2]{#1~\ref*{#2}%
    $_\text{p\pageref{#2}}$}}
\newcommand{\eqrefpar}[1]{\hyperref[equation:#1]{(\ref*{equation:#1})}} %
\newcommand{\figlab}[1]{\label{fig:#1}}
\newcommand{\figref}[1]{\HLink{Figure}{fig:#1}}
\newcommand{\seclab}[1]{\label{sec:#1}} %
\newcommand{\secref}[1]{\HLink{Section}{sec:#1}} %
\newcommand{\corlab}[1]{\label{cor:#1}}
\newcommand{\corref}[1]{\HLink{Corollary}{cor:#1}}%
\providecommand{\deflab}[1]{\label{def:#1}}
\newcommand{\defref}[1]{\HLink{Definition}{def:#1}}
\newcommand{\exmlab}[1]{\label{example:#1}}
\newcommand{\exmref}[1]{\HLink{Example}{example:#1}}
\newcommand{\lemlab}[1]{\label{lemma:#1}}
\newcommand{\lemref}[1]{\HLink{Lemma}{lemma:#1}}
\newcommand{\lemrefshort}[1]{\HLinkShort{L}{lemma:#1}}
\newcommand{\itemlab}[1]{\label{item:#1}}
\newcommand{\itemref}[1]{\HLinkSuffix{(}{item:#1}{)}}
\newcommand{\remlab}[1]{\label{rem:#1}}
\newcommand{\remref}[1]{\HLink{Remark}{rem:#1}}
\newcommand{\obslab}[1]{\label{observation:#1}}
\newcommand{\obsref}[1]{\HLink{Observation}{observation:#1}}
\newcommand{\thmlab}[1]{{\label{theo:#1}}}
\newcommand{\thmref}[1]{\HLink{Theorem}{theo:#1}}
\newcommand{\ETH}{\Term{ETH}\xspace}
\newcommand{\SETH}{\Term{SETH}\xspace}
\newcommand{\TrSAT}{\ProblemC{$3$SAT}\xspace}
\newcommand{\kSAT}{\ProblemC{$k$SAT}\xspace}
\providecommand{\Mh}[1]{{#1}}
\newcommand{\IntRange}[1]{\left\llbracket #1 \right\rrbracket}
\renewcommand{\th}{th\xspace}
\newcommand{\obj}{\Mh{f}}%
\newcommand{\objA}{\Mh{g}}%
\newcommand{\objB}{\Mh{h}}%
\newcommand{\ds}{\displaystyle}%
\newcommand{\ObjSet}{{\Mh{\mathcal{U}}}}%
\newcommand{\ObjSetA}{\Mh{\mathcal{V}}}%
\newcommand{\ObjSetB}{\Mh{\mathcal{H}}}%
\newcommand{\Cover}{\Mh{\mathcal{C}}}%
\newcommand{\bNotation}[1]{\Mh{#1}}%
\newcommand{\ball}{\bNotation{b}}%
\newcommand{\ballA}{\bNotation{b'}}%
\newcommand{\ballY}[2]{\Mh{\mathbbm{b}}\pth{#1, #2}}%
\newcommand{\BallSet}{\Mh{\mathcal{B}}}%
\newcommand{\SetA}{\Mh{X}}%
\newcommand{\SetB}{\Mh{Y}}%
\newcommand{\SetC}{\Mh{U}}
\newcommand{\TriSet}{\Mh{\EuScript{T}}}
\newcommand{\cen}{\Mh{c}}
\newcommand{\optFSet}{\Mh{\mathcal{O}}} %
\newcommand{\locFSet}{\Mh{\mathcal{L}}} %
\newcommand{\optSet}{\Mh{O}}%
\newcommand{\flower}{\Mh{F}}%
\newcommand{\flowerX}[1]{\flower_{#1}}
\newcommand{\optFl}[1]{\flowerX{#1}}%
\newcommand{\locFl}[1]{\flowerX{#1}'}%
\newcommand{\bdDiv}{\Mh{\mathcal{B}}} %
\newcommand{\lpnt}{\Mh{l}}%
\newcommand{\opnt}{\Mh{o}}%
\newcommand{\dblC}{\Mh{c_{\mathrm{dbl}}}}
\newcommand{\dblCd}{\Mh{{c}_d}}
\newcommand{\Weight}{\Mh{W}}%
\newcommand{\weightOp}{\operatorname{\Mh{w}}}
\newcommand{\weightX}[1]{\weightOp\pth{#1}}
\newcommand{\excess}{\Mh{\mathcal{E}}}%
\newcommand{\volX}[1]{\operatorname{vol}\pth{#1}}
\newcommand{\diamX}[1]{\operatorname{\Mh{diam}}\pth{#1}}%
\newcommand{\Family}{\Mh{\EuScript{F}}}%
\newcommand{\FamilyA}{\Mh{\EuScript{G}}}%
\newcommand{\sphereC}{{\mathbb{{S}}}}%
\newcommand{\sphereCBig}{\mathbb{S}}%
\newcommand{\sphereX}[2]{\sphereCBig\pth{#1, #2}}%
\renewcommand{\Re}{{\mathbb{R}}}
\newcommand{\naturalnumbers}{\mathbb{N}} %
\newcommand{\distCharX}[1]{\mathsf{d}_{#1}}
\newcommand{\repX}[1]{\mathrm{rep}\pth{#1}}
\newcommand{\gradC}{\mathbbm{d}} %
\newcommand{\gradY}[2]{\gradC_{#1}\pth{#2}}
\providecommand{\lexprod}{\bullet} %
\newcommand{\edgedensityof}[1]{\frac{\cardin{\EdgesX{#1}}}{\cardin{\verticesof{#1}}}}
\newcommand{\centerX}[1]{\mathrm{center}\pth{#1}}
\newcommand{\FDecomp}[2]{%
  \raisebox{-2pt}{\text{\small \SixFlowerPetalDotted}}%
  \pth{#1, #2}}
\newcommand{\TreeX}[1]{T_{#1}}
\newcommand{\nnK}[3]{\Mh{\mathrm{NN}}_{#1}\pth{#2, #3}}
\newcommand{\ServeX}[1]{\Mh{S}\pth{#1}}%
\newcommand{\MaxDemand}{{\widehat{\delta}}}
\newcommand{\reachX}[1]{\Mh{\tau}\pth{#1}}%
\newcommand{\MaxReach}{{\widehat{\tau}}}%
\newcommand{\clusters}{\Mh{\mathcal{W}}} %
\newcommand{\cluster}{\Mh{C}} %
\newcommand{\clusterA}{\Mh{\mathcal{C}}} %
\newcommand{\clusterX}[1]{\cluster\pth{#1}}
\providecommand{\excessCmd}{\operatorname{excess}} %
\newcommand{\excessof}[1]{\excessCmd\pth{#1}} %
\newcommand{\trA}{\Mh{\pi}}
\newcommand{\trB}{\Mh{\sigma}}
\newcommand{\distY}[2]{\mleft\| #1 - #2 \mright\|}
\newcommand{\normX}[1]{\left\| #1 \right\|}
\newcommand{\Prop}{\Mh{\Pi}}%
\providecommand{\CNFX}[1]{ {\em{\textrm{(#1)}}}}%
\providecommand{\CNFCCCG}{\CNFX{CCCG}}%
\newcommand{\cDensity}{\Mh{\rho}} %
\newcommand{\densityOp}{\Mh{\mathop{\mathrm{density}}}}%
\newcommand{\densityX}[1]{\densityOp\pth{#1}}%
\newcommand{\NbrX}[1]{\Mh{N}\pth{#1}}
\newcommand{\PntSet}{\ensuremath{\Mh{P}}\xspace}%
\newcommand{\PntSetA}{\ensuremath{\Mh{Q}}\xspace}%
\newcommand{\PointDec}[1]{\Mh{#1}}
\newcommand{\pnt}{\PointDec{p}}%
\newcommand{\pntA}{\PointDec{q}}%
\newcommand{\eps}{\Mh{\varepsilon}}%
\newcommand{\SepSet}{\Mh{Z}}%
\newcommand{\DomSet}{\Mh{D}}
\newcommand{\CovSet}{\Mh{R}} %
\newcommand{\DiskOrg}{\mathsf{disk}}
\newcommand{\CH}{\mathcal{CH}}
\newcommand{\Metric}{\mathcal{X}}
\newcommand{\Vertices}{\Mh{V}}%
\newcommand{\SetX}{\Mh{X}}
\newcommand{\SetY}{\Mh{Y}}
\newcommand{\SetZ}{\Mh{Z}}
\newcommand{\VerticesX}[1]{\Mh{V}\pth{#1}}%
\newcommand{\verticesof}[1]{\Mh{V}\pth{#1}}%
\newcommand{\Edges}{\Mh{E}}
\newcommand{\EdgesX}[1]{\Edges\pth{#1}}
\newcommand{\optset}{\Mh{O}} %
\newcommand{\locset}{\Mh{L}} %
\newcommand{\edgeY}[2]{{#1#2}}
\newcommand{\Opt}{\Mh{{O}}}%
\newcommand{\locSol}{\Mh{{L}}}%
\newcommand{\BVertices}{\Mh{B}}%
\newcommand{\bSize}{\Mh{{b}}} %
\newcommand{\oSize}{\Mh{{o}}} %
\newcommand{\lSize}{\Mh{{l}}} %
\newcommand{\iCov}{\Mh{\omega}}%
\newcommand{\ICovGraph}[2]{#1\pbrc{#2}}%
\newcommand{\icgA}{\ICovGraph{\graph}{\Family}} %
\newcommand{\IObjSet}[2]{#1\pbrc{#2}}
\newcommand{\ProblemC}[1]{\textsf{#1}}
\providecommand{\ComplexityClass}[1]{%
  {{\textcolor[named]{ColorComplexityClass}{%
        \textsc{#1}}}}}
\newcommand{\Interval}{J}
\newcommand{\POLYT}{\ComplexityClass{P}\xspace}
\newcommand{\poly}{\operatorname{poly}}%
\newcommand{\polylog}{\operatorname{polylog}}
\newcommand{\MaxSNPHard}{\ComplexityClass{MaxSNP-Hard}\xspace}
\newcommand{\PTAS}{\Term{PTAS}\xspace}
\newcommand{\QPTAS}{\Term{QPTAS}\xspace}
\newcommand{\NP}{\ComplexityClass{NP}\xspace}
\providecommand{\NPComplete}{%
  {\ComplexityClass{NP-Complete}}%
  \index{NP!Complete}\xspace%
}
\providecommand{\NPHard}{{\ComplexityClass{NP-Hard}}%
  \index{NP!hard}\xspace}
\newcommand{\APXHard}{\ComplexityClass{APX-Hard}\xspace}
\newcommand{\Nesetril}{N{e{\v s}et{\v r}il}\xspace}
\newcommand{\si}[1]{#1}%
\newcommand{\IGraph}[1]{\graph_{#1}}
\newcommand{\GInduced}[1]{\graph_{|{#1}}}
\newcommand{\Partition}{\Mh{\mathcal{P}}}%
\newcommand{\atgen}{\symbol{'100}}%
\newcommand{\KentThanks}%
{%
  \thanks{%
    Department of Computer Science; %
    University of Illinois; %
    201 N. Goodwin Avenue; %
    Urbana, IL, 61801, USA; %
    {\tt quanrud2\atgen{}illinois.edu}; %
    {\tt\href%
      {http://illinois.edu/\string~quanrud2/}%
      {http://illinois.edu/\string~quanrud2/}%
    }%
    . %
  }%
}%
\newcommand{\SarielThanks}[1][]{%
  \thanks{Department of Computer Science; %
    University of Illinois; %
    201 N. Goodwin Avenue; %
    Urbana, IL, 61801, USA; %
    {\tt sariel\atgen{}illinois.edu}; %
    {\tt \url{http://sarielhp.org/}}. #1}} %
 \newcommand{\nfrac}[2]{#1/#2}
\newcommand{\demandOp}{\Mh{\delta}}%
\newcommand{\demandX}[1]{\demandOp\pth{#1}}%
\DeclareMathAlphabet{\mathantt}{OT1}{antt}{li}{it}
\DeclareMathAlphabet{\mathpzc}{OT1}{pzc}{m}{it}
\DeclareMathAlphabet{\mathcalligra}{T1}{calligra}{m}{n}
\newcommand{\lenX}[1]{\normX{#1}}
\newcommand{\headsX}[1]{\Mh{\mathrm{heads}}\pth{#1}}
\newcommand{\exSize}{\Mh{\lambda}}%
\newcommand{\IncludeGraphics}[2][]{%
  \typeout{}%
  \typeout{Graphics: #2}%
  \typeout{\ includegraphics[#1]{#2}}%
  \includegraphics[#1]{#2}
  \typeout{}%
}
\newcommand{\IncGraphPage}[4][]{%
  \IncludeGraphics[page=#4,#1]{{#2/#3}}
}
\newcommand{\Instance}{I}
\newcommand{\defGraph}{\graph = (\Vertices,\Edges)}
\newcommand{\class}{\Mh{\mathcal{C}}}
\newcommand{\minorsDY}[2]{{\Mh{\nabla}_{\!#1}} \pth{#2} } %
\newcommand{\GraphNotation}[1]{\Mh{#1}}
\newcommand{\clusterZ}[1]{\Mh{C}_{#1}}
\newcommand{\cvX}[1]{\Mh{c}_{#1}} %
\newcommand{\prmtX}[1]{\Mh{\pi}\pth{#1}}%
\newcommand{\clusteredge}[2]{\edgeY{\clusterZ{#1}}{\clusterZ{#2}}}
\newcommand{\scoopedge}[2]{\edgeY{\scoop{#1}}{\scoop{#2}}}
\newcommand{\rcvX}[1]{\cvX{\prmtX{#1}}}%
\newcommand{\randomcluster}[1]{\clusterZ{\prmtX{#1}}}%
\newcommand{\scoop}[1]{\clusterZ{#1}'}
\newcommand{\rsX}[1]{\Mh{C}_{\!\prmtX{#1}}'}%
\newcommand{\scoops}{\Family'} %
\newcommand{\smalledges}{\Edges_1} %
\newcommand{\bigedges}{\Edges_2} %
\newcommand{\graph}{\GraphNotation{G}}%
\newcommand{\graphA}{\GraphNotation{H}}%
\newcommand{\graphB}{\GraphNotation{K}}%
\newcommand{\TheoremDefExt}[3]{%
  \expandafter\newcommand\csname bodyThm#1\endcsname{#2}
  \ifthenelse{\isempty{#3}}{%
    \begin{theorem}%
      \thmlab{#1}%
      #2
    \end{theorem}%
  }{%
    \begin{theorem}[#3]%
      \thmlab{#1}%
      #2
    \end{theorem}%
  }%
}
\newcommand{\TheoremBody}[1]{%
  \csname bodyThm#1\endcsname%
}
\newcommand{\LemmaDefExt}[3]{%
  \expandafter\newcommand\csname bodyLemBody#1\endcsname{#2}
  \ifthenelse{\isempty{#3}}{%
    \begin{lemma}%
      \lemlab{#1}%
      #2
    \end{lemma}%
  }{%
    \begin{lemma}[#3]%
      \lemlab{#1}%
      #2
    \end{lemma}%
  }%
}
\newcommand{\LemmaBody}[1]{%
  \csname bodyLemBody#1\endcsname%
}
\newcommand{\mytfrac}[2]{%
   \frac{#1\rule[-3\lineskip]{-0.001cm}{0\lineskip}}%
   {#2 \rule{-0.001cm}{7\lineskip}}}
\definecolor{sarielChangeColor}{rgb}{0.3,0.451,0.055}%
\newcommand{\ProblemE}[3]{%
  \begin{minipage}{0.31\linewidth}
    \smallskip%
    #1%
    \smallskip%
  \end{minipage}
  &
  \begin{minipage}{0.26\linewidth}
    \smallskip%
    #2%
    \smallskip%
  \end{minipage}
  &
  \begin{minipage}{0.3\linewidth}
    \smallskip%
    #3%
    \smallskip%
  \end{minipage}%
  \\%
}
\newcommand{\SarielComp}[1]{}
\newcommand{\NotSarielComp}[1]{#1}%
\newcommand{\SarielComp}[1]{#1}%
\newcommand{\NotSarielComp}[1]{}%
  \renewcommand{\Mh}[1]{{\textcolor{ColorMath}{#1}}}
  \definecolor{blue25}{rgb}{0,0,0}
  \definecolor{blue25}{rgb}{0,0,0.7}
\begin{document}

\title{Approximation Algorithms for Polynomial-Expansion and
  Low-Density Graphs%
  \thanks{Work on this paper was partially supported by a NSF AF
    awards CCF-1421231, and 
    CCF-1217462. %
    A preliminary version of this paper appeared in E{S}A 2015
    \cite{hq-aapel-15}. A talk by the first author in SODA 2016 was
    based to some extent on the work in this paper.}%
}%

\author{%
  Sariel Har-Peled%
  \SarielThanks{}%
  \and%
  Kent Quanrud%
  \KentThanks{}%
}%

\maketitle



\begin{abstract}
  We investigate the family of intersection graphs of low density
  objects in low dimensional Euclidean space.  This family is quite
  general, includes planar graphs, and in particular is a subset of
  the family of graphs that have polynomial expansion.

  We present efficient $(1+\eps)$-approximation algorithms for
  polynomial expansion graphs, for \ProblemC{Independent Set},
  \ProblemC{Set Cover}, and \ProblemC{Dominating Set} problems, among
  others, and these results seem to be new. Naturally, \PTAS{}'s for
  these problems are known for sub{}classes of this graph family.
  These results have immediate applications in the geometric
  domain. For example, the new algorithms yield the only \PTAS known
  for covering points by fat triangles (that are shallow).

  We also prove corresponding hardness of approximation for some of
  these optimization problems, characterizing their intractability
  with respect to density. For example, we show that there is no \PTAS
  for covering points by fat triangles if they are not shallow, thus
  matching our \PTAS for this problem with respect to depth.
\end{abstract}

\section{Introduction}

\paragraph{Motivation.}

Geometric set cover, as the name suggests, is a geometric
instantiation of the classical set cover problem. For example, given a
set of points and a set of triangles (with fixed locations) in the
plane, we want to select a minimum number of triangles that cover all
of the given points. Similar geometric variants can be defined for
independent set, hitting set, dominating set, and the like.

For relatively simple shapes, such geometric instances should be
computationally easier than the general problem. By now there is a
large yet incomplete collection of results on such problems, listed
below. For example, one can get $(1+\eps)$-approximation to the
geometric set cover problem when the regions are disks, but we do not
have such an approximation algorithm if the regions are fat triangles
of similar size.  This discrepancy seems arbitrary and somewhat
baffling, and the purpose of this work is to better understand these
subtle gaps.

\paragraph{Plan of attack.}
We explore the type of graphs that arises out of these geometric
instances, and in the process introduce the class of low-density
graphs. We explore the properties of this graph class, and the
optimization problems that can be approximated efficiently on the
broader class of graphs that have small separators. Separabilitity
turns out to be the key ingredient needed for efficient approximation.
We also study lower bounds for such instances, characterizing when
they are computationally hard.

\begin{figure}[t]
  \begin{center}
    \begin{tabular}{|l|%
      l|l|}
      \hline
      Objects & \si{Approx.} \si{Alg.} & Hardness\\
      \hline%
      \ProblemE{Disks/pseudo-disks}%
      { \PTAS \cite{mrr-sahsg-14}}%
      { Exact version \NPHard\\ \cite{fg-oafac-88} }
      \hline%
      \ProblemE{Fat triangles of same size}%
      {$O(1)$  \cite{cv-iaags-07}}%
      { \APXHard: \lemref{no:PTAS:fat:tr:set:cover}\\
      I.e., no \PTAS possible.
      }
      \hline%
      \ProblemE{Fat objects in $\Re^2$}%
      {$O(\log^* \mathrm{opt})$  \cite{abes-ibulf-14}}%
      {\APXHard:  \lemrefshort{no:PTAS:fat:tr:set:cover}}
      \hline%
      \ProblemE{Objects $\subseteq \Re^d$, $O(1)$
      density\\
      E.g. fat objects, $O(1)$ depth.}%
      {\PTAS: \thmref{g:hitting:set:cover}}%
      {Exact version \NPHard\\%
      \cite{fg-oafac-88}}
      \hline%
      \ProblemE{Objects with $\polylog$ density}%
      {\QPTAS: \thmref{g:hitting:set:cover}}
      {No \PTAS under \ETH\\
      \lemref{e:t:h:g:polylog:d}%
      }%
      \hline%
      \ProblemE{Objects with  density $\cDensity$ in $\Re^d$}%
      {%
      \PTAS:
      \thmref{g:hitting:set:cover}
      \\
      RT:
      $n^{O(\cDensity^{(d+1)/d}/\epsilon^d)}$.%
      }{%
      No $(1+\eps)$-approx

      with RT
      $n^{\poly(\log \cDensity , 1/\eps)}$

      \si{assuming} \ETH: 
      \lemrefshort{e:t:h:g:polylog:d}%
      }
      \hline
    \end{tabular}\hfill
  \end{center}

  \vspace{-0.3cm}
  \caption{Known results about the complexity of geometric
    set-cover. The input consists of a set of points and a set of
    objects, and the task is to find the smallest subset of objects
    that covers the points.  To see that the hardness proof of Feder
    and Greene \cite{fg-oafac-88} indeed implies the above, one just
    needs to verify that the input instance their proof generates has
    bounded depth.  A \QPTAS is an algorithm with running time
    $n^{O(\poly(\log n,1/\eps))}$.  }
  \figlab{set:cover:summary}
\end{figure}

\subsection{Background}

\subsubsection{Optimization problems}

\paragraph{Independent set.}

Given an undirected graph $\defGraph$, an \emph{independent set} is a
set of vertices $\SetA \subseteq \Vertices$ such that no two vertices
in $\SetA$ are connected by an edge. It is \NPComplete to decide if a
graph contains an independent set of size $k$ \cite{k-racp-72}, and
one cannot approximate the size of the maximum independent set to
within a factor of $n^{1-\eps}$, for any fixed $\eps > 0$, unless
$\POLYT = \NP$ \cite{h-chaw-99}.

\paragraph{Dominating set.}

Given an undirected graph $\defGraph$, a \emph{dominating set} is a
set of vertices $\DomSet \subseteq \Vertices$ such that every vertex
in $\graph$ is either in $\DomSet$ or adjacent to a vertex in
$\DomSet$. It is \NPComplete to decide if a graph contains a
dominating set of size $k$ (by a simple reduction from set cover,
which is \NPComplete \cite{k-racp-72}), and one cannot obtain a
$c \log n$ approximation (for some constant $c$) unless $\POLYT = \NP$
\cite{rs-sbepl-97}.

\subsubsection{Graph classes}
\paragraph{Density.}

Informally, a set of objects in $\Re^d$ is \emph{low-density} if no
object can intersect too many objects that are larger than it. This
notion was introduced by van \si{der} Stappen
\etal~\cite{sobv-mpelo-98}, although weaker notions involving a single
resolution were studied earlier (e.g. in the work by Schwartz and
Sharir \cite{ss-empae-85}). A closely related geometric property to
density is \emph{fatness}. Informally, an object is fat if it contains
a ball, and is contained inside another ball, that up to constant
scaling are of the same size.  Fat objects have low union complexity
\cite{aps-sugo-08}, and in particular, shallow fat objects have low
density \cite{f-mpafo-92}. Here, a set of shapes is \emph{shallow} if
no point is covered by too many of them.

\paragraph{Intersection graphs.}

A set $\Family$ of objects in $\Re^d$ induces an \emph{intersection
  graph} $\IGraph{\Family}$ having $\Family$ as its set of vertices,
and two objects $\obj, \objA \in \Family$ are connected by an edge if
and only if $\obj \cap \objA \neq \emptyset$.  Without any
restrictions, intersection graphs can represent any graph. Motivated
by the notion of density, a graph is a \emph{low-density graph} if it
can be realized as the intersection graph of a low-density collection
of objects in low dimensions.

There is much work on intersection graphs, from interval graphs, to
unit disk graphs, and more. The circle packing theorem
\cite{k-kdka-36,a-ocpls-70,pa-cg-95} implies that every planar graph
can be realized as a coin graph, where the vertices are interior
disjoint disks, and there is an edge connecting two vertices if their
corresponding disks are touching. This implies that planar graphs are
low density.  Miller \etal \cite{mttv-sspnng-97} studied the
intersection graphs of balls (or fat convex object) of bounded depth
(i.e., every point is covered by a constant number of balls), and
these intersection graphs are readily low density. Some results
related to our work include: (i) planar graphs are the intersection
graph of segments \cite{cg-epgig-09}, and (ii) string graphs (i.e.,
intersection graph of curves in the plane) have small separators
\cite{m-nossg-14}.

\paragraph{Polynomial expansion.}

The class of low-density graphs is contained in the class of graphs
with polynomial expansion. This class was defined by \Nesetril and
Ossona \si{de} Mendez as part of a greater investigation on the
sparsity of graphs (see the book \cite{no-s-12}). A motivating
observation to their theory is that the sparsity of a graph (the ratio
of edges to vertices) is not necessarily sufficient for tractability.
For example, a clique (which is a graph with maximum density) can be
disguised as a sparse graph by splitting every edge by a middle
vertex. Furthermore, constant degree expanders are also sparse.  For
both graphs, many optimization problems are intractable (intuitively,
because they do not have a small separator).

Graphs with bounded expansion are \emph{nowhere dense graphs}
\cite[Section 5.4]{no-s-12}. Grohe \etal \cite{gks-dfopndg-14}
recently showed that first-order properties are fixed-parameter
tractable for nowhere dense graphs. In this paper, we study graphs of
polynomial expansion \cite[Section 5.5]{no-s-12}, which intuitively
requires a graph to not only be sparse, but have shallow minors that
are sparse as well.

It is known that graphs with polynomial expansion have sublinear
separators \cite{no-gcbe1-08}. The converse, that any graph that has
hereditary sublinear separators has polynomial expansion, was recently
shown by Dvo{\v{r}}{\'{a}}k and Norin \cite{dn-ssspe-15}. As such, our
work looks beyond the geometric setting to consider the general role
of separators in approximation.

\subsubsection{Further related work}

There is a long history of optimization in structured graph
classes. Lipton and Tarjan first obtained a \PTAS for independent set
in planar graphs by using separators
\cite{lt-stpg-79,lt-apst-80}. Baker \cite{b-aancp-94} developed
techniques for covering problems (e.g.\ dominated set) on planar
graphs. Baker's approach was extended by Eppstein \cite{e-dtmcg-00} to
graphs with bounded local treewidth, and by Grohe \cite{g-ltwem-03} to
graphs excluding minors. Separators have also played a key role in
geometric optimization algorithms, including:
\begin{inparaenum}[(i)]
\item \PTAS for independent set and (continuous) piercing set for fat
  objects \cite{c-ptasp-03, mr-irghs-10},
  %
\item \QPTAS for maximum weighted independent sets of polygons
  \cite{aw-asmwi-13,aw-qmwis-14,h-qssp-14}, and
\item \QPTAS for \ProblemC{Set Cover} by pseudodisks
  \cite{mrr-qgscp-14}, among others.
\end{inparaenum}
Lastly, Cabello and Gajser \cite{cg-spfgem-14} develop \PTAS{}'s for
some of the problems we study in the specific setting of minor-free
graphs.

\begin{figure}[t]
  \begin{center}
    \begin{tabular}{|l|l|l|}
      \hline
      Objects & \si{Approx.} \si{Alg.} & Hardness\\
      \hline%
      \ProblemE{Disks/pseudo-disks}%
      {\PTAS \cite{mr-irghs-10}}%
      {\smallskip%
      Exact version \NPHard\\%
      via point-disk duality %
      \cite{fg-oafac-88}%
      }
      \hline
      \ProblemE{Fat triangles of similar size.}
      {$O( \log \log \mathrm{opt} )\bigl.$ %
      \cite{aes-ssena-10}%
      }{\smallskip%
      \APXHard: \lemref{no:PTAS:fat:hit:set}}
      \hline
      \ProblemE{Objects with $O(1)$ density.}
      {\PTAS:     \thmref{g:hitting:set:cover}$\bigl.$}%
      {Exact \si{ver.} \NPHard \cite{fg-oafac-88}}
      \hline
      \ProblemE{Objects $\polylog$ density.}
      {\QPTAS:     \thmref{g:hitting:set:cover}}%
      {\smallskip%
      No \PTAS under \ETH\\
      \lemref{e:t:h:g:polylog:d} /
      \lemrefshort{no:PTAS:fat:hit:set}}
      \hline%
      \ProblemE{Objects with  density $\cDensity$ in $\Re^d$}%
      {%
      \PTAS:
      \thmref{g:hitting:set:cover}%
      \\ run time
      $n^{O(\cDensity^{(d+1)/d}/\epsilon^d)}$
      }{%
      No $(1+\eps)$-approx

      with RT
      $n^{\poly( \log \cDensity, 1/\eps)}$

      assuming \ETH: %
      \lemrefshort{e:t:h:g:polylog:d}%
      }
      \hline
    \end{tabular}
  \end{center}
  \vspace{-0.3cm}
  \caption{Known results about the complexity of \emph{discrete}
    geometric hitting set. The input is a set of points, and a set of
    objects, and the task is to find the smallest subset of points
    such that any object is hit by one of these points.  }
  \figlab{hitting:set:summary}
\end{figure}

\subsection{Our results}

We systematically study the class of graphs that have low density,
first proving that they have polynomial expansion.  We then develop
approximation algorithms for this broader class of graphs, as follows:

\smallskip%
\SaveIndent%
\begin{compactenum}[(A)]%
  \RestoreIndent%
  \setlength{\itemsep}{3pt}
\item \textbf{\PTAS for independent set.} %
  For graphs that have sublinear hereditary separators we show \PTAS
  for independent set, see \secref{approx:v:separators}. This covers
  graphs with low density and polynomial expansion. These results are
  not surprising in light of known results \cite{ch-aamis-12}, but
  provide a starting point and contrast for subsequent results.

\item \textbf{\PTAS for packing problems.}  The above \PTAS also hold
  for packing problems, such as finding maximal induced planar
  subgraph, and similar problems, see \exmref{geometric:packing} and
  \lemref{indep:easy}.

\item \textbf{\PTAS for independent/packing when the output is
    sparse.} %
  More surprisingly, one get a \PTAS even if the subgraph induced on
  the union of two solutions has polynomial expansion. Thus, while the
  input may not be sparse, as long as the output is sparse, one can
  get an efficient approximation algorithms, see \thmref{independent}.

  In particular, this holds if the output is required to have low
  density, because the union of two sets of objects with low density
  is still low density.  The resulting algorithms in the geometric
  setting are faster than those for polynomial expansion graphs, by
  using the underlying geometry of low-density graphs.

\item \textbf{\PTAS for dominating set.}~%
  Low density graphs remain low density even if one merges locally
  objects that are close together, see
  \lemref{density:shallow:minors}. More generally, if one considers a
  collection of $t$-shallow subgraphs (i.e., subgraphs with radius $t$
  in the edge distance) of a polynomial expansion graph, then their
  intersection graph also has polynomial expansion, as long as
  \emph{no} vertex in the original graph participates in more than
  constant number of subgraphs.

  This surprising property implies that local search algorithms
  provides a \PTAS for problems like \ProblemC{Dominating Set} for
  graphs with polynomial expansion, see \secref{dom:set}.

\item \textbf{\PTAS for multi-cover dominating set with reach
    constraints.}~%
  These results can be extended to multi-cover variants of dominating
  set for such graphs, where every vertex can be asked to be dominated
  a certain number of times, and require that the these dominated
  vertices are within a certain distance.  See
  \lemref{ptas:subset:dom:2}.

\item \textbf{Connected dominating set.}~%
  The above algorithms also extend to a \PTAS for connected dominating
  set, see \secref{connected:dominating:set}.

\item \textbf{\PTAS for vertex cover for graphs with polynomial
    expansion.} See \obsref{v:c:poly:expansion}.

\item \textbf{\PTAS for geometric hitting set and set cover.}~%
  The new algorithms for dominating sets readily provides \PTAS's for
  discrete geometric set cover and hitting set for low density inputs,
  see \secref{geometric:applications}.

\item \textbf{Hardness of approximation.}~The low-density algorithms
  are complimented by matching hardness results that suggest the new
  approximation algorithms are nearly optimal with respect to depth
  (under \SETH: the assumption that \ProblemC{SAT} over $n$ variables
  can not be solved in better than $2^n$ time).
\end{compactenum}
\smallskip%
The context of our results, for geometric settings, is summarized in
\figref{set:cover:summary} and \figref{hitting:set:summary}.

\paragraph{Sparsity is not enough.}

It is natural to hope that the above algorithms would work for sparse
graphs (i.e., graphs that have linear number of edges). Unfortunately,
as mentioned earlier, constant degree expanders, which play a
prominent rule in theoretical computer science, are sparse, and the
above algorithms fail for them as they do not have separators.

\paragraph{Low level technical contributions.}

We show that graphs with polynomial expansion retain polynomial
expansion even if one is allowed to locally connect a vertex to other
nearby vertices in a controlled way. To this end, we extend the notion
of $t$-shallow minors to shallow packings (see
\defref{shallow:packing}). We then use a probabilistic argument to
show that the associated intersection graph still has polynomial
expansion, see \lemref{edge-density-shallow-cover}. The proof of this
lemma is elegant and might be of independent interest.

\bigskip
\noindent%
\textbf{Paper organization.} %
We describe low-density graphs in \secref{low:density} and prove some
basic properties.  Bounded expansion graphs are surveyed in
\secref{poly-expansion}.  \secref{approx:algorithms} present the new
approximation algorithms.  \secref{hardness} present the hardness
results. Conclusions are provided in \secref{conclusions}.

\section{Preliminaries} %

\subsection{Low-density graphs} %
\seclab{low:density}

\begin{defn}
  For a graph $\graph = (\Vertices,\Edges)$, and any subset
  $\SetA \subseteq \Vertices$, let $\GInduced{\SetA}$ denote the
  \emphi{induced subgraph} of $\graph$ over $\SetA$. Formally, we have
  \begin{math}
    \GInduced{\SetA} = \pth{ \SetA, \Set{uv}{u,v \in \SetA, \text{ and
        } uv \in \Edges \bigr.}\bigr.}.
  \end{math}
\end{defn}

\begin{defn}
  Consider a set of objects $\ObjSet$.  The \emphi{intersection graph}
  of $\ObjSet$, denoted by $\IGraph{\ObjSet}$, is the graph having
  $\ObjSet$ as its set of vertices, and an edge between two objects
  $\obj,\objA \in \ObjSet$ if they intersect.  Formally,
  \begin{math}
    \IGraph{\ObjSet} = \pth{\bigl.\ObjSet, %
      \Set{\bigl.\obj \objA}{\obj, \objA \in \ObjSet %
        \text{ and } \obj \cap \objA \neq \emptyset}}.
  \end{math}
\end{defn}

One of the two main thrusts of this work is investigating the
following family of graphs.

\begin{defn}%
  \deflab{low:density}%
  A set of objects $\ObjSet$ in $\Re^d$ (not necessarily convex or
  connected) has \emphi{density $\cDensity$} if any object $\obj$ (not
  necessarily in $\ObjSet$) intersects at most $\cDensity$ objects in
  $\ObjSet$ with diameter greater than or equal to the diameter of
  $\obj$. The minimum such quantity is denoted by
  $\densityX{\ObjSet}$.  If $\cDensity$ is a constant, then $\ObjSet$
  has \emphi{low density}.

  A graph that can be realized as the intersection graph of a set of
  objects $\ObjSet$ in $\Re^d$ with density $\cDensity$ is
  \emphi{$\cDensity$-dense}.
\end{defn}%

\begin{defn}%
  \deflab{degenerate}%
  A graph $\graph$ is \emphi{$k$-degenerate} if any subgraph of
  $\graph$ has a vertex of degree at most $k$.
\end{defn}

\begin{observation}
  \obslab{low:density:g:sparse}%
  A $\cDensity$-dense graph $\graph$ is
  $(\cDensity-1)$-degenerate. Indeed, consider the set of objects
  $\ObjSet$ that induces $\graph$. Let $\obj$ be the object with
  smallest diameter $d_0$ in $\ObjSet$. By choice of $\obj$, any other
  object intersecting $\obj$ has diameter at least $d_0$. Since at
  most $\cDensity$ objects, in $\ObjSet$, with diameter at least $d_0$
  intersect $\obj$ (including $\obj$ itself), the degree of $\obj$ in
  $\graph$ is $\cDensity-1$. Clearly, this argument applies to any
  subgraph of $\graph$.
\end{observation}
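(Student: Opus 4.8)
\noindent
The plan is to produce, inside every subgraph of $\graph$, a vertex of degree at most $\cDensity-1$. Since deleting edges only decreases degrees, it suffices to do this for every \emph{induced} subgraph $\GInduced{\SetA}$. The first step is to observe that an induced subgraph of an intersection graph is again an intersection graph: if $\ObjSet$ is a (finite) set of objects in $\Re^d$ with $\densityX{\ObjSet}\le\cDensity$ realizing $\graph$, and $\ObjSet'\subseteq\ObjSet$ is the sub-collection corresponding to $\SetA$, then $\GInduced{\SetA}=\IGraph{\ObjSet'}$. Moreover, density is monotone under passing to sub-collections — any query object that intersects $\cDensity+1$ members of $\ObjSet'$ of diameter at least its own would also intersect that many members of $\ObjSet$ — so $\densityX{\ObjSet'}\le\cDensity$ as well. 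Hence it is enough to show that \emph{any} set of objects of density at most $\cDensity$ contains an object whose degree in its intersection graph is at most $\cDensity-1$, and then apply this fact to each $\ObjSet'$.

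For that, I would take the object $\obj\in\ObjSet'$ of smallest diameter $d_0=\diamX{\obj}$ (if $\ObjSet'$ is empty or a singleton the claim is trivial). Every neighbor of $\obj$ in $\IGraph{\ObjSet'}$ is an object of $\ObjSet'$ that intersects $\obj$ and, by the choice of $\obj$, has diameter at least $d_0=\diamX{\obj}$. Using $\obj$ itself as the query object in \defref{low:density}, at most $\cDensity$ objects of $\ObjSet'$ of diameter $\ge\diamX{\obj}$ intersect $\obj$; since $\obj$ intersects itself, at most $\cDensity-1$ of these are distinct from $\obj$, and these are exactly the candidate neighbors of $\obj$. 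Therefore the degree of $\obj$ in $\IGraph{\ObjSet'}$ is at most $\cDensity-1$. Combining the two steps: given an arbitrary subgraph $\graphA$ of $\graph$, restrict to the induced subgraph on $\VerticesX{\graphA}$, realize it as $\IGraph{\ObjSet'}$ with $\ObjSet'\subseteq\ObjSet$ and $\densityX{\ObjSet'}\le\cDensity$, and extract the minimum-diameter object, which has degree $\le\cDensity-1$ in the induced subgraph and hence in $\graphA$; by \defref{degenerate} this is exactly $(\cDensity-1)$-degeneracy.

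The argument is short and there is no real obstacle; the only points that deserve a line of care are the bookkeeping that the query object counts \emph{itself} among the at most $\cDensity$ intersected objects (this is where the ``$-1$'' originates), and the explicit remark that the density parameter does not increase when one restricts to a sub-collection of the objects. Everything else is immediate from the two definitions.
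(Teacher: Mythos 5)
Your proposal is correct and follows essentially the same route as the paper: restrict to the sub-collection of objects realizing the (induced) subgraph, note density does not increase, pick the minimum-diameter object, and use that the at most $\cDensity$ intersecting objects of no smaller diameter include the object itself, giving degree at most $\cDensity-1$. The extra bookkeeping you add (reducing arbitrary subgraphs to induced ones and stating the monotonicity of density under sub-collections explicitly) is exactly what the paper compresses into its final sentence.
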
%

\subsubsection{Fatness and density}
For $\alpha > 0$, an object $\objA \subseteq \Re^d$ is
\emphi{$\alpha$-fat} if for any ball $\ball$ with a center inside
$\objA$, that does not contain $\objA$ in its interior, we have
$\volX{\ball \cap \objA} \geq \alpha \volX{\ball}$
\cite{bksv-rimga-02}%
\footnote{There are several different, but roughly equivalent,
  definitions of fatness in the literature, see \si{de} Berg
  \cite{b-ibucf-08} and the followup work by Aronov \etal
  \cite{abes-ibulf-14} for some recent results. In particular, our
  definition here is what \si{de} Berg refers to as being
  \emph{locally fat}.}. A set $\Family$ of objects is \emphi{fat} if
all its members are $\alpha$-fat for some constant $\alpha$.  A
collection of objects $\ObjSet$ has \emphi{depth} $k$ if any point in
the underlying space lies in at most $k$ objects of $\ObjSet$. The
depth index of a set of objects is a lower bound on the density of the
set, as a point can be viewed as a ball of radius zero. The following
is well known, and we include a proof for the sake of completeness.

\begin{lemma}
  A set $\Family$ of $\alpha$-fat objects in $\Re^d$ with depth $k$
  has density $\nfrac{k 2^d}{\alpha}$. In particular, if $\alpha, k $
  and $d$ are bounded constants, then $\Family$ has low density.
\end{lemma}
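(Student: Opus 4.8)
The plan is to use the classical volume-packing argument. Fix an arbitrary object $\obj$ (not required to lie in $\Family$) and set $D = \diamX{\obj}$. I must bound the number $N$ of objects $\objA \in \Family$ that meet $\obj$ and have $\diamX{\objA} \ge D$; the claim is $N \le k 2^d / \alpha$. The degenerate case $D = 0$ is immediate: then $\obj$ is empty (so $N = 0$) or a single point $p$, in which case every counted $\objA$ contains $p$, hence $N \le k$ by the depth bound, and $k \le k 2^d/\alpha$ since $\alpha \le 1$ (fatness forces $\alpha \le 1$, as $\ball \cap \objA \subseteq \ball$). So assume $D > 0$.

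For each counted $\objA$, pick a point $q_{\objA} \in \objA \cap \obj$. Since $\diamX{\objA} \ge D$, there are two points of $\objA$ at mutual distance $\ge D$, so by the triangle inequality at least one of them is at distance $\ge D/2$ from $q_{\objA}$; hence the ball $\ballY{q_{\objA}}{D/2}$ does not contain $\objA$ in its interior. (A routine limiting argument with radius $D/2 - \eps$ covers the borderline subcase in which the distances from $q_{\objA}$ to $\objA$ only approach $D/2$.) Because $q_{\objA} \in \objA$, $\alpha$-fatness of $\objA$ now yields $\volX{\objA \cap \ballY{q_{\objA}}{D/2}} \ge \alpha V_{D/2}$, where $V_r$ denotes the volume of a ball of radius $r$ in $\Re^d$.

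Now every one of these $N$ pieces $\objA \cap \ballY{q_{\objA}}{D/2}$ lies inside the single region $R = \obj \minksum \ballY{0}{D/2}$ (the $(D/2)$-neighborhood of $\obj$), since each $q_{\objA}$ lies in $\obj$. By the depth-$k$ hypothesis every point of $\Re^d$ belongs to at most $k$ members of $\Family$, so summing the volume estimate over all counted $\objA$ gives $N \cdot \alpha V_{D/2} \le \sum_{\objA} \volX{\objA \cap \ballY{q_{\objA}}{D/2}} \le k \volX{R}$. Finally $R$ has diameter at most $D + 2\cdot(D/2) = 2D$, so the isodiametric inequality gives $\volX{R} \le V_D = 2^d V_{D/2}$. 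Chaining these bounds and cancelling $V_{D/2}$ yields $N \le k 2^d/\alpha$, and the "low density when $\alpha, k, d$ are constant" statement follows at once.

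I do not expect a real obstacle: the statement is folklore. The two places needing a little attention are (i) checking that the chosen fatness ball genuinely fails to contain $\objA$ in its interior, so that fatness is applicable and not vacuous — handled by $\diamX{\objA} \ge D$ together with the limiting remark above — and (ii) extracting the sharp constant $2^d$ rather than a cruder $3^d$; this is exactly what invoking the isodiametric inequality for $\volX{R}$ buys, in place of naively enclosing $R$ in a ball of radius $3D/2$ centered at a point of $\obj$.
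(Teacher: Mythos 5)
Your proof is correct, and it follows the same volume-packing skeleton as the paper's argument: $\alpha$-fatness gives each intersecting object of diameter at least $D$ a chunk of volume at least $\alpha \volX{\ballY{q}{D/2}}$ anchored at a point $q$ of the query object, the depth-$k$ hypothesis caps the total volume of these chunks, and the enclosing region costs exactly a factor $2^d$, yielding the bound $k 2^d/\alpha$. The difference lies in how that factor is extracted and in which query objects are treated. The paper takes the query to be a ball $\ballY{\pnt}{r}$ and encloses all chunks in the concentric ball $\ballY{\pnt}{2r}$, so the $2^d$ is the trivial ratio of two concentric balls; strictly speaking this verifies the density bound only for ball queries, whereas \defref{low:density} quantifies over arbitrary objects. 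You instead handle an arbitrary query object $\obj$ of diameter $D$ directly, enclose the chunks in the neighborhood $\obj \minksum \ballY{0}{D/2}$, and invoke the isodiametric inequality to bound its volume by that of a ball of radius $D$, which keeps the sharp constant $2^d$ while matching the definition literally; your handling of the degenerate case $D=0$ and of a diameter that is only approached (the $D/2-\eps$ limiting remark) is welcome extra care. The price is an appeal to the isodiametric inequality, a genuine classical theorem, where the paper needs only ball-in-ball containment; both routes are sound and give the same constant.
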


\begin{wrapfigure}{r}{.16\textwidth}
  \vspace{-2em}
  \hfill\includegraphics[width=.16\textwidth]{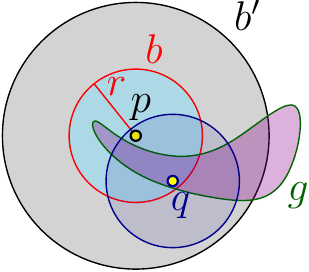}
  \vspace{-2em}
\end{wrapfigure}
\noindent\textit{Proof:}
Let $\ball = \ballY{\pnt}{r}$ be any ball in $\Re^d$.  Let $\FamilyA$
be the set of all objects in $\Family$ that have diameter larger than
$\ball$ and intersect it, and consider any object
$\objA \in \FamilyA$.  A ball $\ballY{\pntA}{r}$ centered at a point
$\pntA \in \objA \cap \ball$ does not contain $\objA$ in its interior
because $\diamX{\ballY{\pntA}{r}} \leq \diamX{\objA}$.  By the
definition of $\alpha$-fatness, we have
\begin{align*}
  \volX{\bigl.\ballY{\pntA}{r} \cap \objA}%
  \geq%
  \alpha \volX{\bigl.\ballY{\pntA}{r}} = \alpha \volX{\ball}.
\end{align*}
Furthermore, $\ballY{\pntA}{r}$ is contained in the ball
$\ballA = \ballY{\pnt}{2r}$, and
\begin{align*}
  \volX{\bigl.\ballA \cap \objA} %
  \geq %
  \volX{\bigl.\ballY{\pntA}{r} \cap \objA} %
  \geq %
  \alpha \volX{\ball} %
  = %
  \frac{\alpha}{2^d} \volX{\ballA}.
\end{align*}
By assumption, each point in $\ballA$ can be covered by at most $k$
objects of $\FamilyA$, hence
\begin{align*}
  k \volX{\ballA}%
  \geq %
  \sum_{\objA \in \FamilyA} \volX{\bigl.\ballA \cap \objA} %
  \geq%
  \cardin{\FamilyA}
  \frac{\alpha}{2^d} \volX{\ballA}.
\end{align*}
Thus, $\cardin{\FamilyA} \leq k(2^d/\alpha)$, bounding the number of
``large'' objects of $\ObjSet$ intersecting $\ball$.%
\hfill\myqedsymbol

\begin{defn}
  A metric space $\Metric$ is a \emph{doubling space} if there is a
  universal constant $\dblC > 0$ such that any ball $\ball$ of radius
  $r$ can be covered by $\dblC$ balls of half the radius. Here $\dblC$
  is the \emphi{doubling constant}, and its logarithm is the
  \emph{doubling dimension}.
\end{defn}

\begin{observation}
  \obslab{d:c:d}%
  In $\Re^d$ the doubling constant is $\dblCd = 2^{O(d)}$, and the
  doubling dimension is $O(d)$ \cite{v-cbseb-05}, making the doubling
  dimension a natural abstraction of the notion of dimension in the
  Euclidean case.
\end{observation}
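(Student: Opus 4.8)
The plan is to establish this by the standard packing argument, which yields the covering as a by-product; I state it for the Euclidean metric on $\Re^d$ (which is what ``in $\Re^d$'' means here). Fix a ball $\ball = \ballY{\pnt}{r}$. First I would choose a \emph{maximal} set $\SetA \subseteq \ball$ of points that are pairwise at distance strictly more than $r/2$. Such a set is finite since $\ball$ is bounded, and it can be built greedily. The role of maximality is that every point $\pntB \in \ball$ lies within distance $r/2$ of some point of $\SetA$ (otherwise $\SetA$ could be enlarged), so the balls $\Set{\ballY{\pntA}{r/2}}{\pntA \in \SetA}$ cover $\ball$. It therefore suffices to bound $\cardin{\SetA}$.

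Second, I would bound $\cardin{\SetA}$ by a volume comparison. Since the points of $\SetA$ are pairwise more than $r/2$ apart, the balls $\ballY{\pntA}{r/4}$, for $\pntA \in \SetA$, are pairwise disjoint, and each is contained in the concentric ball $\ballA = \ballY{\pnt}{5r/4}$. As the volume of a Euclidean ball of radius $t$ in $\Re^d$ is proportional to $t^d$, disjointness gives $\cardin{\SetA}\,(r/4)^d \le (5r/4)^d$, i.e.\ $\cardin{\SetA} \le 5^d$. Combined with the first step, this shows $\ball$ is covered by at most $5^d$ balls of radius $r/2$, so $\dblCd \le 5^d = 2^{d \log_2 5} = 2^{O(d)}$, and the doubling dimension $\log_2 \dblCd$ is $O(d)$.

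There is essentially no obstacle here — the estimate is textbook — and the only spot needing a word of care is the claim that a maximal $r/2$-separated subset of $\ball$ yields a cover of $\ball$, which is exactly where maximality is used. To support the closing remark that the doubling dimension faithfully abstracts the Euclidean dimension, I would also record the trivial matching lower bound: a ball of radius $r/2$ has volume $2^{-d}$ times that of $\ball$, so covering $\ball$ requires at least $2^d$ such balls; hence $\dblCd = 2^{\Theta(d)}$ and the doubling dimension is $\Theta(d)$.
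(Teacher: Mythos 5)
Your argument is correct and complete: the greedy maximal $(r/2)$-separated set covers $\ball$ by maximality, and the volume comparison (disjoint balls of radius $r/4$ inside the concentric ball of radius $5r/4$) gives $\cardin{\SetA} \leq 5^d$, hence $\dblCd \leq 5^d = 2^{O(d)}$; the volume lower bound of $2^d$ is also right and justifies the $\Theta(d)$ strengthening. Note, though, that the paper itself does not prove this observation at all — it simply cites the reference \cite{v-cbseb-05}, which is devoted to sharp estimates for covering a Euclidean ball by balls of half the radius. So your proposal differs from the paper only in that it supplies a self-contained elementary proof where the paper defers to the literature: your packing argument is the standard one, loses only constant factors in the exponent (e.g.\ $5^d$ versus the sharper bounds of the cited work), and that is more than enough for the $2^{O(d)}$ / $O(d)$ statement actually used elsewhere in the paper (e.g.\ in the proofs bounding densities of shallow minors and the separator construction).
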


\begin{lemma}
  \lemlab{density:smaller:objects}%
  Let $\ObjSet$ be a set of objects in $\Re^d$ with density
  $\cDensity$. Then, for any $\alpha \in (0,1)$, a ball
  $\ball = \ballY{\cen}{r}$ can intersect at most
  $\cDensity\dblCd^{\ceil{\lg 1/\alpha}}$ objects of $\ObjSet$ with
  diameter $\geq 2r \alpha$, where $\lg$ is the $\log$ function in
  base two, and $\dblCd$ is the doubling constant of $\Re^d$.
\end{lemma}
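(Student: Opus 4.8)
The idea is to reduce the case of "small" objects (diameter $\geq 2r\alpha$, which may be much smaller than the ball $\ball$) to the definition of density, which only controls objects at least as large as a given probe object. The density bound $\cDensity$ tells us a ball can hit at most $\cDensity$ objects of diameter $\geq$ its own diameter; so a ball of radius $r$ directly controls objects of diameter $\geq 2r$. To reach objects of diameter as small as $2r\alpha$, I will cover $\ball$ by a collection of smaller balls, each of radius roughly $r\alpha$, apply the density bound to each of them, and then count.

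First I would set $k_0 = \ceil{\lg(1/\alpha)}$, so that $2^{-k_0} \leq \alpha$. By iterating the doubling property $k_0$ times, $\ball = \ballY{\cen}{r}$ can be covered by at most $\dblCd^{k_0}$ balls $\ball_1, \dots, \ball_m$ (with $m \leq \dblCd^{k_0}$), each of radius $r' = r\,2^{-k_0} \leq r\alpha$. Now consider any object $\obj \in \ObjSet$ with $\diamX{\obj} \geq 2r\alpha$ that intersects $\ball$. Since the $\ball_i$ cover $\ball$, the object $\obj$ intersects at least one $\ball_i$. That $\ball_i$ has diameter $2r' \leq 2r\alpha \leq \diamX{\obj}$, so by the definition of density, $\ball_i$ — viewed as a probe object — intersects at most $\cDensity$ objects of $\ObjSet$ of diameter $\geq \diamX{\ball_i}$; in particular it intersects at most $\cDensity$ objects of diameter $\geq 2r\alpha$. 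Summing over the $m$ covering balls, the number of objects of $\ObjSet$ of diameter $\geq 2r\alpha$ meeting $\ball$ is at most $m \cdot \cDensity \leq \cDensity \dblCd^{k_0} = \cDensity \dblCd^{\ceil{\lg 1/\alpha}}$, which is the claimed bound.

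There is essentially no hard obstacle here; the only point requiring a little care is making sure the covering balls are genuinely small enough to serve as legitimate probe objects against the target diameter threshold — i.e. that $2r' \leq 2r\alpha$ follows from the choice $k_0 = \ceil{\lg(1/\alpha)}$, since $2^{-\ceil{\lg 1/\alpha}} \leq 2^{-\lg(1/\alpha)} = \alpha$. One should also note that an object of diameter $\geq 2r\alpha$ that meets $\ball$ need not meet a \emph{unique} $\ball_i$, but that only means the count $m\cdot\cDensity$ may overcount, which is fine for an upper bound. This also explains why the statement uses the ceiling: $\lg(1/\alpha)$ need not be an integer, and we need an integer number of halving steps in the doubling argument.
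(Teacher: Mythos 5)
Your proof is correct and follows essentially the same route as the paper: cover $\ball$ by $\dblCd^{\ceil{\lg 1/\alpha}}$ balls of radius at most $\alpha r$ via iterated doubling, and apply the density definition to each covering ball, using that its diameter is at most $2r\alpha$. Your write-up is in fact slightly more careful than the paper's about the comparison of the covering-ball diameter with the threshold $2r\alpha$, but the argument is the same.
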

\begin{proof}
  By the definition of the doubling constant, one can cover $\ball$ by
  $\pth{\dblCd}^i$ balls of radius $r/2^i$.  As such, one can cover
  $\ball$ with $\leq \dblCd^{\ceil{\log_2 1/\alpha}}$ balls of radius
  $\leq \alpha r$. Each of these balls, by definition of density, can
  intersect at most $\cDensity$ objects of $\ObjSet$ of diameter at
  least $2r \alpha $.
\end{proof}

The density definition can be made to be somewhat more flexible, as
follows.

\begin{lemma}
  Let $\beta > 1$ be a parameter, and let $\ObjSet$ be a collection of
  objects in $\Re^d$ such that, for any $r$, any ball with radius $r$
  intersects at most $\cDensity$ objects with diameter
  $\geq 2 r \beta$. Then $\ObjSet$ has density
  $\dblCd^{\ceil{ \lg \beta }} \cDensity$.
\end{lemma}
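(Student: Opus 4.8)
The plan is to run the same doubling-cover argument used in \lemref{density:smaller:objects}, only in the reverse direction: instead of turning a density bound into a bound on small objects meeting a ball, I turn the given ``ball vs.\ large objects'' bound into a genuine density bound for $\ObjSet$. Fix an arbitrary object $\obj$ (not necessarily in $\ObjSet$) and set $D = \diamX{\obj}$; the goal is to bound the number of objects of $\ObjSet$ with diameter $\geq D$ that intersect $\obj$. Since $\obj$ has diameter $D$, picking any point $\pnt \in \obj$ gives $\obj \subseteq \ball = \ballY{\pnt}{D}$, so every object meeting $\obj$ also meets $\ball$, and it suffices to bound the number of objects of $\ObjSet$ of diameter $\geq D$ that intersect the ball $\ball$.

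Next I would cover $\ball$ by small balls using the doubling constant $\dblCd$ of $\Re^d$ (\obsref{d:c:d}): a ball of radius $D$ can be covered by $\dblCd^{\,k}$ balls of radius $D/2^{k}$. Choosing $k = \ceil{\lg \beta}$ makes each covering ball have radius $r \le D/(2\beta)$, hence $2r\beta \le D$. Now invoke the hypothesis of the lemma on each covering ball separately: a ball of radius $r$ meets at most $\cDensity$ objects of $\ObjSet$ of diameter $\geq 2r\beta$, and since $2r\beta \le D$ this set includes every object of diameter $\geq D$. A union bound over the $\dblCd^{\ceil{\lg\beta}}$ covering balls then shows that at most $\dblCd^{\ceil{\lg\beta}}\cDensity$ objects of $\ObjSet$ with diameter $\geq D$ meet $\ball$, hence at most that many meet $\obj$. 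As $\obj$ was arbitrary, $\densityX{\ObjSet}\le \dblCd^{\ceil{\lg\beta}}\cDensity$, which is the claim.

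I expect the only real work here to be bookkeeping the constant in the exponent rather than anything conceptual. A diameter-$D$ set lies in a ball of radius $D$, not $D/2$, so a completely literal execution of the halving argument above uses $\dblCd^{\ceil{\lg(2\beta)}} = \dblCd^{\,1+\ceil{\lg\beta}}$ covering balls; to land exactly on the stated $\dblCd^{\ceil{\lg\beta}}\cDensity$ one encloses $\obj$ in a tighter ball (e.g.\ its minimum enclosing ball, of radius at most $D/\sqrt{2}$ by Jung's theorem) and handles the boundary case where $\lg\beta$ is an integer carefully, or else one simply absorbs the harmless extra doubling factor. Everything else is the routine covering/packing estimate already exploited in \lemref{density:smaller:objects}.
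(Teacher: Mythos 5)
Your proposal is correct and uses essentially the same covering argument as the paper: cover the test ball by $\dblCd^{\ceil{\lg \beta}}$ balls small enough that the hypothesis applies to each, then take a union bound. The only divergence is that the paper checks the density condition only for a ball of radius $r$ (covered by balls of radius $r/\beta$, each meeting at most $\cDensity$ objects of diameter $\geq 2r$), which lands exactly on the stated constant, whereas by testing an arbitrary object $\obj$ you pick up the extra enclosing-ball doubling factor you already flag -- a slack the paper silently sidesteps by treating balls as the test objects (consistent with how density is invoked elsewhere, e.g.\ in \lemref{density:smaller:objects}).
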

\begin{proof}
  Let $\ball$ be a ball with radius $r$. We can cover $\ball$ with
  $\dblCd^{\ceil{ \lg \beta }}$ balls with radius $r/\beta$. Each
  $(r/\beta)$-radius ball can intersect at most $\cDensity$ objects
  with diameter larger than $2(r/\beta) \beta = 2r$, so $\ball$
  intersects at most $\dblCd^{\ceil{ \lg \beta }} \cDensity$ objects
  with diameter at least $2r = \diamX{\ball}$.
\end{proof}

\subsubsection{Minors of objects}

\begin{defn}%
  \deflab{t:shallow}%
  A graph $\graph$ is \emphi{$t$-shallow} if there is a vertex
  $h \in \VerticesX{\graph}$, such that for any vertex
  $u \in \VerticesX{\graph}$ there is a path $\pi$ that connects $h$
  to $u$, and $\pi$ has at most $t$ edges. The vertex $h$ is a
  \emphi{center} of $\graph$, denoted by $h =\centerX{\graph}$.  The
  minimum integer $t$ such that $\graph$ is $t$-shallow is the
  \emphi{radius} of $\graph$.
\end{defn}

Let $\ObjSet$ and $\ObjSetA$ be two sets of objects in $\Re^d$. The
set $\ObjSetA$ is a \emphi{minor} of $\ObjSet$ if it can be obtained
by deleting objects and replacing pairs of overlapping objects $\obj$
and $\objA$ (i.e., $\obj \cap \objA \neq \emptyset$) with their union
$\obj \cup \objA$.  Consider a sequence of unions and deletions
operations transforming $\ObjSet$ into $\ObjSetA$.  Every object
$\objA \in \ObjSetA$ corresponds to a set of objects of
$\clusterX{\objA} \subseteq \ObjSet$, such that
$\cup_{\objB \in \clusterX{\objA}} \objB = \objA$.  The set
$\clusterX{\objA}$ is a \emphi{cluster} of objects of $\ObjSet$.

Surprisingly, even for a set $\Family$ of fat and convex shapes in the
plane with constant density, their intersection graph
$\IGraph{\Family}$ can have arbitrarily large cliques as minors (see
\figref{clique:minor}). Note that the clusters in
\figref{clique:minor} induce intersection graphs with large graph
radius.

\begin{defn}
  For sets of objects $\ObjSet$ and $\ObjSetA$, if %
  \smallskip%
  \begin{compactenum}[\quad(i)]
  \item $\ObjSetA$ is a minor of $\ObjSet$, and
  \item the intersection graph of each cluster of $\ObjSet$ (that
    corresponds to an object in $\ObjSetA$) is $t$-shallow,
  \end{compactenum}%
  \smallskip%
  then $\ObjSetA$ is a \emphi{$t$-shallow minor} of $\ObjSet$.
\end{defn}

The following lemma shows that there is a simple relationship between
the depth of a shallow minor of objects and its density.

\begin{figure}[t]
  \centerline{%
    \begin{tabular}{c%
      c%
      c%
      c%
      c}
      \IncludeGraphics[page=1,
      width=.17\textwidth]{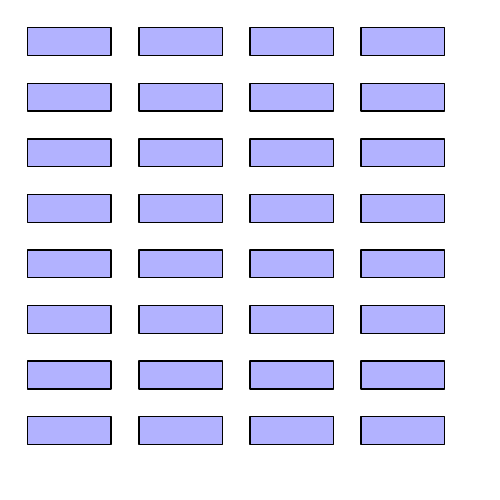}%
      &
        \IncludeGraphics[page=2,
        width=.17\textwidth]{figs/clique_minor}%
      &
        \IncludeGraphics[page=3,
        width=.17\textwidth]{figs/clique_minor}%
      &
        \IncludeGraphics[page=4,
        width=.17\textwidth]{figs/clique_minor}%
      &
        \IncludeGraphics[page=5,
        width=.17\textwidth]{figs/clique_minor}%
      \\
      (A)%
      &%
        (B)%
      &%
        (C)%
      &%
        (D)%
      &%
        (E)
    \end{tabular}%
  }

  \caption{%
    \small %
    (A) and (B) are two low-density collections of $n^2$ disjoint
    horizontal slabs, whose intersection graph (C) contains $n$ rows
    as minors. (D) is the intersection graph of a low-density
    collection of vertical slabs that contain $n$ columns as
    minors. In (E), the intersection graph of all the slabs contain
    the $n$ rows and $n$ columns as minors that form a $K_{n,n}$
    bipartite graph, which in turn contains the clique $K_{n}$ as a
    minor.  }
  \figlab{clique:minor}%
  \vspace{-1em}
\end{figure}%

\begin{lemma}%
  \lemlab{density:shallow:minors}%
  Let $\ObjSet$ be a collection of objects with density $\cDensity$ in
  $\Re^d$, and let $\ObjSetA$ be a $t$-shallow minor of
  $\ObjSet$. Then $\ObjSetA$ has density at most
  $ t^{O(d)} \cDensity$.
\end{lemma}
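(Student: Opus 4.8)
The plan is to bound, for an arbitrary query object, the number of at-least-as-large objects of $\ObjSetA$ that it meets, by first reducing to a ball query and then invoking \lemref{density:smaller:objects}. We may assume $t \geq 1$ (for $t = 0$ each cluster is a single object, so $\densityX{\ObjSetA} \leq \cDensity$ and there is nothing to prove). Given any object $\obj$ with $\diamX{\obj} = \ell$, pick $p \in \obj$; then $\obj \subseteq \ballY{p}{\ell}$, so it suffices to show that any ball $B = \ballY{p}{\ell}$ intersects at most $t^{O(d)}\cDensity$ objects $\objA \in \ObjSetA$ with $\diamX{\objA} \geq \ell$. Let $\objA_1, \dots, \objA_N$ be these relevant objects and $C_j = \clusterX{\objA_j} \subseteq \ObjSet$ their clusters; since $\ObjSetA$ is a minor of $\ObjSet$, the $C_j$ are pairwise disjoint subsets of $\ObjSet$. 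The strategy is to exhibit inside each $C_j$ a single object of $\ObjSet$ of diameter $\Omega(\ell/t)$ meeting a ball of radius $O(\ell)$ concentric with $B$; the disjointness of the clusters then converts the count $N$ into a count of \emph{distinct} objects of $\ObjSet$, to which density applies.

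Fix $j$. Choose $z_j \in \objA_j \cap B$ lying in an object $b_j \in C_j$, and (using $\diamX{\objA_j} \geq \ell$) a point $w_j \in \objA_j$ with $\normX{z_j - w_j} \geq \ell/2$ lying in an object $o_j'' \in C_j$. Because $\ObjSetA$ is a $t$-shallow minor, $\IGraph{C_j}$ is $t$-shallow; concatenating a path from $b_j$ to $\centerX{\IGraph{C_j}}$ with one from $\centerX{\IGraph{C_j}}$ to $o_j''$ gives a walk $b_j = a_0, a_1, \dots, a_m = o_j''$ in $\IGraph{C_j}$ with $m \leq 2t$ and consecutive objects intersecting. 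Set $y_0 = z_j$, $y_{m+1} = w_j$, and $y_k \in a_{k-1} \cap a_k$ for $1 \leq k \leq m$; then $y_k, y_{k+1} \in a_k$, so $\normX{y_k - y_{k+1}} \leq \diamX{a_k}$, and summing gives $\sum_{k=0}^{m}\diamX{a_k} \geq \normX{z_j - w_j} \geq \ell/2$. With at most $2t+1$ terms, some $a_k$ satisfies $\diamX{a_k} \geq \ell/\pth{2(2t+1)}$; let $a_{i_0}$ be the first such along the walk. All earlier objects have diameter $< \ell/\pth{2(2t+1)}$, so $\normX{z_j - y_{i_0}} \leq \sum_{k<i_0}\diamX{a_k} < i_0\cdot \ell/\pth{2(2t+1)} \leq \ell/2$, hence $\normX{p - y_{i_0}} < 3\ell/2$, so $a_{i_0}$ (which contains $y_{i_0}$) meets $\ballY{p}{3\ell/2}$. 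Thus each $C_j$ contains an object $\hat o_j$ of $\ObjSet$ with $\diamX{\hat o_j} \geq \ell/\pth{2(2t+1)}$ meeting $\ballY{p}{3\ell/2}$.

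Since the clusters are disjoint, $\hat o_1, \dots, \hat o_N$ are distinct objects of $\ObjSet$. Applying \lemref{density:smaller:objects} to the ball $\ballY{p}{3\ell/2}$ with $\alpha = 1/\pth{6(2t+1)} \in (0,1)$ — so that $2\cdot(3\ell/2)\,\alpha = \ell/\pth{2(2t+1)}$ — yields $N \leq \cDensity\,\dblCd^{\ceil{\lg(6(2t+1))}}$, which is $t^{O(d)}\cDensity$ since $\dblCd = 2^{O(d)}$ by \obsref{d:c:d}. I expect the only real obstacle to be the middle step: the observation that the prefix of "small" objects along the walk cannot drag the witness object far from $B$, which is exactly what makes the radius of the ball containing $\hat o_j$ independent of the (possibly enormous) diameter of $\objA_j$. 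The reduction to balls, the disjointness of the clusters, and the final appeal to density are routine.
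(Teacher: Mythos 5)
Your proof is correct and follows essentially the same route as the paper's: within each (pairwise disjoint) cluster you exhibit a representative object of $\ObjSet$ of diameter $\Omega(\ell/t)$ meeting a ball of radius $O(\ell)$, and then conclude via \lemref{density:smaller:objects} and the doubling bound $\dblCd = 2^{O(d)}$. The only (cosmetic) difference is how the representative is found -- the paper takes the graph-metric-nearest object of diameter at least $r/t$ and rules out its nonexistence by a diameter contradiction, whereas you locate the first large object by pigeonhole along a walk through the cluster's center -- so no further changes are needed.
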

\begin{proof}
  Every object $\objA \in \ObjSetA$ has an associated cluster
  $\clusterX{\objA} \subseteq \ObjSet$ such that
  $\objA = \bigcup_{\obj \in \clusterX{\objA}} \obj$, and these
  clusters are disjoint.  Let
  $\Partition = \Set{\clusterX{\objA}}{\objA \in \ObjSetA}$ be the
  induced partition of $\ObjSet$ into clusters (which may be a
  partition of a subset of $\ObjSet$). Consider any ball
  $\ball = \ballY{\cen}{r}$, and suppose that $\objA \in \ObjSetA$
  intersects $\ball$ and has diameter at least $2r$. Let
  $\clusterX{\objA} \in \Partition$ be its cluster, and let
  $\graphA = \IGraph{\clusterX{\objA}}$ be its associated intersection
  graph.  By assumption, $\graphA$ has (graph) radius at most $t$, and
  diameter at most $2t$.

  Let $\objB$ be any object in $\clusterX{\objA}$ that intersect
  $\ball$, let $\distCharX{\graphA}$ denote the shortest path metric
  of $\graphA$ (under the number of edges), and let $\objB'$ be the
  object in $\clusterX{\objA}$ closest to $\objB$ (under
  $\distCharX{\graphA}$), such that $\diamX{\objB'} \geq r/t$. If
  there is no such object then the diameter of
  $\diamX{\objA} < 2t(r/t) \leq 2r$, which is a contradiction.

  Consider the shortest path $\pi \equiv \objB_1,\ldots, \objB_m$
  between $\objB=\objB_1$ and $\objB'=\objB_m$ in $\graphA$, where
  $m \leq 2t$.  By the choice of $h$, we have $\diamX{\objB_i} < r/t$,
  for $i=1, \ldots m-1$, and the distance between $\ball$ and $\objB'$
  is bounded by
  \begin{math}
    \sum_{i=1}^{m-1} \diamX{\objB_i} \leq (m-1) (r/t) < 2r.
  \end{math}
  The object $\objB'$ is the \emph{representative} of $\objA$, denoted
  by $\repX{\objA} \in \clusterX{\objA}$.

  Now, let
  \begin{math}
    \ObjSetB%
    =%
    \Set{\bigl.\repX{\objA} }{\objA \in \ObjSetA, \diamX{\objA} \geq
      2r, \text{ and } \objA \cap \ball \neq \emptyset} \subseteq
    \ObjSet
  \end{math}
  be the representatives of the large objects in $\ObjSetA$
  intersecting $\ball$.  The representatives in $\ObjSetB$ are all
  distinct, have diameters $\geq r/t$, intersect $\ballY{\cen}{3r}$,
  and belong to $\ObjSet$ - a set with density $\cDensity$.  Setting
  $\alpha = 1/6t$, \lemref{density:smaller:objects} implies that
  $\cardin{\ObjSetB} \leq \cDensity\dblCd^{\ceil{\lg (6t)}}$.  Since
  $\dblCd = 2^{O(d)}$ \cite{v-cbseb-05}, it follows that
  $\cardin{\ObjSetB} = t^{O(d)}$, implying the claim.
\end{proof}

\subsection{Graphs with polynomial expansion}
\seclab{poly-expansion} %

\subsubsection{Basic properties}

\begin{defn}
  \deflab{shallow:minor}%
  Let $\graph$ be an undirected graph. A \emphi{minor} of $\graph$ is
  a graph $\graphA$ that can be obtained by contracting edges,
  deleting edges, and deleting vertices from $\graph$. If $\graphA$ is
  a minor of $\graph$, then each vertex $v$ of $\graphA$ corresponds
  to a \emphi{cluster} -- a connected set $\clusterX{v}$ of vertices
  in $\graph$ (i.e., these are the vertices of a tree in the forest
  formed by the contracted edges).  The graph $\graphA$ is a
  \emphi{$t$-shallow minor} (or a \emphi{minor of depth $t$}) of
  $\graphA$, where $t$ is an integer, if for each vertex
  $v \in \VerticesX{\graphA}$, the induced subgraph
  $\IGraph{\cluster}$ of the corresponding cluster
  $\cluster = \clusterX{v}$ is $t$-shallow (see \defref{t:shallow}).
  Let $\minorsDY{t}{\graphA}$ denote the set of all graphs that are
  minors\footnote{I.e., these graphs can not legally drink alcohol.}
  of $\graphA$ of depth $t$.
\end{defn}

\begin{defn}[{\cite{no-gcbe1-08}}]
  \deflab{r:shallow:density}%
  The \emph{greatest reduced average density of rank $r$}, or just the
  \emphi{$r$-shallow density}, of $\graph$ is the quantity
  \begin{math}
    \displaystyle%
    \gradY{r}{\graph} %
    = %
    \sup_{\graphA \in \minorsDY{r}{\graph}}
    \frac{\cardin{\EdgesX{\graphA}}}{\cardin{\verticesof{\graphA}}}.%
  \end{math} %
\end{defn}

\begin{defn}
  \deflab{expansion}%
  A graph \emphi{class} is a (potentially infinite) set of graphs
  (e.g., the class of planar graphs).  The \emphi{expansion} of a
  graph class $\class$ is the function
  \begin{math}
    f: \naturalnumbers \to \naturalnumbers \cup \setof{\infty}
  \end{math}
  defined
  by
  \begin{math}
    f(r) = \sup_{\graph \in \class} \gradY{r}{\graph}. %
  \end{math}
  The class $\class$ has \emphi{bounded expansion} if $f(r)$ is finite
  for all $r$. Specifically, a class $\class$ with bounded expansion
  has \emphi{polynomial expansion} (resp., \emph{subexponential
    expansion} or \emph{constant expansion}) if $f$ is bounded by a
  polynomial (resp., subexponential function or constant).  The
  polynomial expansion is of \emphi{order $k$} if $f(x) = O(x^k)$.
  Naturally, a graph $\graph$ has \emph{polynomial expansion} of order
  $k$ if it belongs to a class of graphs with {polynomial expansion}
  of order $k$.
\end{defn}%

\begin{observation}
  \obslab{bounded:expansion:degenerate}%
  If a graph $\graph$ has bounded expansion, then $\graph$ has average
  degree at most
  \begin{math}
    \mu = \gradY{1}{\graph} / 2 = O(1),
  \end{math}
  by taking the graph $\graph$ as its own $1$-shallow minor (with
  every vertex is its own cluster). In particular, the vertex $v_0$
  with minimum degree has degree at most $\mu$. Similarly, any
  subgraph of $\graph$ has a vertex $v_1$ with degree at most $\mu$,
  so the graph $\graph$, by virtue of its bounded expansion, is
  $O(1)$-degenerate (see \defref{degenerate}).
\end{observation}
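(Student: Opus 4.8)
The plan is to reduce the whole statement to the single inequality $\cardin{\Edges}/\cardin{\Vertices} \le \gradY{1}{\graph}$ together with the elementary relationship between edge density, average degree, and degeneracy. First I would observe that $\graph$ is a $1$-shallow minor of itself: take the trivial ``minor'' of \defref{shallow:minor} in which every vertex of $\graph$ forms its own singleton cluster and no edge is contracted or deleted; since a one-vertex graph has radius $0 \le 1$, each such cluster induces a $1$-shallow (indeed $0$-shallow) subgraph in the sense of \defref{t:shallow}, so $\graph \in \minorsDY{1}{\graph}$. Consequently $\cardin{\Edges}/\cardin{\Vertices} \le \gradY{1}{\graph}$, and because $\graph$ has bounded expansion this quantity is finite, so the average degree of $\graph$, namely $2\cardin{\Edges}/\cardin{\Vertices}$, is at most $\mu := O(\gradY{1}{\graph}) = O(1)$. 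Since the average degree is at most $\mu$, a pigeonhole argument yields a vertex $v_0$ of degree at most the average, hence at most $\mu$; one should be careful here only to claim this for the minimum-degree vertex, not for every vertex.

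For the degeneracy claim I would run the same argument uniformly over all subgraphs. Any subgraph $\graphA'$ of $\graph$ is obtained from $\graph$ purely by deleting vertices and edges, with no contractions, so with each surviving vertex again forming its own singleton cluster, $\graphA'$ is itself a $1$-shallow minor of $\graph$; therefore $\cardin{\EdgesX{\graphA'}}/\cardin{\verticesof{\graphA'}} \le \gradY{1}{\graph}$, and hence $\graphA'$ has a vertex of degree at most $\mu$. As this holds for every subgraph, $\graph$ is $\lfloor\mu\rfloor$-degenerate by \defref{degenerate}, i.e.\ $O(1)$-degenerate.

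I do not expect a genuine obstacle — this really is an ``observation''. The only point that needs a moment's care is the first step: verifying that the definition of a $t$-shallow minor admits the degenerate case in which nothing is contracted and nothing deleted, and that a singleton cluster induces a $1$-shallow subgraph; once that is granted the chain ``trivial $1$-shallow minor $\Rightarrow$ $\cardin{\Edges}/\cardin{\Vertices}\le\gradY{1}{\graph}$ $\Rightarrow$ bounded average degree $\Rightarrow$ a minimum-degree vertex of small degree, uniformly over all subgraphs $\Rightarrow$ $O(1)$-degenerate'' follows mechanically. A secondary, purely cosmetic caveat is the factor $2$ between edge density and average degree, which is harmless for an $O(1)$ conclusion.
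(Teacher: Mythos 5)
Your proposal is correct and follows essentially the same route as the paper's inline justification: view $\graph$ (and each of its subgraphs) as its own $1$-shallow minor with singleton clusters, bound the edge density by $\gradY{1}{\graph}$, and convert this into a bound on the average degree, a minimum-degree vertex, and hence $O(1)$-degeneracy. Your remark about the factor $2$ is well taken — the average degree is $2\cardin{\Edges}/\cardin{\Vertices} \leq 2\,\gradY{1}{\graph}$, so the paper's constant $\mu = \gradY{1}{\graph}/2$ is off by a harmless constant factor, exactly as you note.
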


As an example of a class of graphs with constant expansion, observe
that planar graphs have constant expansion because a minor of a planar
graph is planar, and by Euler's formula, every planar graph is
sparse. More surprisingly, \lemref{density:shallow:minors} together
with \obsref{low:density:g:sparse} implies that low-density graphs
have polynomial expansion.

\begin{lemma}
  Let $\cDensity > 0$ be fixed.  The class of $\cDensity$-dense graphs
  in $\Re^d$ has polynomial expansion bounded by
  $f(r) = \cDensity r^{O(d)}$.
\end{lemma}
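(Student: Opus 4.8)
The plan is to combine the two facts already established for object families — the density bound for shallow object-minors (\lemref{density:shallow:minors}) and the degeneracy bound for dense graphs (\obsref{low:density:g:sparse}) — exactly as hinted in the text preceding the statement. Fix a $\cDensity$-dense graph $\graph$, realized as $\graph = \IGraph{\ObjSet}$ for a set $\ObjSet$ of objects in $\Re^d$ with $\densityX{\ObjSet} \le \cDensity$, and fix $r \ge 1$. Since the expansion of the class is $\sup_\graph \gradY{r}{\graph}$ and $\gradY{r}{\graph}$ is a supremum of the ratios $\edgedensityof{\graphA}$ over $\graphA \in \minorsDY{r}{\graph}$, it suffices to bound $\edgedensityof{\graphA}$ for an arbitrary depth-$r$ minor $\graphA$ of $\graph$ by $\cDensity r^{O(d)}$.

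First I would translate the graph minor into an object minor. Each vertex $v$ of $\graphA$ corresponds to a cluster $\clusterX{v} \subseteq \VerticesX{\graph} = \ObjSet$ that is connected in $\graph$ — i.e., the objects of $\clusterX{v}$ form a connected piece under the intersection relation — and, since $\graphA$ is an $r$-shallow minor, the subgraph of $\graph$ induced on $\clusterX{v}$, which is precisely $\IGraph{\clusterX{v}}$, is $r$-shallow. Delete from $\ObjSet$ every object lying in no cluster (a legal minor operation), and replace each surviving cluster by the union of its members; because $\IGraph{\clusterX{v}}$ is connected, a spanning tree of it exhibits this union as the outcome of a legal sequence of "merge two overlapping objects" steps. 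The result $\ObjSetA = \Set{\bigcup_{\obj \in \clusterX{v}} \obj}{v \in \VerticesX{\graphA}}$ is therefore an $r$-shallow object-minor of $\ObjSet$ in the sense of \defref{shallow:minor}, on a vertex set naturally identified with $\VerticesX{\graphA}$.

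Then I would invoke \lemref{density:shallow:minors} to get $\densityX{\ObjSetA} \le r^{O(d)}\cDensity$. Any edge of $\graphA$ between clusters $u$ and $v$ arises from an edge of $\graph$ joining some $\obj \in \clusterX{u}$ to some $\objA \in \clusterX{v}$, and then the corresponding union-objects intersect, so that edge is present in $\IGraph{\ObjSetA}$; hence $\graphA$ is a spanning subgraph of $\IGraph{\ObjSetA}$ and $\cardin{\EdgesX{\graphA}} \le \cardin{\EdgesX{\IGraph{\ObjSetA}}}$ while $\cardin{\verticesof{\graphA}} = \cardin{\verticesof{\IGraph{\ObjSetA}}}$. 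By \obsref{low:density:g:sparse} the graph $\IGraph{\ObjSetA}$ is $(r^{O(d)}\cDensity - 1)$-degenerate, and peeling off a minimum-degree vertex repeatedly shows a $k$-degenerate graph has at most $k$ edges per vertex, so $\cardin{\EdgesX{\IGraph{\ObjSetA}}} \le r^{O(d)}\cDensity \cdot \cardin{\verticesof{\IGraph{\ObjSetA}}}$. Combining the inequalities yields $\edgedensityof{\graphA} \le r^{O(d)}\cDensity$; since $\graphA$ was arbitrary, $\gradY{r}{\graph} \le \cDensity r^{O(d)}$ for $r \ge 1$ (and for $r = 0$ a depth-$0$ minor is just a subgraph, so the bound $\cDensity$ is immediate), giving the claimed expansion $f(r) = \cDensity r^{O(d)}$.

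The degeneracy-to-edge-count estimate and the trivial small-$r$ case are routine. The one step I would write out most carefully — and the only place where anything could go wrong — is the translation in the second paragraph: verifying that a cluster which is connected in the intersection graph and whose intersection graph is $r$-shallow really does produce a legitimate $r$-shallow object-minor, so that \lemref{density:shallow:minors} genuinely applies; everything past that point is bookkeeping.
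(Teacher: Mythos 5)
Your proposal is correct and follows exactly the route the paper intends: the paper states this lemma without a written proof, remarking only that it follows from \lemref{density:shallow:minors} combined with \obsref{low:density:g:sparse}, and your argument -- translating an $r$-shallow graph minor into an $r$-shallow object minor, bounding its density, and then using degeneracy to bound the edge count -- is precisely that implication spelled out. The careful step you flag (that a connected, $r$-shallow cluster yields a legitimate object-minor, and that deleted edges only make $\graphA$ a subgraph of $\IGraph{\ObjSetA}$) is handled correctly.
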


\subsubsection{Separators}

\begin{defn}%
  \deflab{separator}%
  Let $\graph = (\Vertices,\Edges)$ be an undirected graph.  Two sets
  $\SetA, \SetB \subseteq \Vertices$ are \emphi{separate} in $\graph$
  if
  \begin{inparaenum}[(i)]
  \item $\SetA$ and $\SetB$ are disjoint, and
  \item there is no edge between the vertices of $\SetA$ and $\SetB$
    in $\graph$.
  \end{inparaenum}
  A set $\SepSet \subseteq \Vertices$ is a \emphi{separator} for a set
  $\SetC \subseteq \Vertices$, if
  $\cardin{\SepSet} = o\pth{\cardin{\SetC}}$, and
  $\SetC \setminus \SepSet$ can be partitioned into two
  \emph{separate} sets $\SetA$ and $\SetB$, with
  \begin{math}
    \cardin{\SetA} \leq (2/3)\cardin{\SetC}
  \end{math}
  and
  \begin{math}
    \cardin{\SetB} \leq (2/3)\cardin{\SetC}.
  \end{math}
  (Here the choice of $2/3$ is arbitrary, and any constant smaller
  than $1$ is sufficient.)
\end{defn}

\Nesetril and Ossona \si{de} Mendez showed that graphs with
subexponential expansion have subexponential-sized separators. For the
simpler case of polynomial expansion, we have the following\footnote{A
  proof is also provided in \cite{hq-naape-16-arxiv}.}.

\begin{theorem}%
  [\expandafter{\cite[Theorem 8.3]{no-gcbe2-08}}]%
  \thmlab{p:e:separator}%
  Let $\class$ be a class of graphs with polynomial expansion of order
  $k$. For any graph $\graph \in \class$ with $n$ vertices and $m$
  edges, one can compute, in $O\pth{m n^{1-\alpha} \log^{1-\alpha} n}$
  time, a separator of size
  \begin{math}
    O \bigl( n^{1 - \alpha} \log^{1 - \alpha} n \bigr),
  \end{math}
  where $\alpha = 1/\pth{2k+2}$.%
\end{theorem}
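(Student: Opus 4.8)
The plan is to reduce to the bounded-depth version of the Lipton--Tarjan / Plotkin--Rao--Smith region-growing separator machinery, which is essentially how \Nesetril{} and Ossona de Mendez prove the cited statement. First I would unwind the hypothesis. By \defref{expansion} and \defref{r:shallow:density}, polynomial expansion of order $k$ means there is a constant $c = c(\class)$ with $\gradY{r}{\graph} \le c(r+1)^{k}$ for every $\graph \in \class$ and every integer $r \ge 0$: each shallow minor of $\graph$ of depth $r$ has edge-to-vertex ratio $O(r^{k})$, and since $K_h$ has ratio $(h-1)/2$, the graph $\graph$ contains no $K_h$ as a depth-$r$ minor once $h > 2c(r+1)^{k}+1$. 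A single radius parameter $r$ --- eventually of size about $n^{1/(2k+2)}$ --- will drive the construction.

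\textbf{The construction.} Handle connected components separately: a separator of $\graph$ is obtained from one for the unique component of size $> \tfrac{2}{3}n$ if there is such a component, and otherwise the components split into two sides of size $\le \tfrac{2}{3}n$ each, needing no separator. So assume $\graph$ is connected. Now run a region-growing process: repeatedly pick a vertex $v$ in the remaining graph, grow a BFS ball $B_\rho(v)$ with $\rho = 0,1,2,\dots$, and stop once $\rho$ is large enough that $B_\rho(v)$ has at least $r$ vertices and the next sphere is thin, $\cardin{\partial B_{\rho+1}(v)} \le \tfrac{1}{r}\cardin{B_\rho(v)}$ --- or, if the ball is about to become unbalanced, stop at a balanced radius and take a thin sphere there. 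Add the thin sphere to the separator, remove the enclosed ball as a cluster, and continue until the remaining graph has at most $\tfrac12 n$ vertices. Since each retained sphere is thin relative to the disjoint ball it encloses, the spheres accumulated so far total $O(n/r)$; and a thin sphere is \emph{always} reached within $O(r\log n)$ steps, for otherwise the ball would grow by a factor $1+1/r$ at every step and exceed $n$ vertices --- so each cluster has graph radius $O(r\log n)$, exactly the depth scale at which $\gradY{O(r\log n)}{\graph} = O((r\log n)^{k})$ bites. Contracting the clusters produces a shallow minor of $\graph$ of that depth; its sparsity both bounds the number of clusters (so the process does $O(m)$ work on $O(n/r)$ clusters, $O(mn/r)$ total) and lets one assemble from the clusters and their boundary spheres a \emph{balanced} separator whose boundary cost is $\Otilde\!\pth{\gradY{O(r\log n)}{\graph}\cdot\sqrt n} = \Otilde\!\pth{r^{k}\sqrt n}$.

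\textbf{Balancing the parameter.} The separator size is thus $\Otilde\!\pth{n/r + r^{k}\sqrt n}$. Equating the two terms gives $r^{k+1} \approx \sqrt n$, i.e.\ $r \approx n^{1/(2k+2)}$, and then both terms equal $\Otilde\!\pth{n^{1-\alpha}}$ with $\alpha = 1/(2k+2)$; folding the polylogarithmic factors into the $\log^{1-\alpha}n$ of the statement gives the claimed separator size, and the running time becomes $O\!\pth{m\,n^{1-\alpha}\log^{1-\alpha}n}$.

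\textbf{The main obstacle.} The delicate point is the boundary-cost bound for a region that sits in few BFS layers yet holds most of the vertices: there, plain layering is useless (a single layer need not be balanced, and recursing on the big piece need not shrink it), so one must use that $\graph$ is sparse at \emph{every} shallow depth up to $r$ --- not merely $O(1)$-degenerate. This is what the region-growing amortization exploits: the absence of a $K_h$ minor at depth $O(r\log n)$, with $h = O((r\log n)^{k})$, forces shrinking moves to be available and the accumulated spheres not to overspend; establishing this quantitatively, and checking that the loss in the clique size is only \emph{linear} (so that the exponent comes out to $1/(2k+2)$ and not something weaker), is the technical heart. The remaining pieces --- the reduction to connected graphs, regrouping clusters into two balanced sides, and the running-time accounting --- are routine.
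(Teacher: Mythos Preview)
The paper does not contain a proof of this theorem: it is quoted verbatim from \Nesetril{} and Ossona de Mendez \cite[Theorem~8.3]{no-gcbe2-08}, and the footnote just before the statement points to the companion note \cite{hq-naape-16-arxiv} for a written-out argument. There is therefore no in-paper proof to compare your proposal against.

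That said, your overall plan --- Plotkin--Rao--Smith region growing, followed by balancing the depth parameter against the excluded-clique size supplied by polynomial expansion --- is exactly the route taken in the cited sources. The gap in your sketch is the quantitative claim for the second term of the separator. You write that the sparsity of the contracted shallow minor ``lets one assemble \dots\ a balanced separator whose boundary cost is $\Otilde\!\pth{\gradY{O(r\log n)}{\graph}\cdot\sqrt n} = \Otilde\!\pth{r^{k}\sqrt n}$,'' but sparsity of a minor does not by itself yield a $\sqrt n$-type separator (constant-degree expanders are sparse and have no sublinear separator), and you give no mechanism producing the $\sqrt n$. The actual Plotkin--Rao--Smith lemma is: for any $\ell$, either $\graph$ contains $K_h$ as a depth-$O(\ell\log n)$ minor, or $\graph$ has a separator of size $O(n/\ell + h^2\ell\log n)$. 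Plugging in the largest $h$ excluded at that depth, $h=\Theta\!\pth{(\ell\log n)^k}$, gives $O\!\pth{n/\ell+\ell^{2k+1}\log^{2k+1}n}$, and balancing at $\ell\approx\pth{n/\log^{2k+1}n}^{1/(2k+2)}$ recovers $O\!\pth{n^{1-\alpha}\log^{1-\alpha}n}$ with $\alpha=1/(2k+2)$. Your expression $r^k\sqrt n$ happens to balance to the same exponent in $n$, but it is not the correct intermediate bound, and your ``main obstacle'' paragraph acknowledges but does not actually supply the step that converts shallow $K_h$-exclusion into the $h^2\ell\log n$ term --- that conversion is the real technical heart, and it is missing from the proposal.
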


\thmref{p:e:separator} yields a sublinear separator for low-density
graphs of size
\begin{math}
  O\Bigl( %
  \pth{ %
    \rho^2 n \log n %
  }^{1 - \frac{1}{O\pth{\log \dblC} } } %
  \Bigr).
\end{math}
Geometric arguments give a somewhat stronger separator. For the sake
of completeness, we provide next a proof of the following result, but
we emphasize that it is essentially already known
\cite{mttv-sspnng-97,sw-gsta-98,c-ptasp-03}. This proof is arguably
simpler and more elegant than previous proofs.

\begin{figure}
  \noindent%
  \begin{minipage}[b]{0.29\linewidth}
    (A) \!\!\!\!\hspace{-0.3cm}%
    \IncludeGraphics[page=1]{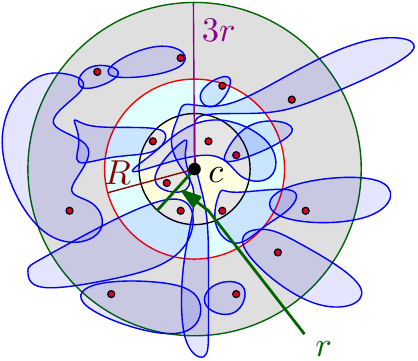}%
    \hspace*{-3.11cm}%
  \end{minipage}%
  \quad%
  \begin{minipage}[b]{0.33\linewidth}
    (B)\!\!\!\!%
    \IncludeGraphics[page=2]{figs/separator}
    \hspace*{-3.11cm}%
  \end{minipage}%
  \qquad\begin{minipage}[b]{0.3\linewidth}
    \captionof{figure}{%
      Illustration of the proof of
      \lemref{separator:low:density}.\\[0.2cm]
      (A) %
      The ball $\ballY{\cen}{r}$, and the separating sphere
      $\sphereX{\cen}{R}$.  %
      \\[0.2cm]%
      (B) %
      All the objects intersecting $\sphereX{\cen}{R}$ are in the
      separating set.}
  \end{minipage}
\end{figure}
\begin{lemma}
  \lemlab{separator:low:density}%
  Let $\ObjSet$ be a set of $n$ objects in $\Re^d$ with density
  $\cDensity > 0$ (see \defref{low:density}), and let $k \leq n$ be
  some prespecified number.  Then, one can compute, in expected $O(n)$
  time, a sphere $\sphereC$ that intersects
  $O\pth{\cDensity + \cDensity^{1/d} k^{1-1/d} }$ objects of
  $\ObjSet$. Furthermore, the number of objects of $\ObjSet$ strictly
  inside $\sphereC$ is at least $k - o(k)$, and at most $O(k)$.  For
  $k = O(n)$ this results in a balanced (global) separator. %
  (Note that the $O$ notation hides constants that depend on $d$.)
\end{lemma}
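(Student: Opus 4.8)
The plan is to fix a single center point $\cen$ and to choose the radius $R$ of the separating sphere $\sphereC = \sphereX{\cen}{R}$ from a short range, so that (i) roughly $k$ objects of $\ObjSet$ lie strictly inside $\sphereC$ and at most $O(k)$ do, and (ii) only $O\pth{\cDensity + \cDensity^{1/d}k^{1-1/d}}$ objects straddle $\sphereC$ — any object straddling $\sphereC$ goes into the separator. To pick the center, let $\cen$ be a representative point of an object of $\ObjSet$ chosen uniformly at random. Using linear-time selection on the distances from $\cen$ to the representative points, compute the smallest radius $r$ such that $\ballY{\cen}{r}$ contains at least $k$ representative points. One argues that, with constant probability over the random object, this $r$ is \emph{nice}, meaning $\ballY{\cen}{3r}$ holds only $O(k)$ objects (a random object is unlikely to land in an unusually sparse region); re-sample the center a constant expected number of times until this holds. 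All of this takes expected $O(n)$ time.

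Having fixed $\cen$ and $r$, split $\ObjSet$ into the \emph{large} objects (diameter $\ge r$) and the \emph{small} objects (diameter $< r$). By \lemref{density:smaller:objects} applied at a constant scale, at most $O(\cDensity)$ large objects can meet $\ballY{\cen}{3r}$, hence at most $O(\cDensity)$ of them can meet $\sphereX{\cen}{R}$ for any $R \le 2r$; put all of these into the separator — this is the $O(\cDensity)$ term — and ignore them henceforth.

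For the small objects I would run a random-radius (``random annulus'') argument. Draw $R$ uniformly from $[r,2r]$. A small object $\obj$ meets $\sphereX{\cen}{R}$ only when $R$ lies between the minimum and maximum distance from $\cen$ to $\obj$, an interval of length at most $\diamX{\obj}$; so, writing $N_R$ for the number of small objects meeting $\sphereX{\cen}{R}$, we get $\Ex{N_R} \le \frac1r \sum_{\obj} \diamX{\obj}$, the sum over the small objects whose distance-interval overlaps $[r,2r]$, all of which lie inside $\ballY{\cen}{3r}$. Group these into scales, scale $j \ge 0$ collecting those of diameter in $\pth{2^{-j-1}r,\ 2^{-j}r}$. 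Covering $\ballY{\cen}{3r}$ by balls of radius $\approx 2^{-j}r$ and invoking \lemref{density:smaller:objects} shows the number of scale-$j$ small objects is at most $\cDensity\, 2^{O(dj)}$; it is also at most $O(k)$, since $\ballY{\cen}{3r}$ holds only $O(k)$ objects in all. Hence $\sum_{\obj}\diamX{\obj} \le \sum_{j\ge 0} 2^{-j}r\cdot\min\pth{\cDensity\,2^{O(dj)},\, O(k)}$; the two bounds cross over near the scale with $2^{dj} \approx k/\cDensity$, and each of the two resulting geometric series is dominated by its term at that scale, giving $\sum_{\obj}\diamX{\obj} = O\pth{r\,\cDensity^{1/d}k^{1-1/d}}$ (hidden constants depending on $d$). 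Thus $\Ex{N_R} = O\pth{\cDensity^{1/d}k^{1-1/d}}$, and by Markov's inequality a random $R\in[r,2r]$ has $N_R = O\pth{\cDensity^{1/d}k^{1-1/d}}$ with probability at least $\tfrac12$; together with the $O(\cDensity)$ large objects, such a sphere $\sphereC = \sphereX{\cen}{R}$ meets $O\pth{\cDensity + \cDensity^{1/d}k^{1-1/d}}$ objects. One either repeats the draw a constant expected number of times, or enumerates the $O(k)$ candidate breakpoint radii (the min/max distances of the relevant objects), to actually produce such an $R$ within expected $O(n)$ time.

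Finally, balance. Since $R \ge r$, every object strictly inside $\ballY{\cen}{r}$ is strictly inside $\sphereC$; the number of objects whose representative point is in $\ballY{\cen}{r}$ but which are \emph{not} strictly inside $\sphereC$ is at most the number straddling $\sphereC$, namely $O\pth{\cDensity + \cDensity^{1/d}k^{1-1/d}} = o(k)$, so at least $k - o(k)$ objects lie strictly inside $\sphereC$. Since $R \le 2r$, no object strictly inside $\sphereC$ reaches outside $\ballY{\cen}{2r}$, so there are $O(k)$ of them by niceness of $r$. The steps I expect to need the most care are the two ``stability'' points: showing that a random center admits a radius $r$ with the ``$\ge k$ inside, $O(k)$ within a constant blow-up'' property while staying within expected linear time, and making the $o(k)$ slack between ``contained in $\ballY{\cen}{r}$'' and the target count $k$ honest — both handled by the same scale-by-scale packing estimate used above for the straddling bound, now applied at the fixed radius $r$ rather than averaged over $R$.
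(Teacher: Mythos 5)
The flaw is in your very first step, the choice of the center. You claim that if $\cen$ is the representative of a uniformly random object and $r$ is the smallest radius with at least $k$ representatives in $\ballY{\cen}{r}$, then with constant probability $\ballY{\cen}{3r}$ contains only $O(k)$ objects; everything downstream (the $\min\pth{\cdot,O(k)}$ in your scale-by-scale sum, and the $O(k)$ bound on the objects strictly inside $\sphereC$) leans on this. The claim is false. Take $\ObjSet$ to be the points $3,3^2,\dots,3^n$ on a line (a set of distinct points has density $1$). For the point $3^i$ with $i\geq k$, the $k$ nearest representatives are its left neighbors, so $r \geq 2\cdot 3^{i-1}$, and $\ballY{\cen}{3r}$ then contains all of $3,\dots,3^{i+1}$, i.e.\ $i+1$ points. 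Hence your ``nice'' event fails for all but the $O(k)$ leftmost points; its probability is $O(k/n)=o(1)$ for $k=o(n)$, and the resampling loop runs an expected $\Omega(n/k)$ times, so you get neither the constant-probability guarantee nor expected $O(n)$ time. The paper avoids this entirely: it takes $\ballY{\cen}{r}$ to be the smallest ball \emph{over all centers} containing $k$ representative points -- computable up to a factor $2$ in linear time by net-and-prune \cite{hr-nplta-13} -- and then the $O(k)$ bound on representatives in $\ballY{\cen}{3r}$ is deterministic: cover $\ballY{\cen}{3r}$ by $O_d(1)$ balls of radius (slightly below) $r$, each of which contains fewer than $k$ representatives by minimality. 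With that in hand, the paper's counting is also simpler than yours: objects of diameter at least $tr$ are ``large'' and at most $O\pth{\cDensity+\cDensity/t^{d-1}}$ of them meet the sphere (cover the sphere by $O(1/t^{d-1})$ balls of radius $tr/2$), the expected number of small objects hit by a random $R\in[r,2r]$ is $O(tk)$, and optimizing $t=(\cDensity/k)^{1/d}$ gives the bound.

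A secondary, fixable point: even granting niceness, invoking \lemref{density:smaller:objects} at scale $j$ gives a count of $\cDensity\,\dblCd^{\,j+O(1)}$ with $\dblCd=2^{\Theta(d)}$ possibly larger than $2^d$, and your crossover calculation then yields exponent $1-1/\Theta(d)$ rather than the claimed $1-1/d$; to get the stated bound you should cover $\ballY{\cen}{3r}$ directly by $\pth{O(2^{j})}^{d}$ balls of radius roughly $2^{-j}r$ at scale $j$ (or simply adopt the paper's single-threshold optimization, which makes the dyadic decomposition unnecessary). Your balance argument at the end is fine once the ``$O(k)$ objects in $\ballY{\cen}{3r}$'' property is actually secured; securing it is exactly what the smallest $k$-enclosing ball is for.
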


\begin{proof}
  For every object $\obj \in \ObjSet$, choose an arbitrary
  representative point $\pnt_{\obj} \in \obj$. Let $\PntSet$ be the
  resulting set of points. Let $\ballY{\cen}{r}$ be the smallest ball
  containing $k$ points of $\PntSet$. As in \cite{h-speps-13},
  randomly pick $R$ uniformly in the range $[r, 2r]$. We claim that
  the sphere $\sphereC = \sphereX{\cen}{R}$ bounding the ball
  $\ball = \ballY{\cen}{R}$ is the desired separator.

  To this end, consider the distance $\ell = t\cdot r$, where
  $t \in (0,1)$ is some real number to be specified shortly.  We count
  the number of objects intersecting $\sphereC$ as follows:

  \SaveIndent%
  \medskip%
  \begin{compactenum}[(A)]%
    \RestoreIndent%
  \item \textbf{Large objects (diameter $\geq \ell$).} %
    The sphere $\sphereC$ can intersect only
    $N_1 = O( \cDensity + \cDensity/t^{d-1})$ objects with diameter
    $\geq \ell$. Indeed, cover the sphere $\sphereC$ with
    $O( 1/t^{d-1})$ balls of radius $\ell/2$, and let $\BallSet$ be
    this set of balls. Next, charge each object of diameter larger
    than $\ell$ intersecting $\sphereC$ to the ball of $\BallSet$ that
    intersects it.  Each ball of $\BallSet$ get charged at most
    $\cDensity$ times.
    \smallskip%
  \item \textbf{Small objects (diameter $< \ell$).} %
    Let $\ObjSet_{\ballA}$ be the set of objects of $\ObjSet$ with
    diameter $\leq \ell$ fully contained in
    $\ballA = \ballY{\cen}{3r}$, which contains all the small objects
    of $\ObjSet$ that intersects $\sphereC$. The ball $\ballA$ can be
    covered by $\dblCd^2$ balls of radius $r$, where $\dblCd$ is the
    doubling constant of $\Re^d$ (see \obsref{d:c:d}), and each ball
    of radius $r$ contains at most $k$ representative points. Thus,
    $\ballA$ contains at most $k' = \dblCd^2 k$ points of $\PntSet$,
    implying that $\cardin{\ObjSet_{\ballA}} \leq k' = O\pth{k}$.

    For an object $\objA \in \ObjSet_{\ballA}$, consider the closest
    point $\pnt$ and the furthest point $\pntA$ in $\objA$ from
    $\cen$.  The object $\objA$ is in the separating set (and
    ``intersects'' $\sphereC$) if $\sphereC$ separates $\pnt$ from
    $\pntA$ ($\objA$ may be disconnected).  As $R$ is chosen uniformly
    at random from $[r, 2r]$, we have that
    \begin{align*}
      \alpha(\objA)%
      =%
      \Prob{\Bigl.\objA \text{ intersects }\sphereC}%
      \leq%
      \frac{\distY{\cen}{\pntA} - \distY{\cen}{\pnt}}{r}%
      \leq%
      \frac{\diamX{\objA}}{r} %
      \leq%
      \frac{\ell}{r}%
      =%
      t.
    \end{align*}
    The expected number of objects of $\ObjSet_{\ballA}$ intersecting
    $\sphereC$ is
    \begin{math}
      N_2 =%
      \sum\nolimits_{\objA \in \ObjSet_{\ballA}} \alpha(\objA)%
      \leq%
      t k'%
      =%
      O( t k).
    \end{math}
  \end{compactenum}
  \smallskip%
  We conclude that the separator size, in expectation, is
  \begin{math}
    N%
    = %
    N_1 + N_2 =%
    O\pth{ \Bigl.  \cDensity + \cDensity/t^{d-1} + k t }.
  \end{math}
  Solving for $ \cDensity/ t^{d-1} = k t$ yields
  $t = (\cDensity/k)^{1/d}$, and the resulting separator is (in
  expectation) of size
  $N = O(\cDensity + \cDensity^{1/d} k^{1-1/(d)} )$.

  As for the running time, it is sufficient to find a two
  approximation to the smallest ball that contains $k$ points of
  $\PntSet$, and this can be done in linear time
  \cite{hr-nplta-13}. Using such an approximation slightly
  deteriorates the constants in the bounds. By Markov's inequality,
  $\sphereC$ intersects at most $2N$ objects of $\ObjSet$ with
  probability $\geq 1/2$. If this is not true, we rerun the
  algorithm. In expectation, the algorithm succeeds in finding a
  sphere that intersects at most $2N$ objects of $\ObjSet$ within a
  constant number of iterations.
\end{proof}

\begin{remark}
  \obslab{improved:bound}%
  \si{Mark de Berg} (personal communication) pointed out the current
  simplified proof of \lemref{separator:low:density}. The authors
  thank him for pointing out the simpler proof.
\end{remark}

It was recently shown that any graph with strongly sublinear
hereditary separators has polynomial expansion \cite{dn-ssspe-15}. In
conjunction with the preceding separator (for low-density objects),
this yields a second proof that the intersection graphs of low-density
objects have polynomial expansion, however with weaker bounds.

A weighted version of the above separator follows by a similar
argument.

\begin{lemma}
  \lemlab{w:s:low:density}%
  Let $\ObjSet$ be a set of $n$ objects in $\Re^d$ with density
  $\cDensity$, and weights $\weightOp: \ObjSet \to \Re$. Let
  $\Weight = \sum_{\obj \in \ObjSet} \weightX{\obj}$ be the total
  weight of all objects in $\ObjSet$.  Then one can compute, in
  expected linear time, a sphere $\sphereC$ that intersects
  $O\pth{\cDensity + \cDensity^{1/d} n^{1-1/d} }$ objects of
  $\ObjSet$. Furthermore, the total weight of objects of $\ObjSet$
  strictly inside/outside $\sphereC$ is at most $c \Weight$, where $c$
  is a constant that depends only on $d$.
\end{lemma}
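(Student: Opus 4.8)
The plan is to run the random‑sphere argument of \lemref{separator:low:density} almost verbatim, with $n$ taking the role of the parameter $k$ and with the center ball chosen by \emph{weight} rather than cardinality. As before, for every object $\obj\in\ObjSet$ fix a representative point $\pnt_{\obj}\in\obj$ and assign it the weight $\weightX{\obj}$; let $\PntSet$ be the resulting weighted point set, of total weight $\Weight$. Let $\ball=\ballY{\cen}{r}$ be the smallest ball whose representative points carry total weight at least $\Weight/\pth{2\dblCd^2}$, where $\dblCd=2^{O(d)}$ is the doubling constant of $\Re^d$, and (as in \cite{h-speps-13}) pick $R$ uniformly at random in $[r,2r]$ and output the sphere $\sphereC=\sphereX{\cen}{R}$. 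The objects meeting $\sphereC$ form the separator; an object lying entirely in the open ball bounded by $\sphereC$ cannot intersect one lying entirely outside it, so $\sphereC$ genuinely separates the ``inside'' objects from the ``outside'' ones.

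\textbf{Balance.} This is the new part of the argument, and calibrating the weight threshold is the one thing that has to be gotten right. Any object strictly inside $\sphereC$ has its representative point in $\ballY{\cen}{2r}$, which (by doubling, twice) is covered by $\dblCd^2$ balls of radius $r/2$; by the minimality of $\ball$ each of these radius‑$(r/2)$ balls carries point‑weight strictly less than $\Weight/\pth{2\dblCd^2}$, so the weight strictly inside $\sphereC$ is strictly less than $\Weight/2$. Any object strictly outside $\sphereC$ has its representative point outside $\ballY{\cen}{r}$, and $\ballY{\cen}{r}$ already captures weight at least $\Weight/\pth{2\dblCd^2}$, so the weight strictly outside is at most $\Weight-\Weight/\pth{2\dblCd^2}$. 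Hence both quantities are at most $c\Weight$ with $c=1-1/\pth{2\dblCd^2}=1-2^{-\Theta(d)}$, a constant in $(\tfrac12,1)$ depending only on $d$ — which is exactly the strength claimed, and explains why one cannot hope for the cleaner constant $2/3$ here. (In the degenerate case $r=0$ a single point of $\PntSet$ carries weight $\ge\Weight/\pth{2\dblCd^2}$; output $\sphereX{\cen}{0}$, whose separator is the at most $\cDensity$ objects containing $\cen$, and the outside bound is unchanged.)

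\textbf{Separator size.} This is identical to the proof of \lemref{separator:low:density}, with $n$ in place of $k$. Set $\ell=tr$ for a parameter $t\in(0,1)$ to be fixed. Cover $\sphereC$ by $O\pth{1/t^{d-1}}$ balls of radius $\ell/2$ and charge each object of diameter $\ge\ell$ meeting $\sphereC$ to a covering ball; the density bound gives at most $N_1=O\pth{\cDensity+\cDensity/t^{d-1}}$ such objects. For an object $\objA$ of diameter $<\ell$, with $\pnt,\pntA$ its nearest and farthest points from $\cen$, the probability it meets $\sphereC$ is at most $\pth{\distY{\cen}{\pntA}-\distY{\cen}{\pnt}}/r\le\diamX{\objA}/r\le t$, so the expected number of small objects meeting $\sphereC$ is $N_2=O(tn)$ (only objects near $\sphereC$ contribute, of which there are at most $n$). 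The expected separator size is therefore $O\pth{\cDensity+\cDensity/t^{d-1}+tn}$, and $t=\pth{\cDensity/n}^{1/d}$ balances the last two terms to $O\pth{\cDensity+\cDensity^{1/d}n^{1-1/d}}$.

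\textbf{Running time.} A $2$‑approximation to the smallest ball of $\PntSet$ carrying weight $\ge\Weight/\pth{2\dblCd^2}$ is computable in linear time by a weighted variant of \cite{hr-nplta-13}, which only affects the hidden constants. By Markov's inequality $\sphereC$ meets at most twice its expected number of objects with probability $\ge 1/2$; otherwise rerun. The expected number of iterations is $O(1)$, so the whole computation runs in expected linear time, which completes the plan.
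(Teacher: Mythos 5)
Your proposal is correct and follows essentially the same route as the paper, which itself only sketches this lemma by reducing to the unweighted argument of \lemref{separator:low:density}: pick weighted representative points, take (an approximation of) the smallest ball capturing a constant fraction of the weight, randomize the radius in $[r,2r]$, and bound the small objects trivially by $n$, giving $O\pth{\cDensity + \cDensity^{1/d} n^{1-1/d}}$. Your explicit calibration of the weight threshold $\Weight/\pth{2\dblCd^2}$ and the doubling-based balance bound simply fill in details the paper leaves implicit, and match its conclusions.
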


\begin{proof}
  The argument follows the one used in
  \lemref{separator:low:density}. We pick a representative point from
  each object, and assign it the weigh of the object. Next, we compute
  the smallest ball containing $\geq c \Weight$ of the total weight of
  the points, and the rest of the proof follows readily, observing
  that in the worst case, $n$ objects might be involved in the
  calculations.
\end{proof}

\subsubsection{Divisions}

Consider a set $\Vertices$. A \emphi{cover} of $\Vertices$ is a set
$\clusters = \Set{C_i \subseteq \Vertices}{i=1,\ldots, k \bigr.}$ such
that $\Vertices = \bigcup_{i=1}^k C_i$. A set $C_i \in \clusters$ is a
\emphi{cluster}. A cover of a graph $\defGraph$ is a cover of its
vertices. Given a cover $\clusters$, the \emph{excess} of a vertex
$v \in \Vertices$ that appears in $j$ clusters is $j-1$. The
\emphi{total excess} of the cover $\clusters$ is the sum of excesses
over all vertices in $\Vertices$.

\begin{defn}
  A cover $\Cover$ of $\graph$ is a \emphi{$\exSize$-division} if
  \begin{inparaenum}[(i)]
  \item for any two clusters $C, C' \in \Cover$, the sets
    $C \setminus C'$ and $C' \setminus C$ are separated in $\graph$
    (i.e., there is no edge between these sets of vertices in
    $\graph$), and
  \item for all clusters $C \in \Cover$, we have
    $\cardin{C} \leq \exSize$.
  \end{inparaenum}

  A vertex $v \in \Vertices$ is an \emphi{interior vertex} of a cover
  $\clusters$ if it appears in exactly one cluster of $\clusters$ (and
  its excess is zero), and a \emphi{boundary vertex} otherwise. By
  property (i), the entire neighborhood of an interior vertex of a
  division lies in the same cluster.
\end{defn}

\begin{remark:unnumbered}
  A division is not only a cover of the vertices, but also a cover of
  the edges. Consider a {$\exSize$-division} $\Cover$ of a graph
  $\graph$, and an edge $uv \in \EdgesX{\graph}$. We claim that there
  must be a cluster $C$ in $\Cover$, such that both $u$ and $v$ are in
  $C$. Indeed, if not, then there are two clusters $C_u$ and $C_v$,
  such that $u \in C_u$ and $v \in C_v$, but then
  $ C_u \setminus C_v \ni u$ and $C_v \setminus C_u \ni v$ are not
  separated in $\graph$, contradicting the definition.
\end{remark:unnumbered}

\begin{remark:unnumbered}
  The property of having $\exSize$-divisions is slightly stronger than
  being weakly hyperfinite.  Specifically, a graph is \emph{weakly
    hyperfinite} if there is a small subset of vertices whose removal
  leaves small connected components \cite[Section
  16.2]{no-s-12}. Clearly, $\exSize$-divisions also provide such a set
  (i.e., the boundary vertices). The connected components induced by
  removing the boundary vertices are not only small, but the
  neighborhoods of these components are small as well.
\end{remark:unnumbered}

As noted by Henzinger \etal \cite{hkrs-fspap-97}, strongly sublinear
separators obtain $\exSize$-divisions with total excess $\eps n$ for
$\exSize = \poly(1/\eps)$.  Such divisions were first used by
Frederickson in planar graphs \cite{f-faspp-87}. A proof of the
following known result is provided in \cite{hq-naape-16-arxiv}.
\begin{lemma}
  \lemlab{divisions:small:excess}%
  Let $\graph$ be a graph with $n$ vertices, such that any induced
  subgraph with $m$ vertices has a separator with
  $O(m^{\alpha} \log^\beta m)$ vertices, for some $\alpha < 1$ and
  $\beta \geq 0$.  Then, for $\eps > 0$, the graph $\graph$ has
  $\exSize$-divisions with total excess $\eps n$, where
  \begin{math}
    \exSize = O \pth{ \pth{ \eps^{-1} \log^\beta \eps^{-1}
      }^{1/(1-\alpha)}}.
  \end{math}
  Furthermore, the $\exSize$-division can be computed in polynomial
  time.
\end{lemma}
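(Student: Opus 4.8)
The plan is to produce the division by a recursive application of the separator hypothesis, and then to bound the total excess by charging it against the geometric rate at which the pieces shrink.

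I would start with $\graph$ and maintain a collection of vertex-disjoint induced subgraphs (``pieces''). As long as some piece $\graphA$ has more than $\exSize$ vertices, I would apply the hypothesized separator theorem to $\graphA$, obtaining a separator $\SepSet_{\graphA}$ of size $O(|\VerticesX{\graphA}|^{\alpha}\log^{\beta}|\VerticesX{\graphA}|)$ whose removal splits $\VerticesX{\graphA}\setminus\SepSet_{\graphA}$ into two separated parts, each of size at most $(2/3)|\VerticesX{\graphA}|$; I would move $\SepSet_{\graphA}$ into a global boundary set $\SepSet$ and replace $\graphA$ by the two resulting induced pieces. When the process halts, the surviving pieces $\locSet_1,\dots,\locSet_k$ are disjoint, each has at most $\exSize$ vertices, and they partition $\Vertices\setminus\SepSet$. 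The candidate clusters are $C_i=\locSet_i\cup(\NbrX{\locSet_i}\cap\SepSet)$ for each $i$, together with a division of the (small) subgraph induced by $\SepSet$ and its neighborhood — produced by recursing on it (or by a trivial division, since $|\SepSet|\ll n$) — so that every edge, in particular every edge inside $\SepSet$, lies in some cluster; property (i) of a division then holds because the only edges leaving a piece $\locSet_i$ go into $\NbrX{\locSet_i}\cap\SepSet\subseteq C_i$.

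The quantitative core is the bound on $|\SepSet|=\sum_{\graphA}O(|\VerticesX{\graphA}|^{\alpha}\log^{\beta}|\VerticesX{\graphA}|)$, the sum running over all pieces ever created. Grouping pieces by their recursion depth $j$, using that such a piece has at most $(2/3)^{j}n$ vertices, that pieces at a fixed depth are vertex-disjoint, and that every non-leaf piece has more than $\exSize$ vertices, the per-depth contribution is $O(n\,\exSize^{\alpha-1}\log^{\beta}\exSize)$ and these contributions form a geometric progression dominated by the bottom level; hence $|\SepSet|=O(n\,\exSize^{\alpha-1}\log^{\beta}\exSize)$ (for planar graphs, $\alpha=1/2,\ \beta=0$, this is the familiar $O(n/\sqrt{\exSize})$). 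Every interior vertex lies in exactly one cluster, so — once each cluster, equivalently each $|\NbrX{\locSet_i}\cap\SepSet|$, has been forced to be small — the total excess is $O(|\SepSet|)=O(n\,\exSize^{\alpha-1}\log^{\beta}\exSize)$. Taking $\exSize=\Theta\bigl((\eps^{-1}\log^{\beta}\eps^{-1})^{1/(1-\alpha)}\bigr)$, so that $\log^{\beta}\exSize=\Theta(\log^{\beta}\eps^{-1})$, this is at most $\eps n$, as required.

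The step I expect to be the main obstacle is ensuring that every cluster has at most $\exSize$ vertices while keeping the above excess bound: a priori $\NbrX{\locSet_i}\cap\SepSet$ can be far larger than $\exSize$, since it may contain separator vertices inherited from many ancestors in the recursion tree. This is resolved by the standard two-phase refinement of Frederickson \cite{f-faspp-87} and Henzinger \etal \cite{hkrs-fspap-97}: after the first phase produces pieces of at most $\exSize$ vertices, any piece whose boundary still exceeds $\exSize$ is separated further (now to shrink its boundary rather than its vertex count), and a potential/amortization argument shows that this second phase inflates the total boundary by only a constant factor (and lets one merge tiny pieces so that $O(n/\exSize)$ pieces remain). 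Finally, the running time is polynomial: each separator computation is polynomial by hypothesis, the pieces shrink geometrically so the recursion has depth $O(\log n)$, and only $O(n)$ pieces are ever created, hence only polynomially many separator computations are performed; the second-phase clean-up is likewise polynomial.
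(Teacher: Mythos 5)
Your high-level plan (recursive use of the hereditary separator, a summation giving total separator mass $O\pth{n\exSize^{\alpha-1}\log^{\beta}\exSize}$, then tuning $\exSize$) is the right one and is the skeleton of the intended proof (deferred by the paper to \cite{hq-naape-16-arxiv}, following Frederickson \cite{f-faspp-87} and Henzinger \etal \cite{hkrs-fspap-97}). The genuine gap is the excess accounting for the specific construction you chose. You delete each separator into a global set $\SepSet$ and only at the end re-attach boundary vertices, setting $C_i=\locSet_i\cup\pth{\NbrX{\locSet_i}\cap\SepSet}$. With this construction the total excess is \emph{not} $O\pth{\cardin{\SepSet}}$: a vertex $z\in\SepSet$ is put into one cluster for every leaf piece adjacent to it, so the total excess equals (up to the clusters needed to cover $\SepSet$ itself) $\sum_i\cardin{\NbrX{\locSet_i}\cap\SepSet}$, the number of piece--boundary adjacencies, and a single separator vertex may be adjacent to many pieces. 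Forcing each $\cardin{\NbrX{\locSet_i}\cap\SepSet}$ to be at most $\exSize$ only bounds this sum by $\exSize$ times the number of pieces, i.e.\ by roughly $n$, not by $\eps n$; this is the non sequitur in ``the total excess is $O(\cardin{\SepSet})$.'' Nothing in the hypothesis supplies the missing aggregate bound: contracting the pieces is only an $\exSize$-shallow minor, whose edge density under polynomial expansion can be $\poly(\exSize)$, giving the useless bound $\poly(\exSize)\cdot\eps n$; and the Frederickson/Henzinger two-phase amortization you invoke is a planar/minor-closed argument controlling per-piece boundary size, not the total multiplicity of boundary vertices in this generality. A secondary soft spot is ``recurse on the subgraph induced by $\SepSet$ and its neighborhood, or use a trivial division'': $\NbrX{\SepSet}$ can be essentially all of $\Vertices$, and a trivial division is not available because clusters must have at most $\exSize$ vertices.

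The standard repair removes the problem at the source: when a piece is split by a separator $Z_H$, keep $Z_H$ in \emph{both} child subproblems instead of deleting it, recurse until every piece has at most $\exSize$ vertices, and take the final pieces themselves as the clusters. Then every unit of excess is a duplication event paid for by one separator-vertex occurrence, so the total excess is at most the total separator mass, which is exactly the quantity your summation bounds by $O\pth{n\exSize^{\alpha-1}\log^{\beta}\exSize}\leq\eps n$ for the stated $\exSize$; and the property used downstream (the whole neighborhood of a vertex lying in a unique cluster stays in that cluster) holds automatically, since such a vertex was never a separator vertex, hence at every split all of its surviving neighbors followed it into the same child. Two minor further notes: your per-depth estimate should be organized by piece size rather than recursion depth, since at a fixed depth the factor $\log^{\beta}m$ is only bounded by $\log^{\beta}n$ --- grouping the split pieces into size classes $[2^{i}\exSize,2^{i+1}\exSize)$ gives the geometric series with the claimed $\log^{\beta}\exSize$ factor; the polynomial running time argument is fine.
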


\begin{remark}[Divisions for weaker separators]
  \remlab{w:div:small:excess}%
  One can still obtain divisions for graph classes with weaker
  separators of size $O\pth{n / \log^{O(1)}n}$. Rather than a
  $\poly(1/\eps)$-division with excess $\eps n$, we get a
  $f(1/\eps)$-division with excess $\epsilon n$ for some function $f$.
  Consequently, the \PTAS throughout this paper extend to a slightly
  broader class of graphs than polynomial expansion, see
  \remref{s:exp:expansion}.
\end{remark}

\begin{corollary}
  \corlab{dev;p:e:l:dense}%
  \begin{inparaenum}[(A)]
  \item \itemlab{p:e:divisions} %
    Let $\graph$ be a graph with polynomial expansion of degree $k$
    and $n$ vertices, and let $\eps > 0$ be fixed. Then $\graph$ has
    $O\pth{(1/\eps)^{2k+2} \log^{2k+1} (1/\eps)}$-divisions with total
    excess $\eps n$.

    \smallskip%
  \item \itemlab{l:d:divisions} %
    Let $\defGraph$ be a $\cDensity$-dense graph with $n$ vertices
    arising out of a given set of objects in $\Re^d$. Then $\graph$
    has $\exSize$-divisions, with $\exSize = O\pth{\cDensity/\eps^d}$
    and total excess at most $\eps n$. This division can be computed
    in $O( n \log( n /\exSize) )$ time.
  \end{inparaenum}
\end{corollary}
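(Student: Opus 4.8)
The plan is to derive both parts by feeding the appropriate \emph{hereditary} separator bound into the known construction of low-excess divisions, namely \lemref{divisions:small:excess}; the real content is only bookkeeping of exponents in part~\itemref{p:e:divisions} and of the dependence on the density $\cDensity$ (and of the running time) in part~\itemref{l:d:divisions}. For part~\itemref{p:e:divisions}, I would first note that having polynomial expansion of order $k$ is a hereditary property: every shallow minor of an induced subgraph of $\graph$ is again a shallow minor of $\graph$ (clusters stay connected and induced cluster-subgraphs are unchanged), so every induced subgraph also has polynomial expansion of order $k$. Hence \thmref{p:e:separator} applies to \emph{every} induced subgraph on $m$ vertices and produces a (balanced) separator of size $O\pth{m^{1-\alpha}\log^{1-\alpha}m}$ with $\alpha = 1/(2k+2)$. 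Writing this in the form $O\pth{m^{\alpha'}\log^{\beta'}m}$ required by \lemref{divisions:small:excess}, we have $\alpha' = \beta' = 1 - 1/(2k+2)$, so $1/(1-\alpha') = 2k+2$, and \lemref{divisions:small:excess} then yields, in polynomial time, an $\exSize$-division with total excess $\eps n$ where $\exSize = O\pth{\pth{\eps^{-1}\log^{\beta'}\eps^{-1}}^{2k+2}} = O\pth{\eps^{-(2k+2)}\log^{2k+1}\eps^{-1}}$. (Keeping the precise exponent $\beta' = 1-\alpha$ on the logarithm, rather than the cruder bound $\beta'=1$, is exactly what turns $\log^{2k+2}$ into $\log^{2k+1}$.)

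For part~\itemref{l:d:divisions}, the abstract separator of \thmref{p:e:separator} is too weak (it would make $\exSize$ exponential in $d$), so instead I would use the geometric separator. Density is hereditary --- a sub-collection of a density-$\cDensity$ family of objects still has density at most $\cDensity$ --- so every induced subgraph on $m$ vertices is the intersection graph of $m$ of the given objects, and \lemref{separator:low:density}, invoked with its parameter set to a suitable constant fraction of $m$, produces a balanced sphere separator meeting $O\pth{\cDensity + \cDensity^{1/d}m^{1-1/d}} = O\pth{\cDensity^{1/d}m^{1-1/d}}$ objects (valid once $m \geq \cDensity$; pieces with fewer than $\cDensity \leq \exSize$ vertices need no splitting). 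Thus $\graph$ has strongly sublinear hereditary separators of size $O\pth{m^{1-1/d}}$ with hidden constant $\Theta\pth{\cDensity^{1/d}}$, i.e.\ $\alpha = 1-1/d$ and $\beta = 0$ in the language of \lemref{divisions:small:excess}. Since $1/(1-\alpha) = d$, inspecting the proof of \lemref{divisions:small:excess} to carry the $\cDensity^{1/d}$ factor through (it scales as $(\cDensity^{1/d})^{1/(1-\alpha)} = \cDensity$) gives an $\exSize$-division with total excess $\eps n$ for $\exSize = O\pth{\pth{\cDensity^{1/d}/\eps}^{d}} = O\pth{\cDensity/\eps^{d}}$.

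It remains to address the running time in part~\itemref{l:d:divisions}. The division underlying \lemref{divisions:small:excess} is built by recursively separating each piece that still has more than $O(\exSize)$ vertices; since each separation shrinks every piece by a constant factor, the recursion has depth $O\pth{\log(n/\exSize)}$. Replacing its generic separator subroutine by the expected-linear-time sphere separator of \lemref{separator:low:density}, and noting that at any level the pieces are disjoint apart from a sublinear total of separator vertices --- so the work at each level is $O(n)$ in expectation --- gives the stated $O\pth{n\log(n/\exSize)}$ bound. The only step that is not pure substitution is checking that the accumulated boundary vertices total only $\eps n$ and do not inflate the per-level cost; but this is precisely what is already built into \lemref{divisions:small:excess}, so our sole additions are swapping in the faster geometric separator and verifying the depth bound. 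I expect the one genuinely delicate point to be getting the exponents and the $\cDensity^{1/d}$ constant to land exactly on $\exSize = O(\cDensity/\eps^d)$ (and $\log^{2k+1}$ in part~\itemref{p:e:divisions}), which is why care is needed in how the separator bound is massaged into the hypothesis of \lemref{divisions:small:excess}.
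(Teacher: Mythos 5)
Your proposal is correct and follows essentially the same route as the paper: part (A) is exactly the paper's argument of plugging the separator of \thmref{p:e:separator} (with $\alpha=\beta=1-\frac{1}{2k+2}$) into \lemref{divisions:small:excess}, and part (B) matches the paper's use of \lemref{separator:low:density} in the recursive construction, where your "carry the $\cDensity^{1/d}$ constant through with $\alpha=1-1/d$, $\beta=0$" is the same computation as the paper's stopping criterion $c\pth{\cDensity+\cDensity^{1/d}m^{1-1/d}}\leq \eps m/c'$ yielding $m=\Omega\pth{\cDensity/\eps^d}$. The running-time accounting (expected-linear-time sphere separator, recursion depth $O(\log(n/\exSize))$) is also the paper's.
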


\begin{proof}
  (A) By \thmref{p:e:separator}, $\graph$ has separator with
  parameters $\alpha = 1-1/(2k+2)$ and $\beta = 1-1/(2k+2)$.  Plugging
  this into \lemref{divisions:small:excess} implies
  $\exSize$-divisions where
  \begin{math}
    \ds%
    \exSize%
    =%
    O \pth{ \pth{ (1/\eps) \log^\beta (1/\eps) }^{1/(1-\alpha)}}%
    =%
    O \pth{ (1/\eps)^{2k+2} \log^{2k+1} (1/\eps) }.%
  \end{math}

  (B) By \lemref{separator:low:density}, any subgraph of $\graph$ with
  $m$ vertices has a separator of size
  $\leq c \pth{ \cDensity + \cDensity^{1/d} m^{1-1/d}} $, for some
  constant $c$. One can break up $\graph$ in the natural recursive
  fashion using separators (see the proof of
  \lemref{divisions:small:excess} in \cite{hq-naape-16-arxiv} for
  details), until each portion has size $m$, and
  \begin{math}
    c \pth{ \cDensity + \cDensity^{1/d} m^{1-1/d}} \leq \eps m /c',
  \end{math}
  where $c'$ is some absolute constant.  As can be easily verified,
  this holds for $m = \Omega\pth{\cDensity/\eps^d}$. Setting
  $\exSize = m$ implies that the resulting $\exSize$-divisions with
  excess $\leq \eps n$.

  As for the running time, computing the separator for a graph with
  $m$ vertices takes expected $O(m)$ time (assuming basic operation
  like deciding if an object intersects a sphere can be done in
  constant time), using the algorithm of
  \lemref{separator:low:density}, and the recursion depth is
  $O( \log( n /\exSize) )$.
\end{proof}

\subsection{Hereditary and mergeable properties}
\seclab{h:m:prop}

Let $\Prop \subseteq 2^{\Vertices}$ be a property defined over subsets
of vertices of a graph $\graph = (\Vertices,\Edges)$ (e.g., $\Prop$ is
the set of all independent sets of vertices in $\graph$). The property
$\Prop$ is \emphi{hereditary} if for any
$\SetA \subseteq \SetB \subseteq \Vertices$, if $\SetB$ satisfies
$\Prop$, then $\SetA$ satisfies $\Prop$.  The property $\Prop$ is
\emphi{mergeable} if for any $\SetA,\SetB \subseteq \Vertices$ that
are separate in $\graph$, if $\SetA$ and $\SetB$ each satisfy $\Prop$,
then $\SetA \cup \SetB$ satisfies $\Prop$. We assume that whether or
not $\SetA \in \Prop$ can be checked in polynomial time.

Given a set $\Family$ and a property $\Prop \subseteq 2^{\Family}$,
the \emphi{packing problem} associated with $\Prop$, asks to find the
largest subset of $\Family$ satisfying $\Prop$.

\begin{example}
  \exmlab{geometric:packing}%
  Some geometric flavors of packing problems that corresponds to
  hereditary and mergeable properties include:
  \begin{compactenum}[\quad(A)]
  \item Given a collection of objects $\ObjSet$, find a maximum
    independent subset of $\ObjSet$.

  \item Given a collection of objects $\ObjSet$, find a maximum subset
    of $\ObjSet$ with density at most $\cDensity$, where $\cDensity$
    is prespecified.

  \item Find a maximum subset of $\ObjSet$ whose intersection graph is
    planar or otherwise excludes a graph minor.

  \item Given a point set $\PntSet$, a constant $k$, and a collection
    of objects $\ObjSet$, find the maximum subset of $\ObjSet$ such
    that each point in $\PntSet$ is contained in at most $k$ objects
    in $\ObjSet$.
  \end{compactenum}
\end{example}


\section{Approximation algorithms}
\seclab{approx:algorithms}%

\subsection{Approximation algorithms using separators}
\seclab{approx:v:separators}

Graphs whose induced subgraphs have sublinear and efficiently
computable separators are already strong enough to yield \PTAS for
mergeable and hereditary properties (see \secref{h:m:prop} for
relevant definitions). Such algorithms are relatively easy to derive,
and we describe them as a contrast to subsequent results, where such
an approach no longer works. As the following testifies, one can
$(1-\eps)$-approximate, in polynomial time, the independent set in a
low-density or polynomial-expansion graphs (as independent set is a
mergeable and hereditary property).

\begin{lemma}
  \lemlab{ptas:h:prop}%
  Let $\defGraph$ be a graph with $n$ vertices, with the following
  properties: %
  \smallskip%
  \begin{compactenum}[\quad(A)]
  \item Any induced subgraph of $\graph$ on $m$ vertices has a
    separator with $O(m^{\alpha} \log^\beta m)$ vertices, for some
    constants $\alpha < 1$ and $\beta \geq 0$, and this separator can
    be computed in polynomial time. (I.e., low density and polynomial
    expansion graphs have such separators.)

  \item There is a hereditary and mergeable property $\Prop$ defined
    over subsets of vertices of $\graph$.

  \item The largest set $\Opt \in \Prop$, is of size at least $n/c$,
    where $c$ is some absolute constant.
  \end{compactenum}%
  \smallskip%
  Then, for any $\eps > 0$, one can compute, in
  $O\pth{ n^{O(1)} + 2^\exSize \exSize^{O(1)} n }$ time, a set
  $\SetX \in \Prop$ such that
  $\cardin{\SetX} \geq (1-\eps)\cardin{\Opt}$, where
  \begin{math}
    \exSize = O \pth{ \pth{ \eps^{-1} \log^\beta \eps^{-1}
      }^{1/(1-\alpha)}}.
  \end{math}
\end{lemma}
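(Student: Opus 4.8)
The plan is to reduce the problem to a family of tiny subproblems via a division of $\graph$, solve each subproblem optimally by brute force, and then reassemble the pieces using mergeability; assumption (C) is exactly what converts an additive loss into a $(1-\eps)$ multiplicative guarantee. We may assume $\eps \le 1$ (otherwise output the empty set, which lies in $\Prop$ since $\Prop \neq \emptyset$ and $\Prop$ is hereditary).

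\textbf{Step 1: decompose.} Put $\eps' = \eps / c$. Assumption (A) is precisely the hypothesis of \lemref{divisions:small:excess}, so in polynomial time we obtain an $\exSize$-division $\Cover$ of $\graph$ with total excess at most $\eps' n$, where $\exSize = O\pth{\pth{\eps'^{-1}\log^\beta \eps'^{-1}}^{1/(1-\alpha)}} = O\pth{\pth{\eps^{-1}\log^\beta \eps^{-1}}^{1/(1-\alpha)}}$ since $c = O(1)$. Let $\BVertices$ be the set of boundary vertices of $\Cover$; every boundary vertex contributes at least $1$ to the total excess, so $\cardin{\BVertices} \le \eps' n$. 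For a cluster $C \in \Cover$ write $I_C = C \setminus \BVertices$ for its interior vertices. The sets $I_C$ partition $\Vertices \setminus \BVertices$, and for distinct $C, C'$ the sets $I_C$ and $I_{C'}$ are separate in $\graph$: by property (i) of a division the entire neighborhood of an interior vertex of $C$ lies in $C$, whereas $I_{C'} \cap C = \emptyset$.

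\textbf{Step 2: solve locally.} For each cluster $C$, enumerate all $\le 2^{\cardin{C}} \le 2^{\exSize}$ subsets of $I_C$, test each for membership in $\Prop$ (each test takes $\exSize^{O(1)}$ time on a set of size $\le \exSize$), and let $\SetX_C \subseteq I_C$ be one of maximum cardinality lying in $\Prop$. Since $\sum_{C \in \Cover}\cardin{C} \le (1+\eps') n$ and every cluster is nonempty, there are $O(n)$ clusters, so this step costs $2^{\exSize}\exSize^{O(1)} n$ in total, and Step 1 adds only $n^{O(1)}$.

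\textbf{Step 3: merge and analyze.} Output $\SetX = \bigcup_{C \in \Cover}\SetX_C$. Since $\SetX_C \subseteq I_C$ and the $I_C$ are pairwise separate, the $\SetX_C$ are pairwise separate, so a short induction over the clusters — a union of some of the $\SetX_C$ remains disjoint from and separate from any remaining one — using mergeability gives $\SetX \in \Prop$. For the approximation guarantee, $\Prop$ is hereditary, so $\Opt \cap I_C \in \Prop$, hence $\cardin{\SetX_C} \ge \cardin{\Opt \cap I_C}$; summing over the disjoint interiors,
\[
  \cardin{\SetX} \;=\; \sum_{C \in \Cover}\cardin{\SetX_C} \;\ge\; \sum_{C \in \Cover}\cardin{\Opt \cap I_C} \;=\; \cardin{\Opt \setminus \BVertices} \;\ge\; \cardin{\Opt} - \eps' n .
\]
By assumption (C), $n \le c\cardin{\Opt}$, so $\cardin{\SetX} \ge (1 - \eps' c)\cardin{\Opt} = (1-\eps)\cardin{\Opt}$, as required, within the claimed running time. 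The one genuinely delicate point is keeping the reassembled set inside $\Prop$: this is why the algorithm discards every boundary vertex and works only with the interiors $I_C$, whose pairwise separateness is exactly what a division guarantees, while hereditariness is used in the opposite direction to lower-bound each $\cardin{\SetX_C}$. The rest — the cardinality accounting and the running-time bound — is bookkeeping, with assumption (C) serving only to turn the $\cardin{\BVertices} \le \eps' n$ additive slack into a $(1-\eps)$ factor.
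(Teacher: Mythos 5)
Your proposal is correct and follows essentially the same route as the paper: compute an $\exSize$-division via \lemref{divisions:small:excess}, discard the boundary vertices, solve each (interior of a) cluster optimally by brute force, and merge using separateness and mergeability, with hereditariness giving $\cardin{\SetX_C}\geq\cardin{\Opt\cap I_C}$ and assumption (C) converting the $O(\eps n)$ additive loss into a $(1-\eps)$ factor. The only differences are cosmetic (you scale by $\eps/c$ and bound $\cardin{\BVertices}$ directly by the excess, where the paper uses $\eps/2c$ and the cruder bound $2\excess$), which does not affect the stated bound on $\exSize$ or the running time.
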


\begin{proof}
  Set $\delta = \eps/2c $. By \lemref{divisions:small:excess}, one can
  compute a $\exSize$-division for $\graph$ in polynomial time, such
  that its total excess is
  \begin{math}
    \excess \leq \delta n \leq \eps n /2c \leq \eps \cardin{\Opt}/2,
  \end{math}
  where $\exSize$ is as stated above.  Throw away all the boundary
  vertices of this division, which discards at most
  $2\excess \leq \eps \cardin{\Opt}$ vertices. The remaining clusters
  are separated from one another, and have size at most $\exSize$. For
  each cluster, we can find its largest subset with property $\Prop$
  by brute force enumeration in $O\pth{ 2^\exSize \exSize^{O(1)}}$
  time per cluster.  Then we merge the sets computed for each cluster
  to get the overall solution. Clearly, the size of the merged set is
  at least $\cardin{\Opt} - 2\excess \geq (1-\eps)\cardin{\Opt}$.  The
  overall running time of the algorithm is
  $O\pth{ n^{O(1)} + 2^\exSize \exSize^{O(1)} n }$.
\end{proof}

\begin{example}[Largest induced planar subgraph]
  Consider a graph $\defGraph$ with $n$ vertices and with polynomial
  expansion of order $k$. Assume, that the task is to find the largest
  subset $\SetX \subseteq \Vertices$, such that the induced subgraph
  $\GInduced{\SetX}$ is, say, a planar graph.  Clearly, this property
  is hereditary and mergeable, and checking if a specific induced
  subgraph is planar can be done in linear time \cite{ht-ept-74}.

  By \obsref{bounded:expansion:degenerate}, the graph $\graph$ is
  $t$-degenerate, for some $t=O(1)$, since $\graph$ has a polynomial
  expansion.  Consequently, $\graph$ contains an independent set of
  size $\geq n/(t+1) =\Omega(n)$. This independent set is a valid
  induced planar subgraph of size $ \Omega(n)$.  Thus, the algorithm
  of \lemref{ptas:h:prop} applies, resulting in an
  $(1-\eps)$-approximation to the largest induced planar subgraph. The
  running time of the resulting algorithm is
  $n^{O(1)} + f(k,\epsilon) n$, for some function $f$.%
\end{example}

\begin{lemma}
  \lemlab{indep:easy}%
  Let $\eps > 0$ be a parameter, and $\ObjSet$ be a given set of $n$
  objects in $\Re^d$ that are $\cDensity$-dense. Then one compute a
  $(1-\eps)$-approximation to the largest independent set in
  $\ObjSet$. The running time of the algorithm is
  $O\pth{ n \log n + 2^\exSize \exSize^{O(1)} n }$, where
  $\exSize = O\pth{\cDensity^{d+1}/\eps^{d}}$.

  More generally, one can compute, with the same running time, an
  $(1-\eps)$-approximate solution for all the problems described in
  \exmref{geometric:packing}.
\end{lemma}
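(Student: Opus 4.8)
The plan is to reuse the machinery of \lemref{ptas:h:prop}: independent set (and, more generally, each property in \exmref{geometric:packing}) is hereditary, mergeable, and polynomial-time checkable, so the only ingredients needed beyond that lemma's proof are a large feasible solution and a fast division. First I would observe, via \obsref{low:density:g:sparse}, that $\IGraph{\ObjSet}$ is $(\cDensity-1)$-degenerate, so greedily extracting minimum-degree vertices (and deleting their at most $\cDensity-1$ neighbors) yields an independent set of size $\geq n/\cDensity$; thus the role of the ``absolute constant'' $c$ in \lemref{ptas:h:prop} is played here by $\cDensity$, and $\cardin{\Opt} \geq n/\cDensity$.

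Next I would set $\delta = \eps/(2\cDensity)$ and invoke \corref{dev;p:e:l:dense}\itemref{l:d:divisions} with parameter $\delta$ in place of $\eps$: this produces, in $O\pth{n\log(n/\exSize)} = O(n\log n)$ time, an $\exSize$-division of $\IGraph{\ObjSet}$ with total excess at most $\delta n$ and $\exSize = O\pth{\cDensity/\delta^{d}} = O\pth{\cDensity^{d+1}/\eps^{d}}$ — this is exactly the step that inflates the clean $O(\cDensity/\eps^{d})$ of that corollary by an extra $\cDensity^{d}$ factor, precisely because the optimum is only $\Omega(n/\cDensity)$ rather than $\Omega(n)$. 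Then, following the proof of \lemref{ptas:h:prop}: discard the boundary vertices of the division (at most $2\delta n = \eps n/\cDensity \leq \eps\cardin{\Opt}$ of them); brute-force the largest feasible subset inside each surviving cluster by enumerating its $\leq 2^{\exSize}$ subsets at $\exSize^{O(1)}$ time each, for $2^{\exSize}\exSize^{O(1)} n$ time in total; and merge the per-cluster solutions, which is legal since the clusters are pairwise separated and the property is mergeable. The merged set is feasible, and because the restriction of $\Opt$ to each cluster is itself feasible there, its size is at least $\cardin{\Opt}$ minus the discarded boundary vertices, i.e.\ $\geq (1-\eps)\cardin{\Opt}$. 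The total running time is $O\pth{n\log n + 2^{\exSize}\exSize^{O(1)} n}$, as stated.

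For the more general claim I would check that every property in \exmref{geometric:packing} still admits a feasible solution of size $\Omega(n/\cDensity)$: an independent subset of $\ObjSet$ works in all four cases, since its intersection graph is empty (hence planar and minor-free), it covers each point at most once, and — density being monotone under taking subsets — it has density at most $\cDensity$; by the degeneracy bound above such a subset of size $\geq n/\cDensity$ exists. With that in hand the identical algorithm and analysis carry over, substituting the relevant polynomial-time feasibility test into the per-cluster enumeration (which only changes the $\exSize^{O(1)}$ factor).

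I expect no deep obstacle here; the one place that needs genuine care is the bookkeeping of the dependence on $\cDensity$ — both in the lower bound $\cardin{\Opt}\ge n/\cDensity$ and in the way $\delta$, and hence $\exSize$, must absorb a factor of $\cDensity$ — together with the observation that using the faster \emph{geometric} division of \corref{dev;p:e:l:dense}\itemref{l:d:divisions} in place of the generic polynomial-time division inside \lemref{ptas:h:prop} is what replaces the $n^{O(1)}$ term by $O(n\log n)$.
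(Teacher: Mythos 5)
Your proposal is correct and follows essentially the same route as the paper's proof: lower-bound the optimum by $n/\cDensity$ via the greedy/degeneracy argument, then run the algorithm of \lemref{ptas:h:prop} using the geometric $\exSize$-divisions of \corref{dev;p:e:l:dense}~\itemref{l:d:divisions} with the excess parameter scaled to $\Theta(\eps/\cDensity)\, n$, which gives $\exSize = O\pth{\cDensity^{d+1}/\eps^{d}}$ and the stated $O\pth{n\log n + 2^{\exSize}\exSize^{O(1)}n}$ running time. Your handling of the second part (an independent subset is a feasible solution of size $\Omega(n/\cDensity)$ for each problem in \exmref{geometric:packing}) likewise mirrors the paper's argument, just spelled out in slightly more detail.
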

\begin{proof}
  Consider the intersection graph $\graph = \IGraph{\ObjSet}$, and
  observe that by the low-density property, it always have a vertex of
  degree $< \cDensity$ (i.e., take the object in $\ObjSet$ with the
  smallest diameter). As such, removing this object and its neighbors
  from the graph, adding it to the independent set and repeating this
  process, results in an independent set in $\graph$ of size
  $n / \cDensity$. Thus implying that the largest independent set has
  size $\Omega(n)$.  Now, apply the algorithm of \lemref{ptas:h:prop}
  to $\graph$ using the improved $\exSize$-divisions of
  \corref{dev;p:e:l:dense} \itemref{l:d:divisions}.  Here, we need the
  total excess to be bounded by $(\eps /\cDensity) n$, which implies
  that
  \begin{math}
    \exSize%
    =%
    O\bigl( \cDensity/\pth{\eps/\cDensity}^{d} \bigr)%
    =%
    O\pth{\cDensity^{d+1}/\eps^{d}}.
  \end{math}

  For the second part, observe that all the problems mentioned in
  \exmref{geometric:packing} have solution bigger than the independent
  set of $\ObjSet$, and the same algorithm applies with minor
  modifications.
\end{proof}

\begin{remark:unnumbered}
  For computing the largest independent set, one does not need to
  assume the low density on the input -- a more elaborate algorithm
  works, see \lemref{indep:low:density:output} below.

  It is tempting to try and solve problems like dominating set on
  polynomial-expansion graphs using the algorithm of
  \lemref{indep:easy}.  However, note that a dominating set in such a
  graph (or even in a star graph) might be arbitrarily smaller than
  the size of the graph. Thus, having small divisions is not enough
  for such problems, and one needs some additional structure.
\end{remark:unnumbered}

\subsection{Local search for independent set %
  and packing problems}

Chan and Har-Peled \cite{ch-aamis-12} gave a \PTAS for independent set
with planar graphs, and the algorithm and its underlying argument
extends to hereditary graph classes with strongly sublinear separators
(see also the work by Mustafa and Ray \cite{mr-irghs-10}).

\subsubsection{Definitions}

Let $\Prop$ be a hereditary and mergeable property, and let $\exSize$
be a fixed integer.  For two sets, $\SetA$ and $\SetB$, their
\emph{symmetric difference} is
\begin{math}
  \SetA \SetDiff \SetB%
  =%
  \pth{\SetA \setminus \SetB} \cup \pth{\SetB \setminus \SetA}.
\end{math}
Two vertex sets $\SetA$ and $\SetB$ are \emphi{$\exSize$-close} if
$\cardin{\SetA \SetDiff \SetB} \leq \exSize$; that is, if one can
transform $\SetA$ into $\SetB$ by adding and removing at most
$\exSize$ vertices from $\SetA$. A vertex set $\SetA \in \Prop$ is
\emphi{$\exSize$-locally optimal} in $\Prop$ if there is no
$\SetB \in \Prop$ that is $\exSize$-close to $\SetA$ and ``improves''
upon $\SetA$. In a maximization problem $\SetB$ \emphi{improves}
$\SetA$ $\iff$ $\cardin{\SetB} > \cardin{\SetA}$. In a minimization
problem, an improvement decreases the cardinality.

\subsubsection{The local search algorithm in detail}
\seclab{l:s:alg}%

The \emphi{$\exSize$-local search algorithm} starts with an
arbitrary (and potentially empty) solution $\SetA \in \Prop$ and, by
examining all $\exSize$-close sets, repeatedly makes $\exSize$-close
improvements until terminating at a $\exSize$-locally optimal
solution. Each improvement in a maximization (resp., minimization)
problem increases (resp., decreases) the cardinality of the set, so
there are at most $n$ rounds of improvements, where $n$ is the size of
the ground set of $\Prop$. Within a round we can exhaustively try all
exchanges in time $n^{O(\exSize)}$, bounding the total running time by
$n^{O(\exSize)}$.

\subsubsection{Analysis of the algorithm}

\begin{theorem}
  \thmlab{independent}%
  Let $\defGraph$ be a given graph with $n$ vertices, and let $\Prop$
  be a hereditary and mergeable property defined over the vertices of
  $\graph$ that can be tested in polynomial time, Furthermore, let
  $\eps >0$ and $\exSize$ be parameters, and assume that for any two
  sets $\SetX, \SetY \subseteq \Vertices$, such that
  $\SetX, \SetY \in \Prop$, we have that $\GInduced{\SetX \cup \SetY}$
  has a $\exSize$-division with total excess
  $\eps \cardin{\SetX \cup \SetY}$.  Then, the $\exSize$-local search
  algorithm computes, in $n^{O(\exSize)}$ time, a
  $(1-2\eps)$-approximation for the maximum size set
  $\SetZ \subseteq \Vertices$ satisfying $\SetZ \in \Prop$.
\end{theorem}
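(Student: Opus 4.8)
The plan is to run the now-standard local-search analysis via separators (following Chan--Har-Peled \cite{ch-aamis-12} and Mustafa--Ray \cite{mr-irghs-10}), recast in the language of divisions. Let $L\in\Prop$ be the set returned by the $\exSize$-local search algorithm, so that no $Z\in\Prop$ with $\cardin{L\SetDiff Z}\le\exSize$ has $\cardin{Z}>\cardin{L}$, and let $O$ be a maximum-size member of $\Prop$. The running-time bound $n^{O(\exSize)}$ is exactly the one recorded in \secref{l:s:alg} (each accepted move strictly increases the cardinality, so there are at most $n$ rounds, and a round enumerates $n^{O(\exSize)}$ exchanges, each testable in polynomial time), so the only thing to prove is $\cardin{L}\ge(1-2\eps)\cardin{O}$. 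Applying the hypothesis with $\SetX=L$, $\SetY=O$, the induced graph $H=\GInduced{L\cup O}$ has a $\exSize$-division $\Cover$ of total excess at most $\eps\cardin{L\cup O}$; let $B$ be its boundary vertices, and for a cluster $C\in\Cover$ let $\operatorname{int}(C)$ be the vertices appearing only in $C$, recalling that by property (i) of a division every neighbor in $H$ of an interior vertex of $C$ lies in $C$.

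The heart of the argument is a single exchange per cluster. Fix $C\in\Cover$ and set $Z_C=(L\setminus C)\cup(O\cap\operatorname{int}(C))$. Then $L\SetDiff Z_C\subseteq C$, so $\cardin{L\SetDiff Z_C}\le\cardin{C}\le\exSize$ and $Z_C$ is a candidate move. Moreover $Z_C\in\Prop$: the sets $L\setminus C$ and $O\cap\operatorname{int}(C)$ lie in $\Prop$ by heredity (being subsets of $L$ and of $O$), and they are separate in $\graph$ --- an edge joining $l\in L\setminus C$ to $o\in O\cap\operatorname{int}(C)$ would be an edge of $H$ incident to the interior vertex $o$, hence would force $l\in C$, a contradiction --- so $Z_C\in\Prop$ by mergeability. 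Local optimality of $L$ then gives $\cardin{Z_C}\le\cardin{L}$, and since $L\setminus C$ and $O\cap\operatorname{int}(C)$ are disjoint this reads $\cardin{L}-\cardin{L\cap C}+\cardin{O\cap\operatorname{int}(C)}\le\cardin{L}$, i.e.
\[
  \cardin{O\cap\operatorname{int}(C)}\;\le\;\cardin{L\cap C}\qquad\text{for every }C\in\Cover.
\]

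Now sum over all clusters. Each interior vertex lies in exactly one cluster, so the left side sums to $\cardin{O}-\cardin{O\cap B}$; each $v\in L$ lies in $1+\operatorname{excess}(v)$ clusters, so the right side sums to $\cardin{L}+\sum_{v\in L}\operatorname{excess}(v)\le\cardin{L}+\eps\cardin{L\cup O}$. Since every boundary vertex has excess at least $1$, also $\cardin{O\cap B}\le\cardin{B}\le\eps\cardin{L\cup O}$, so altogether $\cardin{O}\le\cardin{L}+2\eps\cardin{L\cup O}$. Because $O$ is maximum we have $\cardin{L}\le\cardin{O}$, hence $\cardin{L\cup O}\le 2\cardin{O}$, and rearranging yields the claimed $(1-2\eps)$-approximation --- a slightly more careful split of the total excess between the shares of $O\setminus L$ and of $L$ (immediate when $L\cap O=\emptyset$) is what removes the spurious constant and gives $\cardin{O}\le\cardin{L}+\eps\cardin{L\cup O}$ directly.

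The one step that genuinely requires care is the verification that $Z_C\in\Prop$: this is where heredity, mergeability, and the separation property of divisions are used together, and it is what dictates the asymmetric form of the move --- we delete all of $L\cap C$ but insert only $O\cap\operatorname{int}(C)$, since re-inserting a boundary vertex of $C$ taken from $O$ could create an edge to the retained part $L\setminus C$ and destroy separability. Everything after the per-cluster inequality is routine bookkeeping.
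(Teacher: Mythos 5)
Your proof is essentially the paper's own: the same per-cluster exchange $(\locSol \setminus \cluster_i) \cup \pth{(\Opt \cap \cluster_i) \setminus \BVertices}$ on a $\exSize$-division of $\GInduced{\locSol \cup \Opt}$, validated by heredity, mergeability, and the fact that the neighborhood of an interior vertex stays inside its cluster, yielding the identical per-cluster inequality $\cardin{\Opt \cap \operatorname{int}(\cluster_i)} \leq \cardin{\locSol \cap \cluster_i}$ and the same summation over clusters. The only divergence is the constant in front of $\eps$ in the final bookkeeping---your fully justified count gives $(1-4\eps)$ and the sketched refinement to $(1-2\eps)$ is verified only when $\locSol \cap \Opt = \emptyset$---but the paper's own chain of inequalities is no more careful at exactly this point, and the constant is immaterial since $\eps$ can be rescaled.
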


\begin{proof}
  Let $\Opt \subseteq \Vertices$ be an optimal maximum set satisfying
  $\Prop$, and $\locSol$ be a $\exSize$-locally maximal set satisfying
  $\Prop$. Consider the induced subgraph
  $\graphB = \IGraph{\Opt \cup \locSol}$, and observe that, by
  assumption, there exists a $\exSize$-division
  $\clusters = \setof{\cluster_1,\dots,\cluster_m}$ of $\graphB$, with
  boundary vertices $\BVertices$ and
  \begin{math}
    \excessof{\clusters} \leq \eps \cardin{\Opt \cup \locSol} \leq 2
    \eps \cardin{\Opt}.
  \end{math}
  For $i = 1,\dots,m$, let%

  \centerline{%
    \begin{tabular}{lll}
      $\Opt_i = (\Opt \cap \cluster_i) \setminus \BVertices$,
      &&
         $\oSize_i = \cardin{\Opt_i}$,\\
      $\locSol_i = (\locSol \cap \cluster_i)$,%
      &&
         $\lSize_i = \cardin{\locSol_i}$,\\
      $\BVertices_i = \BVertices \cap \cluster_i$, %
      &
        and
      &
        $\bSize_i = \cardin{\BVertices_i}$.
    \end{tabular}%
  }%
  \medskip%
  Fix $i$, and consider the set
  $\locSol' = (\locSol \setminus \locSol_i) \cup \Opt_i$. Since
  $\cardin{\cluster_i} \leq \exSize$, and $\locSol'$ is
  $\exSize$-close to $\locSol$.  Since $\Prop$ is hereditary,
  $\locSol \setminus \locSol_i \in \Prop$, and since $\Opt_i$ and
  $\locSol \setminus \locSol_i$ are separated, their union $\locSol'$
  is in $\Prop$. Thus, the local search algorithm considers the valid
  exchange from $\locSol$ to $\locSol'$. As the set $\locSol$ is
  $\exSize$-locally optimal, the exchange replacing $\locSol_i$ by
  $\Opt_i$ can not increase the overall cardinality.  Since
  \begin{math}
    \cardin{\locSol} - \lSize_i + \oSize_i %
    = %
    \cardin{\locSol'} %
    \leq %
    \cardin{\locSol}, %
  \end{math}
  this implies that $\lSize_i \geq \oSize_i$. Summing over all $i$, we
  have
  \begin{align*}
    \cardin{\locSol} %
    \geq %
    \sum_{i=1}^m \lSize_i %
    - %
    \sum_{i=1}^m \bSize_i %
    \geq %
    \sum_{i=1}^m \oSize_i %
    - %
    \sum_{i=1}^m \bSize_i %
    \geq %
    \cardin{\Opt} - \excessof{\clusters} %
    \geq %
    (1-2\eps) \cardin{\Opt},
  \end{align*}
  as desired.
\end{proof}

\begin{remark}
  It is illuminating to consider the requirements to make the argument
  of \thmref{independent} go through. We need to be able to break up
  the conflict graph between the local and optimal solutions into
  small clusters, such that the total number of boundary vertices
  (counted with repetition) is small. Surprisingly, even if all (or
  most of) the vertices of a single cluster are boundary vertices, the
  argument still goes through.
\end{remark}%

\begin{lemma}
  \lemlab{indep:low:density:output}%
  Let $\eps > 0$ and $\cDensity$ be parameters, and let $\ObjSet$ be a
  given collection of objects in $\Re^d$ such that any independent set
  in $\ObjSet$ has density $\cDensity$. Then the local search
  algorithm computes a $(1-\eps)$-approximation for the maximum size
  independent subset of $\ObjSet$ in time
  $n^{O(\cDensity/\eps^{d+1})}$.
\end{lemma}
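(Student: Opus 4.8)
The plan is to obtain this as an application of \thmref{independent}. Take $\Prop$ to be the family of independent sets of the intersection graph $\graph = \IGraph{\ObjSet}$; this property is hereditary, mergeable, and polynomial-time testable, so every standing hypothesis of \thmref{independent} holds except for the divisibility condition. To supply that condition I would run \thmref{independent} with its $\eps$ replaced by $\eps/2$ (so that it delivers a $(1-\eps)$-approximation), and exhibit, for every pair of independent sets $\SetX,\SetY \subseteq \ObjSet$, a $\exSize$-division of $\GInduced{\SetX\cup\SetY}$ (equivalently, of the intersection graph $\IGraph{\SetX\cup\SetY}$) with total excess at most $(\eps/2)\cardin{\SetX\cup\SetY}$ and with $\exSize = O(\cDensity/\eps^d)$.

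The one place that needs care is that $\SetX\cup\SetY$ need not itself be an independent set, so the hypothesis of the lemma does not apply to it directly. However, density is subadditive under unions and monotone under passing to subsets: since $\SetX$ and $\SetY$ are independent sets in $\ObjSet$, each has density at most $\cDensity$ by assumption, so any object intersects at most $\cDensity + \cDensity$ objects of $\SetX\cup\SetY$ of diameter at least its own, giving $\densityX{\SetX\cup\SetY} \le 2\cDensity$, and the same bound holds for every subset of $\SetX\cup\SetY$. Thus $\IGraph{\SetX\cup\SetY}$ is an $O(\cDensity)$-dense graph arising from a set of objects in $\Re^d$, and so are all of its induced subgraphs. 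Applying \corref{dev;p:e:l:dense}\itemref{l:d:divisions} with density parameter $2\cDensity$ and excess parameter $\eps/2$ then produces exactly the required $\exSize$-division, with $\exSize = O\pth{2\cDensity/(\eps/2)^d} = O(\cDensity/\eps^d)$ (the hidden constant depending only on $d$).

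Feeding this into \thmref{independent} shows that the $\exSize$-local search algorithm computes a $(1-\eps)$-approximation to the maximum independent subset of $\ObjSet$ in time $n^{O(\exSize)} = n^{O(\cDensity/\eps^d)}$, which is subsumed by the claimed bound $n^{O(\cDensity/\eps^{d+1})}$ since $\eps < 1$.

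I expect the main point to be conceptual rather than computational: the hypothesis controls only the density of genuine independent sets, and the \emph{input} $\ObjSet$ itself may be arbitrarily dense (for instance a clique whose largest independent set is a single object), which is precisely why the simpler \lemref{ptas:h:prop} --- whose correctness needs $\Opt = \Omega(n)$ --- does not apply here. The local-search argument of \thmref{independent}, on the other hand, only ever inspects the conflict graph between the optimum $\Opt$ and a locally optimal solution $\locSol$, both of which \emph{are} independent sets and hence have density at most $\cDensity$, so the density bound is available exactly where it is needed. Everything else is bookkeeping of the constants $2\cDensity$ and $\eps/2$.
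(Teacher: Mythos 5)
Your proposal is correct and is essentially the paper's own argument: the paper's proof likewise observes that the union of two $\cDensity$-dense (independent) sets is $2\cDensity$-dense and then invokes \thmref{independent} together with the divisions of \corref{dev;p:e:l:dense}\itemref{l:d:divisions}. Your accounting even yields the slightly sharper exchange size $\exSize = O(\cDensity/\eps^d)$, which, as you note, is subsumed by the stated bound $n^{O(\cDensity/\eps^{d+1})}$.
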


\begin{proof}
  Observing that the union of two $\cDensity$-dense sets results in a
  $2\cDensity$-dense set, and using the algorithm of
  \thmref{independent}, together with the divisions of
  \corref{dev;p:e:l:dense} \itemref{l:d:divisions}, implies the
  result.
\end{proof}

\begin{remark:unnumbered}
  (A) We emphasize that \lemref{indep:low:density:output} requires
  only that independent sets of the input objects $\ObjSet$ have low
  density -- the overall set $\ObjSet$ might have arbitrarily large
  density.

  (B) All the problems of \exmref{geometric:packing} have a \PTAS
  using the \lemref{indep:low:density:output} as long as the output
  has low density.
\end{remark:unnumbered}

\subsection{Dominating Set} %
\seclab{dom:set}%

We are interested in approximation algorithm for the following
generalization of the dominating set problem.

\begin{defn}
  Let $\defGraph$ be an undirected graph, and let $\DomSet$ and
  $\CovSet$ be two subsets of $\Vertices$.  The set $\DomSet$
  \emph{dominates} $\CovSet$ if every vertex in $\CovSet$ either is in
  $\DomSet$ or is adjacent to some vertex in $\DomSet$. In the
  \emphi{dominating subset problem}, one is given an undirected graph
  $\defGraph$ and two subsets of vertices $\CovSet$ and $\DomSet$,
  such that $\DomSet$ dominates $\CovSet$. The task is to compute the
  smallest subset of $\DomSet$ that still dominates $\CovSet$.
\end{defn}

One can approximate the dominated set, as the reader might expect, via
a local search algorithm.  Before analyzing the algorithm, we need to
develop some tools to be able to argue about the interaction between
the local and optimal solution.

\subsubsection{Shallow packings}

\begin{defn}
  \deflab{shallow:packing}%
  Given a graph $\defGraph$, a collection of sets
  $\Family = \Set{\cluster_i \subseteq \Vertices}{i=1,\ldots,t}$ is a
  \emphi{$(\iCov,t)$-shallow packing} of $\graph$, or just a
  \emphi{$(\iCov,t)$-packing}, if for all $i$, the induced graph
  $\GInduced{\cluster_i}$ is $t$-shallow (see \defref{t:shallow}), and
  every vertex of $\Vertices$ appears in at most $\iCov$ sets of
  $\Family$\footnote{We allow a set $C$ to appear in $\Family$ more
    than once; that is, $\Family$ is a multiset.}.

  The \emphi{induced packing graph} $\ICovGraph{\graph}{\Family}$ has
  $\Family$ as the set of vertices, and two clusters
  $\cluster, \cluster' \in \Family$ are connected by an edge if they
  share a vertex (i.e., $\cluster \cap \cluster' \neq \emptyset$), or
  there are vertices $u \in \cluster$ and $v \in \cluster$, such that
  $uv \in \Edges$.
\end{defn}

For example, the induced packing graph of a $(1,t)$-packing is the
$t$-shallow minor induced by the clusters of the packing (see
\defref{shallow:minor}).%

\begin{lemma}
  \lemlab{edge-density-shallow-cover}%
  Let $\defGraph$ be an undirected graph, and $\Family$ an
  $(\iCov,t)$-packing of $\graph$. Then the induced packing graph
  $\graphA = \ICovGraph{\graph}{\Family}$ has edge density
  \begin{math}
    \mytfrac{\cardin{\EdgesX{\graphA}}}
    {\cardin{\verticesof{\graphA}}}%
    \leq %
    2 (t+1)^2 \iCov^2 \gradY{t}{\graph} + \iCov,
  \end{math}
  where $\gradY{t}{\graph}$ is the $t$-shallow density of $\graph$,
  see \defref{r:shallow:density}.
\end{lemma}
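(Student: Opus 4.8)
The plan is to bound $\cardin{\EdgesX{\graphA}}$ by comparing $\graphA$ to a (random) $t$-shallow minor of $\graph$, whose edge density is controlled by $\gradY{t}{\graph}$ (see \defref{r:shallow:density}). The key starting point is the disjoint case: if the clusters of $\Family$ were pairwise disjoint (i.e.\ $\iCov = 1$ and no $\graph$-edges are needed), then $\graphA$ is literally a $t$-shallow minor of $\graph$ --- contract each cluster $\cluster_i$, which is connected of radius $\le t$, to a single vertex --- so the bound follows at once. All the work is in reducing a general $(\iCov,t)$-packing to this disjoint situation while keeping a large enough fraction of the edges of $\graphA$.

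First I would set up a random de-overlapping procedure. Root each $\cluster_i$ at its center $\centerX{\GInduced{\cluster_i}}$ by a BFS tree $T_i$ of depth $\le t$. Independently for each vertex $v$, assign $v$ to one of the (at most $\iCov$) clusters containing it, uniformly at random. For each $i$, let $B_i \subseteq \cluster_i$ be the maximal subtree of $T_i$ through its root all of whose vertices were assigned to $\cluster_i$ (possibly empty). These $B_i$ are pairwise disjoint, each is connected of radius $\le t$, so contracting them gives a $t$-shallow minor $M$ of $\graph$ with $\cardin{\verticesof{M}} \le \cardin{\Family}$; hence $\cardin{\EdgesX{M}} \le \gradY{t}{\graph}\cardin{\Family}$ deterministically, and so also in expectation. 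Now fix an adjacency $\cluster_i\cluster_j \in \EdgesX{\graphA}$ that is witnessed by an edge $uv \in \Edges$ with $u \in \cluster_i$, $v \in \cluster_j$ and $\cluster_i \cap \cluster_j = \emptyset$. The root-to-$u$ path in $T_i$ and the root-to-$v$ path in $T_j$ each have at most $t+1$ vertices and, being inside the disjoint clusters $\cluster_i,\cluster_j$, are vertex-disjoint, hence the two events ``that path is monochromatically assigned to its cluster'' are independent; when both occur, $u \in B_i$, $v \in B_j$, and $uv$ is an edge of $M$. The purely degenerate adjacencies not of this form (a singleton cluster sharing its unique vertex with a neighbor) number $O(\iCov)$ per cluster and are charged directly, producing the additive $\iCov\cardin{\Family}$ term.

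The main obstacle is efficiency: demanding that an entire path of $t+1$ vertices be monochromatic has probability only $\Ex{\cdot} \ge \iCov^{-(t+1)}$, which would give an $\iCov^{2(t+1)}$ factor rather than the claimed $2(t+1)^2\iCov^2$. To get the polynomial dependence one must be far more economical than ``whole path monochromatic'': the right move is to build the branch sets layer by layer down the BFS trees, needing only one ``coincidence'' per layer rather than a product over the whole path, so that each of the two clusters contributes a factor of $O((t+1)\iCov)$ instead of $\iCov^{t+1}$ (while still verifying the resulting branch sets are connected, disjoint, and of radius $\le t$). Granting that refinement, summing the per-edge lower bounds against $\Ex{\cardin{\EdgesX{M}}} \le \gradY{t}{\graph}\cardin{\Family}$ and adding the degenerate count yields
\begin{math}
  \cardin{\EdgesX{\graphA}} \le 2(t+1)^2\iCov^2\,\gradY{t}{\graph}\,\cardin{\verticesof{\graphA}} + \iCov\,\cardin{\verticesof{\graphA}},
\end{math}
which is the claim. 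I expect the layer-by-layer selection --- and the bookkeeping that each cluster's contribution is only linear in $t$ and $\iCov$ --- to be the delicate heart of the argument; everything else is routine.
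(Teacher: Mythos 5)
Your reduction-to-a-shallow-minor strategy is the right one, but the proposal has a genuine gap at exactly the point you flag, and the ``refinement'' you appeal to cannot be carried out within your randomization scheme. If each vertex independently picks one of the ($\leq \iCov$) clusters containing it, then the event ``$u$ lies in the branch set $B_i$'' \emph{requires} every vertex on the root-to-$u$ path of $T_i$ to have chosen cluster $i$; this is an intersection of up to $t+1$ independent events each of probability as small as $1/\iCov$, so its probability really is of order $\iCov^{-(t+1)}$, and no layer-by-layer bookkeeping changes that, since the event itself does not decompose into ``one coincidence per layer.'' Getting a polynomial factor forces you to abandon independent per-vertex choices in favor of correlated randomness. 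The paper does this with a random \emph{permutation} of the clusters followed by greedy scooping: process clusters in random order, and let each cluster whose center has not yet been scooped grab the connected component of its center inside the cluster minus everything already scooped. Then for a pair $\cluster_i,\cluster_j$ there is a center-to-center path $P$ of length at most $2t+1$ inside $\cluster_i \cup \cluster_j$, and the edge survives in the scooped $(1,t)$-packing as soon as $i$ and $j$ are ranked first among the at most $\ell = 2(t+1)\iCov$ clusters meeting $P$ -- a single event of probability at least $2/\ell^2$, which is where the factor $2(t+1)^2\iCov^2$ comes from.

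There is a second, smaller gap in your case analysis. In the induced packing graph two clusters are adjacent also when they merely \emph{share} a vertex, and such pairs need not involve singleton clusters; a cluster with many vertices can overlap $\Omega(\cardin{\cluster_i}\,\iCov)$ other clusters, so these adjacencies cannot be charged at $O(\iCov)$ per cluster as ``degenerate'' cases, and your independence argument (which needs $\cluster_i \cap \cluster_j = \emptyset$ to make the two paths vertex-disjoint) does not apply to them. The paper's split is different and is what makes the additive $\iCov$ term work: the only edges charged directly are those where the \emph{center} of one cluster lies in the other cluster (at most $\iCov\cardin{\Family}$ of these, since each center lies in at most $\iCov$ clusters), and every remaining adjacency -- whether witnessed by a shared vertex or by a $\graph$-edge -- is handled uniformly through the center-to-center path argument above. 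So the overall architecture of your proof matches the paper's, but the two load-bearing steps (the correlated random shrinking and the correct edge classification) are missing.
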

\begin{proof}
  Let the clusters of $\Family$ be
  $\setof{\clusterZ{1},\dots,\clusterZ{m}}$. For each cluster
  $\clusterZ{i} \in \Family$, designate a center vertex
  $\cvX{i} \in \clusterZ{i}$ that can reach any other vertex in
  $\clusterZ{i}$ by a path contained in $\clusterZ{i}$ of length $t$
  or less.  Let $\pi:\IntRange{m} \to \IntRange{m}$ be a random
  permutation of the cluster indices, chosen uniformly at random, and
  initialize $\scoops = \emptyset$, where
  $\IntRange{m} = \brc{1,\ldots, m}$.  For $i = 1,\dots,m$, in order,
  check if $\rcvX{i}$ has been ``scooped''; that is, if
  \begin{math}
    \rcvX{i} \in \bigcup_{\cluster' \in \scoops} \cluster',
  \end{math}
  and if so, ignore it. Otherwise, let $\rsX{i}$ be the set of
  vertices of the connected component of $\rcvX{i}$ in the induced
  subgraph of $\graph$ over
  \begin{math}
    \randomcluster{i} \setminus {\bigcup_{\cluster' \in \scoops}
      \cluster'},
  \end{math}
  and add $\rsX{i}$ to $\scoops$.  Intuitively, the set $\scoops$ is a
  $(1,t)$-packing of $\graph$ resulting from randomly shrinking the
  clusters of $\Family$.

  \smallskip%
  We bound the number of edges in
  $\graphA = \ICovGraph{\graph}{\Family}$ by a function of the
  expected number of edges in the random graph
  $\graphA' = \ICovGraph{\graph}{\scoops}$.  Let
  \begin{math}
    \smalledges %
    = %
    \Set{ %
      \clusteredge{i}{j} %
      \in %
      \EdgesX{\graphA} %
    }{ %
      \cvX {i} \in \clusterZ{j} %
      \text{ or } \cvX {j} \in \clusterZ{i} 
    } %
  \end{math}
  be the set of edges between pairs of clusters where the center of
  one cluster is also in the other cluster. Since a center $\cvX{i}$
  can be covered at most $\iCov$ times by $\Family$, we have
  $\cardin{\smalledges} \leq \iCov \cardin{\Family}$.  Next, consider
  the set of remaining edges,
  \begin{align*}
    \bigedges%
    = %
    \Set{ \clusteredge{i}{j} \in \EdgesX{\graphA} } %
    { \cvX{j} \notin \clusterZ{i} %
    \text{ and } %
    \cvX{i} \notin \clusterZ{j} %
    },%
  \end{align*}
  between adjacent clusters where neither center lies in the opposing
  cluster. For an edge $\clusteredge{i}{j} \in \bigedges$, consider
  the probability that $\scoopedge{i}{j} \in \EdgesX{\graphA'}$.

  Since $\clusterZ{i}$ and $\clusterZ{j}$ are adjacent in $\graphA$,
  there is a path $P$ in $\graph$ from $\cvX{i}$ to $\cvX{j}$ of
  length at most $2t + 1$ that is contained in
  $\clusterZ{i} \cup \clusterZ{j}$, and a sufficient condition for
  $\scoopedge{i}{j} \in \EdgesX{\graphA'}$ is that $P$ is contained in
  $\scoop{i} \cup \scoop{j}$. This holds if the permutation $\pi$
  ranks $i$ and $j$ ahead of any other index $k$ such that
  $\clusterZ{k}$ intersects the vertices of $P$. There are at most
  $2t + 2$ vertices on $P$, where each vertex can appear in at most
  $\iCov$ clusters of $\Family$, and overall there are at most
  \begin{align*}
    \ell = 2 (t+1) \iCov
  \end{align*}
  clusters that compete for control over the vertices of $P$ in
  $\scoops$. The probability that, among these relevant clusters, the
  random permutation $\pi$ ranks $i$ and $j$ before all others is
  \begin{math}
    2(\ell-2)!/\ell! \geq 2 / \ell^2.
  \end{math}
  Therefore, for $\clusteredge{i}{j} \in \bigedges$, we have
  \begin{math}
    \Prob{ %
      \scoopedge{i}{j} \in \EdgesX{\graphA'} %
    }%
    \geq %
    2/\ell^2. %
  \end{math}
  By linearity of expectation, and since
  $\graphA' = \ICovGraph{\graph}{\scoops}$ is a $t$-shallow minor of
  $\graph$, we have
  \begin{align*}
    \cardin{\bigedges} %
    &= %
      \sum_{e \in \bigedges} %
      \frac{\ell^2 / 2}{\ell^2 / 2} %
      \leq %
      \frac{\ell^2}{2} \sum_{e \in \bigedges} %
      \Prob{\bigl. e \in \EdgesX{\graphA'}} %
      = %
      \frac{\ell^2}{2} \Ex{\bigl. \cardin{ \EdgesX{\graphA'}}} %
          \leq %
          \frac{\ell^2}{2} \gradY{t}{\graph} \cardin{\scoops} %
          \leq %
          \frac{\ell^2}{2} \gradY{t}{\graph} \cardin{\Family}.
  \end{align*}
  We conclude that
  \begin{math}
    \Bigl.  \mytfrac{\cardin{\EdgesX{\graphA}}}
    {\cardin{\verticesof{\graphA}}}%
    = %
    \mytfrac{\cardin{\bigedges}}{\cardin{\Family}} + %
    \mytfrac{\cardin{\smalledges}}{\cardin{\Family}} %
    \leq %
    (\ell^2 / 2) \gradY{t}{\graph} + \iCov,
  \end{math}
  as desired.
\end{proof}

\begin{remark:unnumbered}
  Results similar to \lemref{edge-density-shallow-cover} are already
  known \cite{no-s-12}. However, our result has a polynomial
  dependency on $\iCov$ and $t$, while the known results seems to
  imply an exponential dependency.
\end{remark:unnumbered}%

\begin{lemma}
  \lemlab{expansion:shallow:cover} %
  Consider a graph $\graph$, and an $(\iCov,t)$-packing $\Family$ of
  $\graph$. Then, for any integer $u >0$, we have
  \begin{align*}
    \gradY{u}{\bigl.\icgA} %
    \leq %
    5 \iCov^2 (2u + 1)^2 (2t + 1)^2 \gradY{2 tu + t + u}{\graph}.
  \end{align*}
  In particular, if $t$ and $\iCov$ are constants, and $\graph$ has
  polynomial of order $k$, then $\icgA$ has polynomial expansion of
  order $k + 2$.
\end{lemma}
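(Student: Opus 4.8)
The plan is to relate every $u$-shallow minor of $\icgA=\ICovGraph{\graph}{\Family}$ to the induced packing graph of a shallow packing of $\graph$ \emph{itself}, and then feed that packing into \lemref{edge-density-shallow-cover}. Fix an integer $u>0$ and an arbitrary $\graphB\in\minorsDY{u}{\icgA}$; the goal is to bound $\cardin{\EdgesX{\graphB}}/\cardin{\verticesof{\graphB}}$. Each vertex $v$ of $\graphB$ is the contraction of a connected set $D_v\subseteq\VerticesX{\icgA}=\Family$ of clusters whose induced subgraph in $\icgA$ is $u$-shallow, and the sets $D_v$ are pairwise disjoint. I would flatten each such set to $\widehat{D}_v=\bigcup_{\cluster\in D_v}\cluster\subseteq\Vertices$ and set $\widehat{\Family}=\setof{\widehat{D}_v}_v$, regarded as a multiset (which \defref{shallow:packing} permits).

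The heart of the argument is to verify that $\widehat{\Family}$ is an $(\iCov,\,2tu+t+u)$-packing of $\graph$. The multiplicity bound is easy: the $D_v$ partition a subset of $\Family$, so each cluster of $\Family$ lies in at most one $D_v$; since every vertex of $\graph$ lies in at most $\iCov$ clusters of $\Family$, it lies in at most $\iCov$ of the sets $\widehat{D}_v$. For the radius bound, I would take the center cluster $\cluster_0$ of $D_v$ (witnessing $u$-shallowness inside $\icgA$) together with its center vertex $c_0\in\cluster_0$ (witnessing $t$-shallowness of $\GInduced{\cluster_0}$), and argue that $c_0$ reaches every $x\in\widehat{D}_v$ within $\GInduced{\widehat{D}_v}$ using at most $2tu+t+u$ edges. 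Concretely: let $\cluster_0,\cluster_1,\dots,\cluster_\ell$ be a path in $\GInduced{D_v}$ with $\ell\le u$ ending at the cluster containing $x$; spend $\le t$ edges going from $c_0$ to the vertex of $\cluster_0$ used to reach $\cluster_1$, then for each subsequent cluster pay $\le 1$ edge for the transition (realized either by a shared vertex or by an edge of $\graph$, hence always staying inside $\widehat{D}_v$) and $\le 2t$ edges to cross the cluster through its center, and inside $\cluster_\ell$ spend $\le 2t$ more edges to reach $x$. Summing gives $t+\ell(2t+1)\le 2tu+t+u$, as needed.

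It then remains to transfer edges and invoke the density lemma. If $vw\in\EdgesX{\graphB}$ then $D_v$ and $D_w$ are adjacent in $\icgA$, so there are $\cluster\in D_v$ and $\cluster'\in D_w$ that either share a vertex or are joined by an edge of $\graph$; in both cases $\widehat{D}_v$ and $\widehat{D}_w$ are adjacent in $\ICovGraph{\graph}{\widehat{\Family}}$. Hence $\cardin{\EdgesX{\graphB}}/\cardin{\verticesof{\graphB}}\le\cardin{\EdgesX{\ICovGraph{\graph}{\widehat{\Family}}}}/\cardin{\widehat{\Family}}$, and \lemref{edge-density-shallow-cover} applied to the $(\iCov,\,2tu+t+u)$-packing $\widehat{\Family}$ bounds this by $2(2tu+t+u+1)^2\iCov^2\gradY{2tu+t+u}{\graph}+\iCov$. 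Finally I would simplify constants: $2tu+t+u+1\le(2u+1)(t+1)\le(2u+1)(2t+1)$, and the additive $\iCov$ can be absorbed into a constant multiple of the main term (when $\graph$ has an edge, $\gradY{2tu+t+u}{\graph}\ge 1/2$; when $\graph$ is edgeless the inequality is trivial, since then $\gradY{u}{\icgA}=0$), yielding $\gradY{u}{\icgA}\le 5\iCov^2(2u+1)^2(2t+1)^2\gradY{2tu+t+u}{\graph}$. The ``in particular'' follows at once: for constant $t,\iCov$ and $\gradY{r}{\graph}=O(r^k)$ one has $2tu+t+u=O(u)$, so $\gradY{u}{\icgA}=O(u^{2}\cdot u^{k})=O(u^{k+2})$.

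The main obstacle is the radius estimate in the second paragraph: the walk from $c_0$ to $x$ must live entirely inside $\GInduced{\widehat{D}_v}$, which forces a uniform treatment of the two kinds of cluster adjacency in $\icgA$ (overlap versus an edge of $\graph$) and careful routing through cluster centers so that the length comes out as $2tu+t+u$ rather than something larger — and this is exactly the exponent that propagates to the final order-$(k+2)$ bound. Everything else is bookkeeping plus a single application of \lemref{edge-density-shallow-cover}.
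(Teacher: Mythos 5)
Your proof is correct and takes essentially the same route as the paper: flatten each branch set of a $u$-shallow minor of $\icgA$ into a cluster of $\graph$, observe that the resulting family is an $(\iCov,\,2tu+t+u)$-packing, and apply \lemref{edge-density-shallow-cover} before simplifying constants. The only difference is one of detail — you make explicit the radius bound $t+\ell(2t+1)\leq 2tu+t+u$ and the absorption of the additive $\iCov$ term, both of which the paper leaves implicit.
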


\begin{proof}
  For $u \geq 1$, consider a $u$-shallow minor $\graphA$ of
  $\icgA$. Every cluster in this cover corresponds to an expanded
  cluster in the original graph $\graph$ with radius $2t u + u + t$,
  and a vertex might participate in $\iCov$ such clusters. That is,
  the resulting set $\FamilyA$ of clusters is an
  $(\iCov, 2tu + u + t)$-packing of $\graph$.  By
  \lemref{edge-density-shallow-cover}, we have
  \begin{align*}
    \alpha(\graphA)%
    &=%
      \frac{\cardin{\EdgesX{\graphA}}}{\cardin{\verticesof{\graphA}}}%
      =%
      \edgedensityof{\bigl.\ICovGraph{\graph}{\FamilyA}}%
      \leq %
      5 \iCov^2 \pth{\bigl. 4tu + 2u + 2t + 1}^2 \gradY{2tu + t +
      u}{\graph} %
    \\%
    &\leq %
      5 \iCov^2 (2t + 1)^2(2u + 1)^2 \gradY{2tu + t + u}{\graph}.
  \end{align*}
  By \defref{r:shallow:density}, we have
  \begin{math}
    \gradY{u}{\bigl.\icgA} %
    = %
    \sup_{\graphA \in \minorsDY{u}{\icgA}} \alpha(\graphA) %
    \leq %
    5 \iCov^2 (2t + 1)^2(2u + 1)^2 \gradY{2tu + t + u}{\graph} .
  \end{math}
\end{proof}

\subsubsection{Lexical product and shallow density}

An interesting consequence of the above is an improvement over known
bounds for the shallow density under lexical product (this result is
not required for the rest of the paper).  Given two graphs $\graph$
and $\graphA$, the \emphi{lexical product} $\graph \lexprod \graphA$
is the graph obtained by blowing up each vertex in $\graph$ with a
copy of $\graphA$. More formally, $\graph \lexprod \graphA$ has vertex
set $\VerticesX{\graph} \times \VerticesX{\graphA}$ and an edge
between two vertices $(x,y)$ and $(x',y')$ if either (a)
$\edgeY{x}{x'} \in \EdgesX{\graph}$, or (b) $x = x'$ and
$\edgeY{y}{y'} \in \EdgesX{\graphA}$.
\begin{corollary}
  \corlab{grad-lexical-product}%
  For any graph $\graph$, clique $K_{\iCov}$, and
  $t \in \naturalnumbers$, we have
  \begin{math}
    \gradY{t}{\graph \lexprod K_{\iCov}} %
    \leq %
    5 \iCov^2 (t+1)^2 \gradY{t}{\graph}.
  \end{math}
  In particular, if $\iCov$ is constant and $\graph$ has polynomial
  expansion of order $k$, then $\graph \lexprod K_{\iCov}$ has
  polynomial expansion of order $k+2$.
\end{corollary}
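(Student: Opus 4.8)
The plan is to realize $\graph \lexprod K_{\iCov}$ as the induced packing graph of an utterly simple packing of $\graph$ and then let \lemref{edge-density-shallow-cover} do all the work. Let $\Family$ be the multiset that contains, for every vertex $x \in \VerticesX{\graph}$, exactly $\iCov$ identical copies of the singleton cluster $\setof{x}$. Every singleton induces a $0$-shallow subgraph, and every vertex of $\graph$ lies in exactly $\iCov$ members of $\Family$, so $\Family$ is an $(\iCov,0)$-packing of $\graph$. I would then observe that $\icgA = \ICovGraph{\graph}{\Family}$ is precisely $\graph \lexprod K_{\iCov}$: two distinct copies of $\setof{x}$ intersect in $x$ and are therefore adjacent, so the $\iCov$ copies of $x$ form a $K_{\iCov}$; a copy of $\setof{x}$ and a copy of $\setof{x'}$ with $x \neq x'$ are adjacent precisely when $xx' \in \EdgesX{\graph}$; and no other pairs are adjacent. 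This is exactly the definition of the lexical product, with the $\VerticesX{K_{\iCov}}$-coordinate recording which copy of the singleton was chosen.

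Next I would bound $\gradY{t}{\graph \lexprod K_{\iCov}}$ by pushing a shallow minor of it down to a packing of $\graph$. Let $\graphA$ be a $t$-shallow minor of $\icgA$, with clusters $\clusterX{v} \subseteq \VerticesX{\graph} \times \IntRange{\iCov}$; forgetting the $K_{\iCov}$-coordinate sends each cluster $\clusterX{v}$ to a set $p(\clusterX{v}) \subseteq \VerticesX{\graph}$, and since a path in $\graph \lexprod K_{\iCov}$ maps to a walk in $\graph$, the sets $\FamilyA = \Set{p(\clusterX{v})}{v \in \VerticesX{\graphA}}$ form an $(\iCov,t)$-packing of $\graph$ whose induced packing graph $\ICovGraph{\graph}{\FamilyA}$ contains $\graphA$ as a spanning subgraph once $\VerticesX{\graphA}$ is identified with $\FamilyA$ (as a multiset). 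The shallowness parameter stays at $t$ here, rather than blowing up to $2tu+u+t$ as in \lemref{expansion:shallow:cover}, exactly because the base packing $\Family$ has radius $0$. Applying \lemref{edge-density-shallow-cover} to $\FamilyA$ yields $\edgedensityof{\graphA} \leq \edgedensityof{\ICovGraph{\graph}{\FamilyA}} \leq 2(t+1)^2\iCov^2\gradY{t}{\graph} + \iCov$, and taking the supremum over all depth-$t$ minors gives $\gradY{t}{\graph \lexprod K_{\iCov}} \leq 2(t+1)^2\iCov^2\gradY{t}{\graph} + \iCov$.

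It then remains to absorb the additive $\iCov$ into the leading term: whenever $\graph$ has at least one edge we have $\gradY{t}{\graph} \geq \gradY{0}{\graph} \geq \tfrac{1}{2}$, so $\iCov \leq 3\iCov^2(t+1)^2\gradY{t}{\graph}$ and the bound sharpens to the claimed $\gradY{t}{\graph \lexprod K_{\iCov}} \leq 5\iCov^2(t+1)^2\gradY{t}{\graph}$ (the edgeless case being degenerate). The ``in particular'' clause is then immediate: with $\iCov = O(1)$ and $\gradY{t}{\graph} = O(t^k)$ one gets $\gradY{t}{\graph \lexprod K_{\iCov}} = O((t+1)^2 t^k) = O(t^{k+2})$, i.e.\ polynomial expansion of order $k+2$. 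I expect no deep obstacle: the structural identification $\graph \lexprod K_{\iCov} = \ICovGraph{\graph}{\Family}$ is essentially immediate, and the only thing requiring care is the constant bookkeeping — keeping the shallow parameter at $t$ (which hinges on $\Family$ having radius zero) and the harmless $+\iCov$ absorption.
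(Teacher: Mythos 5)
Your proposal is correct and takes essentially the same route as the paper: the paper's proof likewise views each $t$-shallow minor of $\graph \lexprod K_{\iCov}$ as (a subgraph of) the induced packing graph of the $(\iCov,t)$-packing obtained from its clusters, and then applies \lemref{edge-density-shallow-cover} directly, so the shallowness parameter stays at $t$. Your singleton-packing realization of the lexical product and the explicit absorption of the additive $\iCov$ term into the constant $5$ are just details the paper's terse proof leaves implicit.
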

\begin{proof}
  A $t$-shallow minor of $\graph \lexprod K_{\iCov}$ is the induced
  packing graph of the $(\iCov,t)$-packing formed by its
  clusters. Thus, the claimed inequality follows from
  \lemref{edge-density-shallow-cover}.
\end{proof}

\corref{grad-lexical-product} is an exponential improvement over the
best previously known bounds, on the order of
\begin{math}
  \gradY{t}{\graph \lexprod K_{\iCov}}%
  \leq%
  \pth{\Bigl. O\pth{\big. \iCov t \gradY{t}{\graph} } }^{O(t)},
\end{math}
by \Nesetril and Ossona \si{de} Mendez \cite{no-gcbe1-08} (see also
the comments following the proof of Proposition 4.6 in
\cite{no-s-12}).


\subsubsection{Low density objects and $(\iCov,t)$-packing{}s}
\seclab{l:d:o:packing}

\begin{defn}
  For a set of objects $\ObjSet$, a collection of subsets
  \begin{math}
    \Family = \Set{\cluster_i \subseteq \ObjSet}{i=1,\ldots,t}
  \end{math}
  forms a \emphi{$(\iCov,t)$-shallow packing} of $\graph$ if, for all
  $i$, the intersection graph $\IGraph{\cluster_i}$ is $t$-shallow
  (see \defref{t:shallow}), and every object of $\ObjSet$ appears in
  at most $\iCov$ sets of $\Family$.  The \emphi{induced object set}
  $\IObjSet{\ObjSet}{\Family}$ is the collection of objects
  $\Set{\bigcup_{\obj \in \cluster_i} \obj}{\cluster_i \in \Family}$
  formed by taking the union of each cluster in $\Family$.
\end{defn}

\begin{lemma}%
  \lemlab{shallow:cover:objects}%
  Let $\ObjSet$ be a collection of objects with density $\cDensity$ in
  $\Re^d$, and let $\Family$ be an $(\iCov,t)$-shallow packing. Then
  the induced object set $\IObjSet{\ObjSet}{\Family}$ has density
  $O(\iCov \cDensity t^{O(d)})$.
\end{lemma}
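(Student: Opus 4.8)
The plan is to mirror the proof of \lemref{density:shallow:minors}, which is precisely the case $\iCov=1$ (a $(1,t)$-shallow packing is a $t$-shallow minor, with pairwise disjoint clusters). The only genuinely new ingredient is that an object of $\ObjSet$ may now lie in up to $\iCov$ clusters of $\Family$, and I expect this to cost exactly one extra factor of $\iCov$. First I would reduce to a query \emph{ball}: given an arbitrary query object $\obj$ of diameter $D$, pick a point $\cen\in\obj$, so $\obj\subseteq\ballY{\cen}{D}$, and cover $\ballY{\cen}{D}$ by $\dblCd=2^{O(d)}$ balls of radius $D/2$ using the doubling property of $\Re^d$. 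It then suffices to bound, for a ball $\ball=\ballY{\cen}{r}$, the number of objects of $\IObjSet{\ObjSet}{\Family}$ of diameter at least $\diamX{\ball}=2r$ that meet $\ball$; summing over the covering balls and folding the $2^{O(d)}$ factor into $t^{O(d)}$ gives the claimed bound on $\densityX{\IObjSet{\ObjSet}{\Family}}$.

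Next I would fix $\ball=\ballY{\cen}{r}$ and an induced object $\objA\in\IObjSet{\ObjSet}{\Family}$ with $\objA\cap\ball\neq\emptyset$ and $\diamX{\objA}\geq 2r$. It comes from a cluster $\clusterX{\objA}\in\Family$ with $\objA=\bigcup_{\objB\in\clusterX{\objA}}\objB$ whose intersection graph is $t$-shallow, hence connected and of graph diameter at most $2t$. Exactly as in \lemref{density:shallow:minors}: choose $\objB\in\clusterX{\objA}$ meeting $\ball$, and let $\objB'$ be the object of $\clusterX{\objA}$ nearest to $\objB$ (in the intersection-graph metric of $\clusterX{\objA}$) with $\diamX{\objB'}\geq r/t$. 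Such an object exists, for otherwise every object along a shortest path realizing the graph diameter has diameter $<r/t$, forcing $\diamX{\objA}<2t\cdot(r/t)=2r$, a contradiction; moreover all objects strictly preceding $\objB'$ on a shortest path from $\objB$ to $\objB'$ have diameter $<r/t$, that path has at most $2t$ edges, so $\objB'$ lies within distance $<2r$ of $\ball$. Hence $\objB'$ meets $\ballY{\cen}{3r}$ and has $\diamX{\objB'}\geq r/t$; record the representative $\repX{\objA}:=\objB'$.

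The new step is the bookkeeping: since $\repX{\objA}\in\clusterX{\objA}$ and every object of $\ObjSet$ lies in at most $\iCov$ clusters of $\Family$, a fixed object of $\ObjSet$ is the representative of at most $\iCov$ of the induced objects being counted. Therefore the number of induced objects of diameter $\geq 2r$ meeting $\ball$ is at most $\iCov\cdot\cardin{\ObjSetB}$, where $\ObjSetB\subseteq\ObjSet$ is the set of \emph{distinct} representatives, a collection of objects of $\ObjSet$ each of diameter $\geq r/t$ and each meeting $\ballY{\cen}{3r}$. Applying \lemref{density:smaller:objects} to the ball $\ballY{\cen}{3r}$ with $\alpha=1/(6t)$ (so that $2\cdot 3r\cdot\alpha=r/t$) yields $\cardin{\ObjSetB}\leq\cDensity\,\dblCd^{\ceil{\lg(6t)}}=\cDensity\,t^{O(d)}$, using $\dblCd=2^{O(d)}$. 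Combining, $\ball$ meets at most $O(\iCov\,\cDensity\,t^{O(d)})$ large induced objects, and after Step~1 the same bound holds for an arbitrary query object, proving the lemma.

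I do not anticipate a real obstacle: this is essentially a routine extension of \lemref{density:shallow:minors}. The two points requiring care are (i) selecting the representative $\repX{\objA}$ \emph{inside} the cluster $\clusterX{\objA}$, so that the ``at most $\iCov$ clusters per object'' hypothesis is what governs the overcount and produces the single $\iCov$ factor (and nothing worse); and (ii) the standard reduction from an arbitrary query object to query balls, whose $2^{O(d)}$ overhead is harmless and absorbed into $t^{O(d)}$.
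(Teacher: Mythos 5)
Your proof is correct, but it takes a different route than the paper. You re-run the representative argument of \lemref{density:shallow:minors} from scratch: reduce an arbitrary query object to $2^{O(d)}$ query balls, pick for each large induced object a representative of diameter $\geq r/t$ inside its cluster that reaches within distance $2r$ of the ball, bound the distinct representatives by $\cDensity\, t^{O(d)}$ via \lemref{density:smaller:objects}, and then pay a single factor $\iCov$ because a fixed object of $\ObjSet$ can serve as the representative of at most $\iCov$ clusters of $\Family$. The paper instead uses a short black-box reduction: it duplicates each object of $\ObjSet$ according to its multiplicity in $\Family$, observes that the resulting multiset $\ObjSetA$ has density at most $\iCov\cDensity$ and that the clusters of $\Family$ become \emph{disjoint} clusters of $\ObjSetA$, so the induced object set is a $t$-shallow minor of $\ObjSetA$ and \lemref{density:shallow:minors} applies verbatim, giving $O(\iCov\cDensity\, t^{O(d)})$. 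Both arguments rest on the same underlying counting and give the same bound; the paper's version is shorter and more modular, while yours makes explicit exactly where the factor $\iCov$ enters (only in the overcount of representatives) and also spells out the reduction from arbitrary query objects to balls, which the paper leaves implicit. Your handling of the representative's existence and of the parameter $\alpha = 1/(6t)$ in \lemref{density:smaller:objects} matches the original proof, so there is no gap.
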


\begin{proof}
  Consider the multiset of objects
  $\ObjSetA = \bigcup_{\cluster_i \in \Family} \cluster_i$ where each
  object $f \in \ObjSet$ is repeated according to its multiplicity in
  $\Family$. Since each object in $\ObjSet$ appears in $\ObjSetA$ at
  most $\iCov$ times, $\ObjSetA$ has density $\iCov \cDensity$.  Every
  cluster $\cluster \in \Family$ can be interpreted as a new cluster
  $f(\cluster)$ of objects of $\ObjSetA$, where the resulting set of
  clusters $\Family' = \Set{ f(\cluster) }{\cluster \in \Family}$ are
  now disjoint.

  As such, $\IObjSet{\ObjSetA}{\Family'}$ is a $t$-shallow minor of
  $\ObjSetA$. By \lemref{density:shallow:minors}, the graph
  $\IObjSet{\ObjSetA}{\Family'}$ has density
  $O(\iCov \cDensity t^{O(d)})$.
\end{proof}%

\subsubsection{The result}

Shallow packings arise in the analysis of the approximation algorithm
for dominating set, where vertices are clustered together around the
the vertex that dominates them. In this setting, we prefer the
following simple and convenient terminology.

\begin{defn}
  \deflab{flower:head}%
  Given a dominating set $\DomSet = \brc{v_1, \ldots, v_m}$ of
  vertices in a graph $\defGraph$, and a set of vertices
  $\CovSet \subseteq \Vertices$ being dominated by $\DomSet$, we
  generate a sequence of clusters
  $\cluster_1,\dots,\cluster_m \subseteq \DomSet \cup \CovSet$ that
  specifies for every element of $\DomSet$, which elements it covers.

  Initially, we set $\DomSet_0 = \DomSet$ and $\CovSet_0 = \CovSet$.
  In the $i$\th iteration, for $i =1,\ldots,m$, let
  \begin{align*}
    \cluster_i%
    =%
    \brc{v_i} \cup \pth{\big.\pth{\NbrX{v_i} \cap \CovSet_{i-1}}
    \setminus \DomSet_{i-1}},
    \qquad%
    \DomSet_i = \DomSet_{i-1} \setminus \brc{v_i}, \qquad \text{ and
    }%
    \CovSet_i = \CovSet_i \setminus \cluster_i,
  \end{align*}
  where $\NbrX{v_i}$ is the set of vertices adjacent to $v_i$ in
  $\graph$.  Conceptually, $\cluster_i$ induces a star-like graph
  $\graph_i$ over $\cluster_i$, where every vertex of $\cluster_i$ is
  connected to $v_i$. The cluster $\cluster_i$ (and implicitly to
  $\graph_i$) is a \emphi{flower}, where $v_i$ is its \emphi{head}.
  The collection of clusters
  $\FDecomp{\DomSet}{\CovSet} = \brc{ \cluster_1, \ldots, \cluster_m}$
  is the \emphi{flower decomposition} of the given instance.  Note
  that a flower is a $1$-shallow graph, and a flower decomposition is
  a $(1,1)$-shallow packing.
\end{defn}

\begin{theorem} %
  \thmlab{ptas:subset:dom} %
  Let $\defGraph$ be a graph with $n$ vertices and with polynomial
  expansion of order $k$, let $\CovSet, \DomSet\subseteq \Vertices$ be
  two sets of vertices such that $\DomSet$ dominates $\CovSet$, and
  let $\eps > 0$ be fixed. Then, for
  \begin{math}
    \exSize = O\pth{\eps^{-(2k+6)}\log^{2k+5} (1/\eps)},
  \end{math}
  the $\exSize$-local search algorithm computes, in
  $n^{O\pth{\exSize}}$ time, a $(1 + \eps)$-approximation for the
  smallest cardinality subset of $\DomSet$ that dominates $\CovSet$.
\end{theorem}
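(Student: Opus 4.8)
The plan is to analyze the $\exSize$-local search in the style of Mustafa--Ray and \thmref{independent}, but tuned to the covering constraint via flower decompositions. Run the $\exSize$-local search starting from the trivial valid solution $\DomSet$ (which dominates $\CovSet$ by hypothesis); by the discussion in \secref{l:s:alg} it performs $n^{O(\exSize)}$ work and returns an $\exSize$-locally optimal solution $\locSol$, so it suffices to bound $\cardin{\locSol}$ against the size of an optimal solution $\Opt \subseteq \DomSet$, where both $\locSol$ and $\Opt$ dominate $\CovSet$. Form the flower decomposition of $\locSol$ over $\CovSet$ and the flower decomposition of $\Opt$ over $\CovSet$ (see \defref{flower:head}), and let $\Family$ be the union (as a multiset) of these two families, regarded as a collection of subsets of the induced subgraph $\graphA = \GInduced{\locSol \cup \Opt \cup \CovSet}$. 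Each flower is $1$-shallow about its head, and every vertex of $\graphA$ lies in at most one local flower and at most one optimal flower, so $\Family$ is a $(2,1)$-packing of $\graphA$ with $\cardin{\Family} = \cardin{\locSol} + \cardin{\Opt}$. As $\graphA$ is an induced subgraph of $\graph$ it has polynomial expansion of order $k$, so by \lemref{expansion:shallow:cover} (with $t=1$ and $\iCov=2$) the induced packing graph $\graphB = \ICovGraph{\graphA}{\Family}$ has polynomial expansion of order $k+2$. Applying \corref{dev;p:e:l:dense} \itemref{p:e:divisions} to $\graphB$ with $\delta = \eps/8$ yields an $\exSize$-division $\clusters$ of $\graphB$ with total excess $\excessof{\clusters} \le \delta\pth{\cardin{\locSol}+\cardin{\Opt}}$ and $\exSize = O\pth{\delta^{-(2k+6)} \log^{2k+5}(1/\delta)} = O\pth{\eps^{-(2k+6)} \log^{2k+5}(1/\eps)}$, matching the stated bound.

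The heart of the proof is the exchange step. Fix a cluster $\cluster_j$ of $\clusters$. Let $\locSol_j$ be the set of heads of the local flowers lying in $\cluster_j$ that are \emph{interior} vertices of $\clusters$ (i.e., belong to no other cluster of $\clusters$), and let $\Opt_j$ be the set of heads of all optimal flowers lying in $\cluster_j$. Put $\locSol' = \pth{\locSol \setminus \locSol_j} \cup \Opt_j$. Then $\locSol' \subseteq \DomSet$, and it is $\exSize$-close to $\locSol$ since $\cardin{\locSol_j} + \cardin{\Opt_j} \le \cardin{\cluster_j} \le \exSize$. The key claim is that $\locSol'$ still dominates $\CovSet$: take any $c \in \CovSet$; it lies in a unique local flower $\cluster^{L}$ whose head equals or is adjacent to $c$, and in a unique optimal flower $\cluster^{O}$. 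If the head of $\cluster^{L}$ was not removed, $c$ is still dominated. Otherwise $\cluster^{L}$ is interior to $\cluster_j$; since $\cluster^{L}$ and $\cluster^{O}$ share the vertex $c$ they are adjacent in $\graphB$, and the entire neighborhood of an interior vertex of a division lies in the same cluster, so $\cluster^{O}$ also lies in $\cluster_j$ and its head belongs to $\Opt_j \subseteq \locSol'$, again dominating $c$. Hence $\locSol'$ is a valid solution obtained by an $\exSize$-local move, so by $\exSize$-local optimality of $\locSol$ (a minimization) $\cardin{\locSol'} \ge \cardin{\locSol}$; since $\locSol_j \subseteq \locSol$, this forces $\cardin{\Opt_j} \ge \cardin{\locSol_j}$.

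Finally, sum over all clusters of $\clusters$. Each element of $\locSol$ is the head of exactly one local flower, which is either interior (counted once by $\sum_j \cardin{\locSol_j}$) or boundary; the number of boundary local flowers is at most $\excessof{\clusters}$, so $\sum_j \cardin{\locSol_j} \ge \cardin{\locSol} - \excessof{\clusters}$. Likewise each optimal flower is counted by $\sum_j \cardin{\Opt_j}$ once per cluster containing it, i.e.\ $1 + (\text{its excess})$ times, so $\sum_j \cardin{\Opt_j} \le \cardin{\Opt} + \excessof{\clusters}$. Chaining with the per-cluster inequality gives
\begin{math}
  \cardin{\locSol} - \excessof{\clusters} \le \sum_j \cardin{\locSol_j} \le \sum_j \cardin{\Opt_j} \le \cardin{\Opt} + \excessof{\clusters},
\end{math}
hence $\cardin{\locSol} \le \cardin{\Opt} + 2\excessof{\clusters} \le \cardin{\Opt} + 2\delta\pth{\cardin{\locSol}+\cardin{\Opt}}$, which rearranges to $\cardin{\locSol} \le \tfrac{1+2\delta}{1-2\delta}\cardin{\Opt} \le (1+\eps)\cardin{\Opt}$ for $\delta = \eps/8$ and $\eps \le 1$ (the case $\eps > 1$ being trivial). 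I expect the main obstacle to be the domination-preservation claim: it is essential to exchange only the \emph{interior} local flowers of a division cluster, and the whole scheme depends on \lemref{expansion:shallow:cover} guaranteeing that superimposing the two flower decompositions raises the expansion order by only an additive constant, so that $\exSize$ stays polynomial in $1/\eps$.
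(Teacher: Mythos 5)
Your proposal is correct and follows essentially the same route as the paper's proof: superimpose the flower decompositions of $\locSol$ and $\Opt$ to get a $(2,1)$-shallow packing, invoke \lemref{expansion:shallow:cover} so the induced packing graph has polynomial expansion of order $k+2$, take the $\exSize$-division of \corref{dev;p:e:l:dense}, and exchange only the interior local flowers of each cluster, with the same interior-neighborhood argument for preserving domination and the same excess accounting (differing only in immaterial constants such as $\eps/8$ versus $\eps/4$).
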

\begin{proof}
  The algorithm starts with the whole collection $\DomSet$ as the
  local solution, and performs legal local exchanges
  of size $\exSize$ that decrease the size of the local solution by at
  least one until no such exchange is available (see
  \secref{l:s:alg}).

  Let $\Opt \subseteq \DomSet$ and $\locSol \subseteq \DomSet$ be the
  optimal and locally minimal sets dominating $\CovSet$, respectively.
  Let $\optFSet = \FDecomp{\Opt}{\CovSet}$ and
  $\locFSet = \FDecomp{\locSol}{\CovSet}$ be the corresponding flower
  decompositions. In the following, for vertices $\opnt \in \Opt$ and
  $\lpnt \in \locSol$, we use $\optFl{\opnt}$ and $\locFl{\lpnt}$ to
  denote their flower in these decompositions, respectively.

  Let $\graphA = \ICovGraph{\graph}{{\optFSet \cup \locFSet} }$ be the
  induced packing graph of $\Family = \optFSet \cup \locFSet$.  The
  set $\Family$ is a $(2,1)$-shallow cover of $\graph$, and
  \lemref{expansion:shallow:cover} implies that $\graphA$ has
  polynomial expansion of order $k+2$.  By \corref{dev;p:e:l:dense}
  \itemref{p:e:divisions}, $\graphA$ has
  $\exSize = O\pth{(1/\eps)^{2k+6} \log^{2k+5} (1/\eps)}$-division
  \begin{math}
    \clusters = \setof{\cluster_1,\dots,\cluster_m}
  \end{math}
  with a set of boundary vertices $\BVertices$, and total excess
  \begin{math}
    (\eps/4) \cardin{\Family}%
    \leq%
    (\eps/4)\pth{\bigl. \cardin{\optFSet} + \cardin{ \locFSet}} %
    \leq%
    (\eps/2) \cardin{\locSol}.
  \end{math}
  For $i = 1,\dots,m$, let %
  \smallskip
  \begin{compactenum}[\qquad(i)]
  \item
    \begin{math}
      \Opt_i = \Set{\opnt \in \Opt}{ \optFl{\opnt} \in \optFSet \cap
        \cluster_i \bigr.},
    \end{math}

  \item
    $\locSol_i = \Set{\lpnt \in \locSol}{\locFl{\lpnt} \in
      \pth{\locFSet \cap \cluster_i} \setminus \BVertices
      \bigr.}\Bigr.$, and %

  \item $\BVertices_i = \BVertices \cap \cluster_i$.
  \end{compactenum}%

  \smallskip%
  Fix $i$, and consider the set
  $\locSol' = (\locSol \setminus \locSol_i) \cup \Opt_i$.  If a vertex
  $v \in \CovSet$ is not dominated by $\locSol \setminus \locSol_i$,
  then $v \in \locFl{\lpnt} \subseteq \NbrX{\lpnt} \cup \setof{\lpnt}$
  for some $\lpnt \in \locSol_i$, and
  $v \in \optFl{\opnt} \subseteq \NbrX{\opnt} \cup \setof{\opnt}$ for
  some $\opnt \in \Opt$ with $\locFl{\lpnt}$ adjacent to
  $\optFl{\opnt}$ in $\graphA$. The cluster $\locFl{\lpnt}$ is an
  interior vertex of $\cluster_i$, so $\optFl{\opnt}$ must be in the
  cluster $\cluster_i$, and $\opnt \in \Opt_i$. Therefore, the
  alternative solution $\locSol'$ dominates $v$, and overall,
  $\locSol'$ dominates $\CovSet$.

  Since $\locSol$ is $\exSize$-locally minimal, and the exchange size
  is
  \begin{math}
    \cardin{\locSol \SetDiff \locSol'}%
    =%
    \cardin{\locSol_i \cup \Opt_i}%
    \leq%
    \cardin{\cluster_i}%
    \leq%
    \exSize,
  \end{math}
  the new solution $\locSol'$ is at least as large as
  $\locSol$. Expanding
  \begin{math}
    \cardin{\locSol} %
    \leq \cardin{\locSol'}%
    =%
    \cardin{ (\locSol \setminus \locSol_i) \cup \Opt_i}%
    =%
    \cardin{\locSol} - \cardin{\locSol_i} + \cardin{\Opt_i},
  \end{math}
  we have $\cardin{\locSol_i} \leq \cardin{\Opt_i}$. Summed over all
  the clusters $W_i$, we have,
  \begin{align*}
    \cardin{\locSol} %
    \leq %
    \sum_{i=1}^m \pth{\bigl.\cardin{\locSol_i} +
    \cardin{\BVertices_i}} %
    \leq %
    \sum_{i=1}^m \pth{\bigl. \cardin{\Opt_i} +
    \cardin{\BVertices_i}} %
    \leq %
    \cardin{\Opt} + 2 \excessof{\clusters} %
    \leq %
    \cardin{\Opt} + \frac{ \eps}{2} \cardin{\locSol}.
  \end{align*}
  Solving for $\cardin{\locSol}$, we conclude that
  \begin{math}
    \cardin{\locSol} %
    \leq %
    \cardin{\Opt} / (1 - \eps/2) %
    \leq %
    (1+ \eps) \cardin{\Opt},
  \end{math}
  as desired.
\end{proof}

\subsubsection{Extensions -- multi-cover and reach}

One can naturally extend dominating set in the following ways:
\smallskip%
\begin{compactenum}[(A)]
\item \textbf{Demands}: For every $v \in \CovSet$, there is an integer
  $\demandX{v} \geq 0$, which is the \emphi{demand} of $v$; that is,
  $v$ has to be adjacent to at least $\demandX{v}$ vertices in the
  dominating set. In the context of set cover, this is known as the
  multi-cover problem, see \cite{cch-smcpg-12}. Let
  $\MaxDemand = \max_{v \in \CovSet} \demandX{v}$ be the
  \emphi{demand} of the given instance.

\item \textbf{Reach}: Instead of the dominating set being adjacent to
  the vertices that are being covered, for every vertex
  $v \in \CovSet$ one can associate a distance $\reachX{v} \geq 1$ --
  which is the maximum number of hops the dominating vertex can be
  away from $v$ in the given graph. The \emphi{reach} of the given
  instance is
  \begin{math}
    \MaxReach = \max_{v \in \CovSet} \reachX{v}.
  \end{math}
\end{compactenum}
\smallskip%
Thus, a vertex $v$ with demand $\demandX{v}$ and reach $\reachX{v}$,
requires that any dominating set would have $\demandX{v}$ vertices in
edge distance at most $\reachX{v}$ from it.

\begin{lemma} %
  \lemlab{ptas:subset:dom:2} %
  Let $\defGraph$ be a graph with $n$ vertices and with polynomial
  expansion of order $k$, sets $\CovSet\subseteq \Vertices$ and
  $\DomSet \subseteq \Vertices$, such that $\DomSet$ dominates
  $\CovSet$, and let $\eps > 0$ be fixed. Furthermore, assume that for
  each vertex $v \in \CovSet$, there are associated demand and reach,
  where the reach $\MaxReach$ and demand $\MaxDemand$ of the given
  instances are bounded by a constant.

  Then, for $\exSize = O\pth{\eps^{-(2k+6)}\log^{2k+5} (1/\eps)}$, the
  $\exSize$-local search algorithm computes, in $n^{O\pth{\exSize}}$
  time, a $(1 + \eps)$-approximation for the smallest cardinality
  subset of $\DomSet$ that dominates $\CovSet$ under the reach and
  demand constraints.
\end{lemma}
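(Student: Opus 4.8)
The plan is to run the $\exSize$-local search algorithm of \secref{l:s:alg} with $\DomSet$ as the initial feasible solution, and to analyze it by mimicking the proof of \thmref{ptas:subset:dom} almost verbatim, making only two changes: ``domination'' is replaced everywhere by the demand/reach constraint (call $\SetX \subseteq \DomSet$ \emph{feasible} if every $v \in \CovSet$ has at least $\demandX{v}$ vertices of $\SetX$ within $\graph$-distance $\reachX{v}$), and the star-shaped flowers of \defref{flower:head} are replaced by shallow trees of radius $\MaxReach$. Concretely, given a feasible $\DomSet' \subseteq \DomSet$, I would define its flower decomposition $\FDecomp{\DomSet'}{\CovSet}$ by processing $\DomSet' = \brc{v_1,\dots,v_m}$ in order while maintaining a residual demand on $\CovSet$ (initialized to the given demands): when $v_i$ is processed, its flower $\cluster_i$ is the minimal subtree of a depth-$\le\MaxReach$ breadth-first-search tree of $\graph$ rooted at $v_i$ that connects $v_i$ to every $u \in \CovSet$ with $\distG{v_i}{u} \le \reachX{u}$ and positive residual demand, after which the residual demand of each such $u$ is decremented. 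Because $\DomSet'$ is feasible, each $u \in \CovSet$ is eventually assigned to exactly $\demandX{u}$ flowers, and $u$ lies in the flower of $v_i$ precisely when $v_i$ is one of these assigners. Each flower is a connected subgraph of radius at most $\MaxReach$, hence $\MaxReach$-shallow (chords only shrink the radius), so the decomposition is a shallow packing of $\graph$ in the sense of \defref{shallow:packing}.

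The crux is bounding the packing parameter $\iCov$ for $\Family = \optFSet \cup \locFSet$, where $\optFSet = \FDecomp{\Opt}{\CovSet}$, $\locFSet = \FDecomp{\locSol}{\CovSet}$, and $\Opt$, $\locSol$ are the optimal and $\exSize$-locally optimal feasible solutions. A vertex $w$ occurs as a head in at most one flower of each decomposition, and as an assigned (leaf) vertex in at most $\demandX{w}\le\MaxDemand$ of them; the interesting case is when $w$ occurs as an internal tree vertex. Suppose $w$ is internal in the flowers with heads $l_1,\dots,l_p$, in processing order, where the flower of $l_i$ routes to an assigned covered vertex $u_i$ through $w$; since breadth-first-search tree paths are shortest paths, $\distG{l_i}{w} + \distG{w}{u_i} = \distG{l_i}{u_i} \le \reachX{u_i} \le \MaxReach$. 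Fix $j$ and take any $i<j$: if $\distG{l_i}{u_j} \le \reachX{u_j}$ then when $l_i$ was processed $u_j$ was within reach and still had positive residual demand (it is still unsatisfied when $l_j$ is processed), so $l_i$ is one of the at most $\demandX{u_j}-1 \le \MaxDemand-1$ assigners of $u_j$ preceding $l_j$; otherwise $\distG{l_i}{u_j} > \reachX{u_j} \ge \distG{l_j}{w} + \distG{w}{u_j}$, which with $\distG{l_i}{u_j} \le \distG{l_i}{w} + \distG{w}{u_j}$ forces $\distG{l_i}{w} > \distG{l_j}{w}$. Hence for each $j$ at most $\MaxDemand-1$ earlier indices $i$ satisfy $\distG{l_i}{w} \le \distG{l_j}{w}$; since $\distG{l_i}{w} \in \brc{1,\dots,\MaxReach-1}$ for internal vertices, a short counting argument shows at most $\MaxDemand$ of the $l_i$ can share a single value of $\distG{\cdot}{w}$, so $p \le (\MaxReach-1)\MaxDemand$. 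Thus a vertex lies in $O(\MaxReach\MaxDemand)$ flowers of one decomposition, and $\iCov = O(\MaxReach\MaxDemand) = O(1)$ for $\Family$.

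With this in hand the rest is a transcription of \thmref{ptas:subset:dom}. By \lemref{expansion:shallow:cover} the induced packing graph $\graphA = \icgA$ has polynomial expansion of order $k+2$, so by \corref{dev;p:e:l:dense}\itemref{p:e:divisions} it has an $\exSize$-division $\clusters$ with $\exSize = O\pth{\eps^{-(2k+6)}\log^{2k+5}(1/\eps)}$ and total excess at most $(\eps/4)\cardin{\Family}$. For a cluster $\cluster_i$ let $\locSol_i$ be the heads whose local flower is interior to $\cluster_i$ and $\Opt_i$ the heads whose optimal flower lies in $\cluster_i$, and consider the exchange $\locSol' = \pth{\locSol \setminus \locSol_i} \cup \Opt_i$, which touches at most $\cardin{\cluster_i} \le \exSize$ vertices. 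To see $\locSol'$ stays feasible, take $v \in \CovSet$ and suppose fewer than $\demandX{v}$ vertices of $\locSol \setminus \locSol_i$ lie within reach of $v$; since the $\demandX{v}$ heads whose local flower contains $v$ all lie in $\locSol$ within reach of $v$, one of them, $l^\ast$, must lie in $\locSol_i$, so $v \in \locFl{l^\ast}$ and $\locFl{l^\ast}$ is an interior vertex of $\cluster_i$. Every optimal flower $\optFl{o}$ containing $v$ shares $v$ with $\locFl{l^\ast}$, hence is adjacent to it in $\graphA$ and therefore lies in $\cluster_i$, so $o \in \Opt_i$; as there are $\demandX{v}$ such $o$, each within reach of $v$, $\locSol'$ covers $v$ at least $\demandX{v}$ times. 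Local minimality of $\locSol$ then gives $\cardin{\locSol_i} \le \cardin{\Opt_i}$ for each $i$, and summing over $i$ while charging the boundary flowers to $\excessof{\clusters}$ exactly as in \thmref{ptas:subset:dom} yields $\cardin{\locSol} \le (1+\eps)\cardin{\Opt}$. The local search makes at most $n$ improving steps and each step enumerates $n^{O(\exSize)}$ candidate exchanges, feasibility being testable in polynomial time by breadth-first search, so the running time is $n^{O(\exSize)}$.

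I expect the main obstacle to be exactly the multiplicity bound $\iCov = O(\MaxReach\MaxDemand)$: without the demand-batching built into the flower decomposition (and the monotonicity of $\distG{l_i}{w}$ that it forces), a single vertex could in principle lie on the routing paths of unboundedly many flowers, which would blow up the expansion of $\graphA$ and destroy the division-size bound. Everything else is routine bookkeeping on top of the $\MaxReach=\MaxDemand=1$ argument, modulo the standard care needed when $\DomSet$, $\CovSet$, $\Opt$, and $\locSol$ overlap.
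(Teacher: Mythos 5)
Your proposal is correct, and its overall skeleton — build a flower decomposition that is a constant-multiplicity $\MaxReach$-shallow packing, invoke \lemref{expansion:shallow:cover} to get polynomial expansion of order $k+2$ for the induced packing graph, take an $\exSize$-division via \corref{dev;p:e:l:dense}, and run the interior/boundary exchange argument of \thmref{ptas:subset:dom} — is exactly the paper's. The one place where you genuinely diverge is the step you yourself flagged as the crux: how to bound the number of flowers containing a fixed vertex. The paper assigns each $u \in \CovSet$ to its $\demandX{u}$ \emph{nearest} heads (shortest-path distance, ties broken by a fixed global order $\prec$), takes the flower of a head to be the minimal BFS subtree spanning its assigned vertices, and then shows by a path-swapping argument that a vertex $z$ can lie in the flower of $v$ only if $v$ is among the $\MaxDemand$ nearest heads to $z$; this gives multiplicity at most $\MaxDemand$ per decomposition, so $\optFSet \cup \locFSet$ is a $(2\MaxDemand,\MaxReach)$-shallow packing. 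You instead assign greedily in a fixed processing order with residual demands, and bound the multiplicity by the distance-monotonicity/counting argument at the shared internal vertex, getting $O(\MaxDemand\MaxReach)$ per decomposition. Both bounds are $O(1)$ under the hypotheses, which is all that \lemref{expansion:shallow:cover} and the division bound use, so the stated $\exSize$ and approximation guarantee follow either way. The paper's nearest-neighbor assignment buys a multiplicity bound independent of the reach (and a cleaner packing parameter), at the price of the tie-breaking bookkeeping; your greedy assignment avoids tie-breaking and is arguably more elementary, at the price of a (harmless) extra factor of $\MaxReach$ in $\iCov$. The remainder of your argument — feasibility of the exchange via interiority of the local flower in its cluster, the summation charging boundary flowers to the excess, and the running-time accounting — matches the paper's proof essentially verbatim, with only the minor caveat that a vertex that is both assigned and internal in the same flower should be counted in just one of your buckets, which does not affect the $O(\MaxDemand\MaxReach)$ bound.
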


\begin{proof}
  Let $\prec$ be an arbitrary ordering on the vertices of $\graph$.
  For a set of vertices $\SetX \subseteq \Vertices$ and a vertex
  $z \in \Vertices$, let $\nnK{k}{z}{\SetX}$ be the $k$ closest
  vertices to $z$ in $\SetX$, with respect to the length of the
  shortest path in $\graph$, and resolving ties by $\prec$. The
  ordering $\prec$ ensures that $\nnK{k}{z}{\SetX}$ is uniquely
  defined for any vertex in the graph.

  In the following argument, fix a set $\SetX \subseteq \DomSet$ that
  dominates $\CovSet$ and complies with the given constraints, and
  assign every vertex of $u \in \CovSet$ to each of the vertices of
  $\nnK{\demandX{u}}{u}{\SetX}$.  For a vertex $v \in \SetX$, let
  $\ServeX{v}$ be the set of vertices assigned to it.  For each vertex
  $v \in \SetX$, let $\TreeX{v}$ be the minimal subtree of the {BFS}
  tree rooted at $v$ that includes all the vertices of
  $\ServeX{v} \cup \brc{v}$.  The \emph{flower}
  $\cluster_v = \VerticesX{\TreeX{v}}$ is $\MaxReach$-shallow in
  $\graph$.  Let
  \begin{math}
    \Family = \FDecomp{\SetX}{\CovSet} = \Set{ \cluster_v }{v \in
      \SetX}
  \end{math}
  be the resulting \emph{flower decomposition} of $X$.

  We claim that a vertex $z$ of $\graph$ is covered at most
  $\MaxDemand$ times by the flowers of $\Family$. More precisely, we
  prove that $z$ is covered by a flower $\cluster_v$ only if
  $v \in \nnK{\MaxDemand}{z}{\SetA}$. For the sake of contradiction,
  suppose $z \in \cluster_v$ and that
  $v \notin \nnK{\MaxDemand}{z}{\SetA}$. Then $z$ is not assigned to
  $v$, so there must be a vertex $u$ assigned to $z$ and an associated
  shortest path
  \begin{math}
    \trA_{uv} = \trA_{uz} | \trA_{zv}
  \end{math}
  from $u$ to $v$ through $z$, where $\trA_{uz}$ is the subpath from
  $u$ to $z$ and $\trA_{zu}$ is the subpath from $z$ to $v$. Since
  $v \in \nnK{\MaxDemand}{u}{\SetA} \setminus
  \nnK{\MaxDemand}{z}{\SetA}$, and both sets
  $\nnK{\MaxDemand}{u}{\SetA}$ and $\nnK{\MaxDemand}{z}{\SetA}$ have
  the same cardinality, there exists another vertex
  $v' \in \nnK{\MaxDemand}{z}{\SetA} \setminus
  \nnK{\MaxDemand}{u}{\SetA}$.  Let $\trB_{zv'}$ be the shortest-path
  from $z$ to $v'$. By construction of $\nnK{\MaxDemand}{z}{\SetA}$,
  either
  \begin{math}
    \lenX{\trB_{zv'}} < \lenX{\trA_{zv}}, %
  \end{math}
  or
  \begin{math}
    \lenX{\trB_{zv'}} = \lenX{\trA_{zv}} %
  \end{math}
  and $v' \prec v$. This implies that either
  \begin{math}
    \lenX{\trA_{uz} | \trB_{zv'}} %
    < \lenX{\trA_{uz} | \trA_{zv}}, %
  \end{math}
  or $v' \prec v$ and
  \begin{math}
    \lenX{\trA_{uz} | \trB_{zv'}} %
    = %
    \lenX{\trA_{uz} | \trA_{zv}}, %
  \end{math}
  where $|$ denotes concatenation of paths.  In any case, if ties are
  broken by $\prec$, then $v'$ is closer to $u$ than $v$ is, a
  contradiction to the premise that $v \in \nnK{\MaxDemand}{u}{\SetA}$
  and $v' \notin \nnK{\MaxDemand}{u}{\SetA}$. Thus, if $z$ is in a
  flower $\cluster_v$, then $v \in \nnK{\MaxDemand}{z}{\SetA}$.

  Now, consider the local solution $\locSol$ and the optimal solution
  $\Opt$.  Let $\optFSet = \FDecomp{\Opt}{\CovSet}$ and
  $\locFSet = \FDecomp{\locSol}{\CovSet}$ be the flower decompositions
  of the local and optimal solutions, respectively. Each flower
  decomposition includes an element at most $\MaxDemand$ times, so the
  combined collection $\Family = \optFSet \cup \locFSet$ is a
  $\bigl(2\MaxDemand, \MaxReach \bigr)$-shallow packing. By
  \lemref{expansion:shallow:cover}, the induced packing graph
  $\graphA = \ICovGraph{\graph}{\Family}$ has polynomial expansion of
  order $k+2$.  We now follow the argument used in the proof of
  \thmref{ptas:subset:dom}, providing the details for the sake of
  completeness.

  Let
  \begin{math}
    \exSize = O\pth{\eps^{-2k+6}\log^{2k+5} (1/\eps)}.
  \end{math}
  There is a $\exSize$-division of $\graphA$ into clusters
  $\clusterA_1, \ldots, \clusterA_m \subseteq \Family$, with
  $\bdDiv \subseteq \Family$ boundary vertices and total excess
  $\cardin{\bdDiv}\leq (\eps/4) \cardin{\Family}$.  For
  $i = 1,\dots,m$, let %
  \smallskip
  \begin{compactenum}[\qquad(i)]
  \item $\optFSet_i = \optFSet \cap \clusterA_i$,

  \item
    $\locFSet_i = \pth{\locFSet \cap \clusterA_i} \setminus \bdDiv$,
    and %

  \item $\bdDiv_i = \bdDiv \cap \clusterA_i$.
  \end{compactenum}%
  Fix $i$, and consider the cover
  $\locFSet' = (\locFSet \setminus \locFSet_i) \cup \optFSet_i$.
  Consider a vertex $v \in \Vertices$ such that there is a flower in
  $\locFSet \setminus \locFSet'$ that covers it (i.e., the vertex
  ``lost'' coverage in this potential exchange).  This implies that
  $v$ must be covered by a flower $\flower \in \locFSet_i$; that is,
  by a flower that corresponds to a vertex of $\graphA$ that is
  internal to $\clusterA_i$.  Any flower $\flower' \in \Family$ that
  covers $v$ is adjacent to $\flower$ in $\graphA$, by the definition
  of $\graphA$ and as $\flower$ and $\flower'$ share a vertex.  As
  $\flower$ is internal to $\clusterA_i$, all the flowers of $\Family$
  that cover $v$ are in $\clusterA_i$, and in particular, all the
  flowers covering $v$ in the optimal solution belong to
  $\optFSet_i$. Thus, the coverage provided by $\locFSet'$ meets the
  demand and reach requirements of $v$. The rest of the argument now
  follows the proof of \thmref{ptas:subset:dom}.
\end{proof}

\subsubsection{Extension: Connected dominating set}
\seclab{connected:dominating:set}%

The algorithms of \thmref{ptas:subset:dom} and
\lemref{ptas:subset:dom:2} can be extended to handle the additional
constraint that the computed dominating set is also connected.  In
this setting, the local search algorithm only considers beneficial
exchanges that result in a connected dominating set.


\begin{lemma}
  \lemlab{ptas:subset:dom:3}%
  Let $\defGraph$ be a graph with $n$ vertices and polynomial
  expansion of order $k$, and let $\DomSet \subseteq \Vertices$ be a
  connected dominating set. For each vertex $v \in \Vertices$, let
  $\demandX{v} \geq 1$ be its associated demands, and let
  $\MaxDemand = \max_{v \in \Vertices} \demandX{v}$ be bounded by a
  constant (here, the dominating set has to dominate all the vertices
  in the graph). Then, for
  $\exSize = O\pth{\eps^{-(2k+6)}\log^{2k+5} (1/\eps)}$, the
  $\exSize$-local search algorithm computes, in $n^{O\pth{\exSize}}$
  time, a $(1+\eps)$-approximation for the smallest cardinality subset
  of $\DomSet$ that is \emph{connected} and dominates $\Vertices$
  under the demand constraints.
\end{lemma}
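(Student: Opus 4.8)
The plan is to run the connectivity-constrained $\exSize$-local search: start from $\DomSet$ (which is connected), and repeatedly apply size-$\exSize$ exchanges that keep the current set a connected dominating set meeting all the demands while strictly decreasing its cardinality, stopping at a locally minimal such $\locSol$. The analysis would then imitate the proof of \lemref{ptas:subset:dom:2}, touching only the places where connectivity matters. Fix an optimal connected dominating set $\Opt \subseteq \DomSet$ meeting the demands, form the flower decompositions $\optFSet = \FDecomp{\Opt}{\Vertices}$ and $\locFSet = \FDecomp{\locSol}{\Vertices}$, and note that $\Family = \optFSet \cup \locFSet$ is a $\pth{2\MaxDemand,1}$-shallow packing, so by \lemref{expansion:shallow:cover} the induced packing graph $\graphA = \ICovGraph{\graph}{\Family}$ has polynomial expansion of order $k+2$; by \corref{dev;p:e:l:dense} \itemref{p:e:divisions} it has a $\exSize$-division $\clusters = \setof{\clusterA_1,\dots,\clusterA_m}$ with boundary set $\bdDiv$ and total excess $\excessof{\clusters} \le (\eps/8)\cardin{\Family}$.

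The new ingredient is to carry the connectivity of the two solutions into $\graphA$. Since $\Opt$ is connected, a spanning tree of $\GInduced{\Opt}$ is made of $\graph$-edges between heads of flowers, so $\optFSet$ spans a connected subgraph of $\graphA$, and likewise $\locFSet$; I would also construct the division by the natural recursive separator process so that each cluster $\clusterA_i$ is connected in $\graphA$. The argument then goes cluster by cluster as before: for fixed $i$, let $\locFSet_i = \pth{\locFSet\cap\clusterA_i}\setminus\bdDiv$ and $\optFSet_i = \optFSet\cap\clusterA_i$, and consider replacing the heads of $\locFSet_i$ by the heads of $\optFSet_i$. The domination (with demands) of the result is unchanged from \lemref{ptas:subset:dom:2}, since any vertex that loses coverage is covered by a flower interior to $\clusterA_i$, all of whose $\graphA$-neighbours — in particular all optimal flowers covering that vertex — lie in $\clusterA_i$. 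What can fail is connectivity, and here the structure helps: because $\locFSet$ is connected and $\locFSet_i$ consists only of flowers interior to $\clusterA_i$, every connected component of $\locFSet\setminus\locFSet_i$ contains a boundary flower of $\clusterA_i$, and symmetrically every component of $\optFSet_i$ contains a boundary flower of $\clusterA_i$; hence, after the swap, the solution restricted to $\clusterA_i$ falls into at most $2\bSize_i$ pieces, each anchored at a boundary flower, where $\bSize_i = \cardin{\bdDiv\cap\clusterA_i}$. I would then enlarge the exchange by a reconnection set $S_i$ of heads that stitches these anchored pieces into a single connected dominating set; using the connectivity of $\clusterA_i$ and of $\optFSet$ this should be doable with $\cardin{S_i} = O(\bSize_i)$ heads drawn from $\DomSet$, so the exchange still has size $O(\exSize)$ and the local search must have considered it.

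Local optimality of $\locSol$ then gives $\cardin{\locFSet_i} \le \cardin{\optFSet_i} + \cardin{S_i}$ for each $i$; summing over $i$, using $\sum_i \bSize_i = O(\excessof{\clusters})$, $\sum_i \cardin{\optFSet_i} \le \cardin{\Opt} + O(\excessof{\clusters})$, and $\excessof{\clusters}\le(\eps/8)\cardin{\Family}\le(\eps/4)\cardin{\locSol}$, yields $\cardin{\locSol}\le\cardin{\Opt}+\eps\cardin{\locSol}$, hence $\cardin{\locSol}\le(1+O(\eps))\cardin{\Opt}$; rescaling $\eps$ (and absorbing the constant into the choice of $\exSize$, which together with the running time is inherited from \lemref{ptas:subset:dom:2}) finishes the proof. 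The main obstacle is exactly the construction of the reconnection sets: one must show that restoring connectivity after the swap costs only $O(\bSize_i)$ heads per cluster — not $O(\exSize)$ — so that, summed over all $m$ clusters, the connectors add merely $O(\excessof{\clusters}) = O(\eps n)$ vertices rather than $\Omega(n)$, and one must also verify that the pieces glued by these connectors form a genuinely connected (not merely distance-bounded) dominating subset of $\DomSet$.
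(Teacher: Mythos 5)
You have the right skeleton (flower decompositions, the $(2\MaxDemand,1)$-packing, the $\exSize$-division, the per-cluster swap, and the observation that every component of the swapped solution is anchored at a boundary flower, so there are at most $O(\bSize_i)$ pieces), but the step you yourself flag as the main obstacle is exactly the missing idea, and the mechanisms you propose to fill it do not work. Stitching the pieces back together ``using the connectivity of $\clusterA_i$ and of $\optFSet$'' gives no constant bound on the cost of a single merge: a path of flowers inside $\clusterA_i$ (or through $\optFSet$) between two pieces can have length up to $\exSize$ or more, so adding its heads could cost $\Omega(\exSize)$ vertices per merge and $\Omega(\exSize\,\bSize_i)$ per cluster, which destroys the charging. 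Requiring the division's clusters to be connected in $\graphA$ is also an extra, unsupported assumption on \lemref{divisions:small:excess}, and in any case it does not help for the reason just given.

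The paper closes this gap using precisely the hypothesis your reconnection step never touches: $\demandX{v}\geq 1$ for \emph{every} vertex of the graph, so the swapped set $\locSol'$ dominates every vertex, including vertices of $\DomSet$ that are not in $\locSol'$. Take $x,y\in\locSol'$ in different components realizing the closest pair of components, and a shortest path between them inside $\DomSet$ (which exists since $\DomSet$ is connected). If this path had more than $4$ vertices, some interior vertex $v$ would be adjacent to neither $x$ nor $y$; since $v$ is dominated, it has a neighbor $z\in\locSol'$, and $z$ lies in a different component from $x$ or from $y$, yielding a strictly shorter inter-component connection --- a contradiction. Hence any two components can be merged by adding at most $2$ vertices of $\DomSet$, so all components are merged with at most $2\cardin{\BVertices_i}\leq 2\exSize$ added vertices, the exchange has size at most $3\exSize$, and local optimality gives $\cardin{\locSol_i}\leq\cardin{\Opt_i}+2\cardin{\BVertices_i}$, which sums to the claim exactly as you intended. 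So the accounting at the end of your proposal is fine, but without this demand-based ``short path'' argument the bound $\cardin{S_i}=O(\bSize_i)$ is unproven and the proof is incomplete.
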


\begin{proof}
  We extend the notations and argument used in
  \thmref{ptas:subset:dom}. To recap, let $\Opt \subseteq \DomSet$ and
  $\locSol \subseteq \DomSet$ be the optimal and locally minimum sets
  dominating $\CovSet$, respectively.  Let
  $\optFSet = \FDecomp{\Opt}{\CovSet}$ and
  $\locFSet = \FDecomp{\locSol}{\CovSet}$ be the corresponding flower
  decompositions (see \defref{flower:head}). In the following, for
  vertices $\opnt \in \Opt$ and $\lpnt \in \locSol$, we use
  $\optFl{\opnt}$ and $\locFl{\lpnt}$ to denote their flower in these
  decompositions, respectively.

  Let $\graphA = \ICovGraph{\graph}{{\optFSet \cup \locFSet} }$ be the
  induced packing graph of $\Family = \optFSet \cup \locFSet$. As
  before, we can apply \corref{dev;p:e:l:dense}
  \itemref{p:e:divisions} to $\graphA$ to generate a
  $\exSize = O\pth{(1/\eps)^{2k+6} \log^{2k+5} (1/\eps)}$-division
  \begin{math}
    \clusters = \setof{\cluster_1,\dots,\cluster_m}
  \end{math}
  with a set of boundary vertices $\BVertices$, and total excess
  \begin{math}
    (\eps/4) \cardin{\Family}%
    \leq%
    (\eps/4)\pth{\bigl. \cardin{\optFSet} + \cardin{ \locFSet}} %
    \leq%
    (\eps/2) \cardin{\locSol}.
  \end{math}
  For $i = 1,\dots,m$, let %
  \smallskip
  \begin{compactenum}[\qquad(i)]
  \item
    \begin{math}
      \Opt_i = \Set{\opnt \in \Opt}{ \optFl{\opnt} \in \optFSet \cap
        \cluster_i \bigr.},
    \end{math}

  \item
    $\locSol_i = \Set{\lpnt \in \locSol}{\locFl{\lpnt} \in
      \pth{\locFSet \cap \cluster_i} \setminus \BVertices
      \bigr.}\Bigr.$, and %

  \item $\BVertices_i = \BVertices \cap \cluster_i$.
  \end{compactenum}%
  Fix $i$, and consider the set
  $\locSol' = (\locSol \setminus \locSol_i) \cup \Opt_i$. By the exact
  same argument as \thmref{ptas:subset:dom}, $\locSol'$ is a
  dominating set. However, $\locSol'$ may not necessarily be
  connected.

  Let $\BVertices_i = \headsX{\bdDiv_i}$ be the set of head vertices
  of the boundary flowers of the $i$\th cluster.  Because the removed
  patch $\locSol_i$ is only connected to the rest of $\locSol$ via the
  boundary vertices $\BVertices_i$, each component of $\locSol$
  contains at least one boundary vertex in
  $\BVertices_i \cap \locSol$. Similarly, each component of
  $\optSet_i$ contains at least one boundary vertex in
  $\BVertices_i$. Together, every component of $\locSol'$ contains at
  least one vertex in $\BVertices_i$, so $\locSol'$ has at most
  $\cardin{\BVertices_i} \leq \exSize$ components.

  Consider the shortest path $\trA_{xy}$ within $\DomSet$ between any
  two vertices $x,y \in \locSol'$ that are in separate components of
  $\locSol'$. By minimality of $\trA_{xy}$, the interior vertices of
  $\trA_{xy}$ are not in $\locset'$.  If $\trA_{xy}$ has more than 4
  vertices, then there exists an intermediate vertex $v \in \trA_{xy}$
  that is adjacent to neither $x$ nor $y$.  Write
  $\trA_{xy} = \trA_{xv}|\trA_{vy}$, where $\trA_{xv}$ is the subpath
  from $x$ to $v$ and $\trA_{vy}$ is the subpath from $v$ to $y$. Both
  subpaths $\trA_{xv}$ and $\trA_{vy}$ contain at least two
  edges. Since $\demandX{v} \geq 1$, $v$ is adjacent to some vertex
  $z \in \locset'$. Since $x$ and $y$ lie in different in components,
  $z$ lies in a different component from either $x$ or $y$.  If $x$
  and $z$ lie in different components, then the path consisting of
  $\trA_{xv}$ followed by the edge from $v$ to $z$ is a shorter path
  than $\trA_{xy}$, a contradiction. A similar contradiction arises if
  $z$ and $y$ lies in different components. It follows, by
  contradiction, that $\trA_{xy}$ has at most 4 vertices, all of which
  lie in $\DomSet$.  By adding the entire path $\trA_{xy}$ to
  $\locSol'$, we can connect these two components by adding at most
  $2$ vertices from $D$.

  By repeatedly connecting the closest pair of components of
  $\locSol'$ like that, we can augment $\locSol'$ to a connected
  dominating set $\locSol''$ while adding at most
  $2 \cardin{\BVertices_i} \leq 2 \exSize$ vertices. If we expand our
  search size to $\exSize' = 3 \exSize$, then $\locSol''$ is a
  connected dominating set with
  $\cardin{\locSol'' \SetDiff \locSol} \leq \exSize'$, and the local
  optimality of $\locSol$ implies that
  \begin{math}
    \cardin{\locSol_i} %
    \leq %
    \cardin{\optset_i} + 2 \cardin{\BVertices_i}.
  \end{math}
  As in the previous proofs, summing this inequality over all $i$
  implies the claim.
\end{proof}

\lemref{ptas:subset:dom:3} extends to constantly bounded reach with an
added assumption.
\begin{lemma}
  \lemlab{ptas:subset:dom:4} %
  Let $\defGraph$ be a graph with $n$ vertices and with polynomial
  expansion of order $k$, and let $\DomSet \subseteq \Vertices$ be a
  given set. Assume that
  \begin{compactenum}[\quad(i)]
  \item for each vertex $v \in \Vertices$, there are associated demand
    $\demandX{v} \geq 1$ and reach $\reachX{v}$ constraints,
  \item $\MaxDemand = \max_v \demandX{v} = O(1)$ and
    $\MaxReach = \max_{v} \reachX{v} = O(1)$,
  \item the set $\DomSet$ is a valid dominating set complying with the
    demand and reach constraints,
  \item for any two vertices $u,v \in \DomSet$, the shortest path (in
    the number of edges) in $\graph$ between $u$ and $v$ is contained
    in $\GInduced{\DomSet}$.
  \end{compactenum}
  Then, for $\exSize = O\pth{\eps^{-(2k+6)}\log^{2k+5} (1/\eps)}$, the
  $\exSize$-local search algorithm computes, in $n^{O\pth{\exSize}}$
  time, a $(1 + \eps)$-approximation for the smallest cardinality
  subset of $\DomSet$ that is \emph{connected} and dominates
  $\Vertices$ under the reach and demand constraints.
\end{lemma}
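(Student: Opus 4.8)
The plan is to splice together the reach-and-demand argument of \lemref{ptas:subset:dom:2} with the connectivity-augmentation argument of \lemref{ptas:subset:dom:3}. The only genuinely new point is that the paths needed to reconnect a disconnected candidate solution now have length $O(\MaxReach)$ rather than $O(1)$, and hypothesis (iv) is precisely what keeps those paths --- and the vertices they add --- inside $\DomSet$.

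First I would reuse the scaffolding of \lemref{ptas:subset:dom:2} essentially verbatim. Let $\Opt,\locSol \subseteq \DomSet$ be an optimal and a locally minimal connected solution satisfying the demand and reach constraints, where the local search only performs connectivity-preserving $\exSize'$-exchanges and starts from $\DomSet$ itself (legal by hypothesis (iii)). Form the flower decompositions $\optFSet = \FDecomp{\Opt}{\Vertices}$ and $\locFSet = \FDecomp{\locSol}{\Vertices}$ from BFS trees, so each flower is $\MaxReach$-shallow and every vertex lies in at most $\MaxDemand$ flowers of each decomposition; hence $\Family = \optFSet \cup \locFSet$ is a $(2\MaxDemand,\MaxReach)$-shallow packing, and since $\MaxDemand,\MaxReach = O(1)$, \lemref{expansion:shallow:cover} gives that $\graphA = \ICovGraph{\graph}{\Family}$ has polynomial expansion of order $k+2$. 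By \corref{dev;p:e:l:dense} \itemref{p:e:divisions}, take an $\exSize$-division $\clusters = \setof{\cluster_1,\dots,\cluster_m}$ of $\graphA$ with boundary set $\BVertices$, $\exSize = O\pth{\eps^{-(2k+6)}\log^{2k+5}(1/\eps)}$, and total excess at most $\eps\cardin{\Family}/c_0$ for an absolute constant $c_0 = \Theta(\MaxReach)$ fixed at the end (shrinking the excess this way only multiplies $\exSize$ by a constant). Define $\Opt_i$, $\locSol_i$, and $\BVertices_i = \BVertices\cap\cluster_i$ exactly as in \thmref{ptas:subset:dom}, and consider the candidate exchange $\locSol' = (\locSol\setminus\locSol_i)\cup\Opt_i$.

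By the reach-and-demand argument already carried out in \lemref{ptas:subset:dom:2} (which uses only that the removed flowers are internal to $\cluster_i$, so every flower covering a vertex that lost coverage lies in $\cluster_i$), the set $\locSol'$ still dominates $\Vertices$ under all demand and reach constraints. As in \lemref{ptas:subset:dom:3}, $\locSol'$ need not be connected; but since the removed patch $\locSol_i$ attaches to the rest of $\locSol$ only through heads of boundary flowers of $\cluster_i$, and likewise for $\Opt_i$, every component of $\GInduced{\locSol'}$ contains a head of a boundary flower of $\cluster_i$, so $\locSol'$ has at most $\cardin{\BVertices_i}\le\exSize$ components. The key step is to reconnect them cheaply: I would repeatedly pick a pair $x,y\in\locSol'$ in distinct components of $\locSol'$ minimizing the $\graph$-distance $\distG{x}{y}$; by hypothesis (iv) this distance is realized by a shortest path $\trA_{xy}$ lying inside $\GInduced{\DomSet}$. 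I claim $\trA_{xy}$ has at most $2\MaxReach+2$ vertices. Otherwise it has an interior vertex $v$ (necessarily in $\DomSet$) with $\distG{x}{v}>\MaxReach$ and $\distG{v}{y}>\MaxReach$; since $\demandX{v}\ge 1$, some $z\in\locSol'$ covers $v$ within reach, and by hypothesis (iv) again there is a path of length $\le\MaxReach$ from $v$ to $z$ inside $\GInduced{\DomSet}$. The vertex $z$ lies in some component of $\locSol'$, hence in one distinct from that of $x$ or of $y$; concatenating the appropriate half of $\trA_{xy}$ with this short $v$--$z$ path gives a walk inside $\GInduced{\DomSet}$, hence by (iv) a $\graph$-distance, between two vertices in distinct components that is strictly below $\distG{x}{y}$ --- contradicting the choice of $x,y$. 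Thus reconnecting one pair of components adds at most $2\MaxReach$ vertices, all in $\DomSet$; repeating this at most $\cardin{\BVertices_i}$ times produces a connected $\locSol''\subseteq\DomSet$ with $\cardin{\locSol''\SetDiff\locSol}\le\cardin{\cluster_i}+2\MaxReach\cardin{\BVertices_i}\le(2\MaxReach+1)\exSize$, and we set $\exSize' = (2\MaxReach+1)\exSize = O(\exSize)$.

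Adding vertices cannot violate the demand or reach constraints, so $\locSol''$ is a legal connected competitor that is $\exSize'$-close to $\locSol$; local optimality of $\locSol$ therefore yields $\cardin{\locSol_i}\le\cardin{\Opt_i}+2\MaxReach\cardin{\BVertices_i}$. Summing over $i$ exactly as in \thmref{ptas:subset:dom} (using $\sum_i\cardin{\BVertices_i}\le 2\excessof{\clusters}$, $\sum_i\cardin{\Opt_i}\le\cardin{\Opt}$, $\cardin{\Family}=\cardin{\Opt}+\cardin{\locSol}\le 2\cardin{\locSol}$, and choosing $c_0=\Theta(\MaxReach)$ so that the $O(\MaxReach)\cdot\excessof{\clusters}$ slack is at most $(\eps/2)\cardin{\locSol}$), we obtain $\cardin{\locSol}\le\cardin{\Opt}+(\eps/2)\cardin{\locSol}$, i.e.\ $\cardin{\locSol}\le(1+\eps)\cardin{\Opt}$, in time $n^{O(\exSize')}=n^{O(\exSize)}$. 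I expect the reconnection-length bound to be the only delicate point --- in particular the step where $v$ must be dominated \emph{within $\GInduced{\DomSet}$-distance} $\MaxReach$, which fails in general and is exactly why hypothesis (iv) is imposed; the rest is the bookkeeping of \thmref{ptas:subset:dom}, \lemref{ptas:subset:dom:2}, and \lemref{ptas:subset:dom:3}.
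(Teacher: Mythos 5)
Your proposal is correct and follows essentially the same route as the paper, which simply states that the proof of \lemref{ptas:subset:dom:3} goes through with the inter-component shortest paths now having at most $2(\MaxReach+1)$ vertices, hypothesis (iv) keeping those paths inside $\DomSet$, and the exchange size growing only by an $O(\MaxReach)$ factor. Your write-up just fills in the details the paper leaves implicit (the contradiction argument bounding the path length via a far interior vertex dominated within reach, and the bookkeeping with the enlarged slack $O(\MaxReach)\cdot\excessof{\clusters}$), all consistent with the paper's intent.
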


\begin{proof}
  The same proof as that of \lemref{ptas:subset:dom:3} goes through,
  except now the shortest paths between distinct components can be
  shown to have length at most $2 (\MaxReach + 1)$ vertices. Condition
  (iv) is necessary to keep these paths lying in $D$. The search size
  is increased by a factor of $2 \MaxReach$ instead of 2, which is
  only a constant factor difference.
\end{proof}

\subsubsection{Discussion}

\begin{observation}[\PTAS for vertex cover for polynomial expansion
  graphs]
  \obslab{v:c:poly:expansion}%
  The algorithm of \thmref{g:hitting:set:cover} can be used to get a
  \PTAS for vertex cover. Indeed, let $\defGraph$ be an undirected
  graph with polynomial expansion. We introduce a new vertex in the
  middle of every edge of $\graph$, and let $\graphA$ be the resulting
  graph, with $\CovSet$ be the set of new vertices. Clearly, replacing
  an edge by a path of length two changes the expansion of a graph
  only slightly, see \defref{expansion}, and in particular, $\graphA$
  has polynomial expansion. Now, solving the dominating subset for
  $\CovSet$ as the set required covering, and $\Vertices$ as the
  initial dominating set, in the graph $\graphA$ solves the original
  vertex cover problem in the original graph. The desired \PTAS now
  follows from \thmref{g:hitting:set:cover}.
\end{observation}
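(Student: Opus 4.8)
The plan is to make rigorous the two-line reduction described in the statement. Let $\graphA$ be the $1$-subdivision of $\graph$: for every edge $xy \in \EdgesX{\graph}$ introduce a fresh vertex $w_{xy}$ and replace $xy$ by the path $x$--$w_{xy}$--$y$. Put $\CovSet = \Set{w_{xy}}{xy \in \EdgesX{\graph}}$ and $\DomSet = \Vertices$. Since $w_{xy}$ is adjacent in $\graphA$ to both $x$ and $y$, the set $\DomSet$ dominates $\CovSet$, so $(\graphA,\CovSet,\DomSet)$ is a legal instance of the dominating subset problem. Two things then remain: (i) a minimum subset of $\DomSet$ dominating $\CovSet$ is exactly a minimum vertex cover of $\graph$, so a $(1+\eps)$-approximation of the former is a $(1+\eps)$-approximation of the latter; and (ii) $\graphA$ still has polynomial expansion of the same order $k$ as $\graph$, so \thmref{ptas:subset:dom} applies.

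Step (i) is routine: a set $\SetX \subseteq \Vertices$ dominates $\CovSet$ in $\graphA$ iff for every edge $xy \in \EdgesX{\graph}$ the vertex $w_{xy}$ has a neighbor in $\SetX$ (it cannot itself lie in $\SetX \subseteq \Vertices$), and since the only neighbors of $w_{xy}$ in $\graphA$ are $x$ and $y$, this says exactly that $x \in \SetX$ or $y \in \SetX$, i.e., that $\SetX$ is a vertex cover of $\graph$. Hence the two problems have the same feasible solutions and the same optimum.

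Step (ii) is the only step needing real work. I would prove the clean bound $\gradY{r}{\graphA} \leq \gradY{r}{\graph} + 2$ for every $r \geq 0$, which immediately gives polynomial expansion of order $k$. Let $\graphB$ be an $r$-shallow minor of $\graphA$, realized by pairwise-disjoint connected clusters $C_v \subseteq \VerticesX{\graphA}$, each of radius at most $r$ in the subgraph of $\graphA$ induced by $C_v$. Since the subdivision vertices form an independent set of $\graphA$, every cluster with at least two vertices contains an original vertex; call a cluster \emph{degenerate} if it consists of a single subdivision vertex $w_{xy}$. A degenerate cluster has degree at most $2$ in $\graphB$ (its only $\graphA$-neighbors are $x$ and $y$), so the number of edges of $\graphB$ incident to a degenerate cluster is at most $2\cardin{\VerticesX{\graphB}}$. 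For the non-degenerate clusters, define $\phi \colon \VerticesX{\graphA} \to \Vertices$ by $\phi(x) = x$ for $x \in \Vertices$, and, for a subdivision vertex $w_{xy}$ contained in a non-degenerate cluster $C_v$, let $\phi(w_{xy})$ be the smaller-indexed of $x,y$ among those that lie in $C_v$ (at least one does, as $C_v$ is connected with at least two vertices). The two endpoints of any edge of $\graphA$ map under $\phi$ to vertices of $\graph$ that are equal or adjacent; from this one checks that, over the non-degenerate clusters, the images $\phi(C_v)$ are pairwise disjoint, induce connected subgraphs of $\graph$ of radius at most $r$, and that any edge of $\graphB$ joining two non-degenerate clusters corresponds to an edge of $\graph$ between the respective images. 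Thus the subgraph of $\graphB$ induced by the non-degenerate clusters is an $r$-shallow minor of $\graph$, hence has at most $\gradY{r}{\graph}\cardin{\VerticesX{\graphB}}$ edges; adding the degenerate contribution gives $\cardin{\EdgesX{\graphB}} \leq \pth{\gradY{r}{\graph}+2}\cardin{\VerticesX{\graphB}}$, as claimed.

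Finishing up: by \obsref{bounded:expansion:degenerate} the bounded-expansion graph $\graph$ is $O(1)$-degenerate, so $\cardin{\EdgesX{\graph}} = O(n)$ and $\graphA$ has $O(n)$ vertices; running the dominating-subset \PTAS of \thmref{ptas:subset:dom} on $(\graphA,\CovSet,\DomSet)$ then yields, in $n^{O(\exSize)}$ time with $\exSize = O\pth{\eps^{-(2k+6)}\log^{2k+5}(1/\eps)}$, a $(1+\eps)$-approximate minimum vertex cover of $\graph$ by Step (i). The main obstacle is Step (ii): the $\phi$-pushforward argument is elementary, but the separate accounting for degenerate single-subdivision-vertex clusters — precisely the clusters on which $\phi$ would otherwise collide with a neighbouring cluster — is where care is required.
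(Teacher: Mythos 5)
Your proposal is correct and follows the same route as the paper: subdivide every edge, observe that dominating the subdivision vertices from $\Vertices$ is exactly vertex cover, and invoke the dominating-subset \PTAS; the only substantive addition is that you actually prove the expansion-preservation step the paper waves off as clear, via the clean bound $\gradY{r}{\graphA} \leq \gradY{r}{\graph} + 2$ obtained by handling single-subdivision-vertex clusters separately and pushing the remaining clusters forward to $\graph$, and that argument checks out. You also (rightly) apply \thmref{ptas:subset:dom} for general polynomial-expansion graphs, whereas the paper's text points to \thmref{g:hitting:set:cover}, which is evidently a slip for the same theorem.
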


\begin{remark}[\PTAS for graphs with subexponential expansion] %
  \remlab{s:exp:expansion}%
  As noted in \remref{w:div:small:excess}, one can still obtain
  $f(1/\eps)$-divisions for some (larger) function $f$ in graphs with
  hereditary separators size $O\pth{n/\log^{O(1)} n}$. To this end,
  one can verify (by the same proof as \thmref{p:e:separator}, see
  \cite{no-gcbe2-08,hq-naape-16-arxiv}) that for a small constant $c$,
  if a graph class $\class$ has expansion
  $\varphi(t) = O\pth{\exp(c' \cdot t^{c''})}$, for $c'$ and $c''$
  sufficiently small constants, then $\class$ has separators of the
  desired size $O\pth{n/\log^{O(1)} n}$. Thus, the above approximation
  algorithms yield a \PTAS (with much worse dependence on $\epsilon$)
  for any graph class $\class$ with subexponential expansion
  $\gradY{t}{\class} = O\pth{\exp(c' \cdot t^{c''})}$, where $c'$ and
  $c''$ are some constants. We are not aware of any natural graphs in
  this class that do not have polynomial expansion.
\end{remark}


\subsection{Geometric applications}
\seclab{geometric:applications}

The above implies \PTAS's for dominating set type problems on
low-density graphs.  Let $\ObjSet$ be a collection of objects in
$\Re^d$ and $\PntSet$ a collection of points.  Two natural geometric
optimization problems in this setting are:
\begin{compactenum}[\;\;(A)]
\item \emphi{Discrete hitting set}: Compute the minimum cardinality
  set $\PntSetA \subseteq \PntSet$ such that for every
  $\obj \in \ObjSet$, we have $\PntSetA \cap \obj \neq \emptyset$.
  That is, every object of $\ObjSet$ is stabbed by some point of
  $\PntSetA$.

  If we consider the natural intersection graph
  $\graph = \IGraph{\PntSet \cup \ObjSet}$ and the sets
  $\DomSet = \PntSet$ and $\CovSet = \ObjSet$, then this is an
  instance of dominating subset problem. The algorithm of
  \thmref{ptas:subset:dom} applies because $\graph$ is low density and
  therefore has polynomial expansion.

\item \emphi{Discrete set cover}: Compute the smallest cardinality set
  $\ObjSetA \subseteq \ObjSet$ such that for every point
  $\pnt \in \PntSet$, we have
  \begin{math}
    \pnt \in {\bigcup_{\obj \in \ObjSetA} \obj} .
  \end{math}
  That is, all the points of $\PntSet$ are covered by objects in
  $\ObjSetA$. Setting $\DomSet = \ObjSet$ and $\CovSet = \PntSet$
  (i.e., flipping the sets in the hitting set case), and arguing as
  above, implies a \PTAS.
\end{compactenum}
\smallskip%
For these geometric optimization problems, we can improve the running
time of \thmref{ptas:subset:dom} by applying the stronger separator
theorem for low-density graphs.
\begin{theorem}
  \thmlab{g:hitting:set:cover}%
  Let $\ObjSet$ be a collection of $m$ objects in $\Re^d$ with density
  $\cDensity$, $\PntSet$ be a set of $n$ points in $\Re^d$, and let
  $\eps > 0$ be a parameter. Then, for
  $\exSize = O\pth{\cDensity/\eps^{d}}$, the local search algorithm,
  with exchanges of size $\exSize$ implies the following:
  \begin{compactenum}[\quad(A)]
  \item An approximation algorithm that, in $O\pth{mn^{O(\exSize)}}$
    time, computes a set $\PntSetA \subseteq \PntSet$ that is an
    $(1+\eps)$-approximation for the smallest cardinality set that
    hits $\ObjSet$.

  \item An approximation algorithm that, in $O(n m^{O(\exSize)})$
    time, computes a set $\ObjSetA \subseteq \ObjSet$ that is an
    $(1+\eps)$-approximation for the smallest cardinality set that
    covers $\PntSet$.
  \end{compactenum}
\end{theorem}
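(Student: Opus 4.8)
The plan is to cast both problems as instances of the dominating subset problem on $\graph = \IGraph{\PntSet \cup \ObjSet}$, exactly as set up just before the statement, and then to rerun the local-search analysis of \thmref{ptas:subset:dom} with only one change: replacing the polynomial-expansion division by the much smaller geometric division supplied by \corref{dev;p:e:l:dense} \itemref{l:d:divisions}. For part (A) take $\DomSet = \PntSet$ and $\CovSet = \ObjSet$; for part (B) take $\DomSet = \ObjSet$ and $\CovSet = \PntSet$. In both cases the algorithm is the $\exSize$-local search of \secref{l:s:alg} started from all of $\DomSet$; the algorithm never looks at the geometry, so only the \emph{analysis} needs re-examination, and in particular only the bound on the division size of the conflict graph.

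Recall that the analysis of \thmref{ptas:subset:dom} forms the flower decompositions $\optFSet = \FDecomp{\Opt}{\CovSet}$ and $\locFSet = \FDecomp{\locSol}{\CovSet}$ (see \defref{flower:head}), observes that $\Family = \optFSet \cup \locFSet$ is a $(2,1)$-shallow packing of $\graph$, and argues over the induced packing graph $\graphA = \ICovGraph{\graph}{\Family}$. The new ingredient is that, in the geometric setting, $\graphA$ is $O(\cDensity)$-dense. Indeed, each flower is either a point $p \in \PntSet$ together with a set of objects of $\ObjSet$ that all contain $p$ (case (A)), or an object $\obj \in \ObjSet$ together with a set of points it contains (case (B)); in either case the union of a flower is the union of a subset of $\ObjSet$ (the harmless center point, when present, lies inside those objects, and an empty flower is just a zero-diameter point), and the flowers coming from $\optFSet$ (resp.\ from $\locFSet$) partition their underlying objects into pairwise-intersecting groups. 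Hence $\Family$ realizes a $(2,1)$-shallow packing in the object sense of \secref{l:d:o:packing}, and \lemref{shallow:cover:objects} (with $\iCov = 2$, $t = 1$) shows that the induced object set $\IObjSet{\ObjSet}{\Family}$ has density $O(\cDensity)$; a short case analysis of the two kinds of edges of $\graphA$ (a shared vertex, or a $\graph$-edge between the two flowers) shows that every such edge forces the corresponding flower-unions to intersect, and conversely, so that $\graphA$ is (isomorphic to) $\IGraph{\IObjSet{\ObjSet}{\Family}}$, an $O(\cDensity)$-dense graph. I expect this middle step — pinning down the right low-density realization of $\graphA$ and disposing of the degenerate-flower bookkeeping — to be the only real work; everything else is reuse.

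Given this, \corref{dev;p:e:l:dense} \itemref{l:d:divisions} yields an $\exSize$-division of $\graphA$ with $\exSize = O\pth{\cDensity/\eps^{d}}$ and total excess at most $\pth{\eps/4}\cardin{\Family} \le \pth{\eps/2}\cardin{\locSol}$ (after the usual rescaling of $\eps$). From here the proof of \thmref{ptas:subset:dom} applies verbatim: on the interior of each cluster of the division, swapping the local flowers for the optimal flowers still dominates $\CovSet$, so local optimality of $\locSol$ under exchanges of size $\exSize$ gives $\cardin{\locSol_i} \le \cardin{\Opt_i}$, and summing over the clusters gives $\cardin{\locSol} \le \cardin{\Opt} + 2\,\excessof{\clusters} \le (1+\eps)\cardin{\Opt}$. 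For the running time, each round of the $\exSize$-local search tries $N^{O(\exSize)}$ exchanges where $N = \cardin{\DomSet}$, there are at most $N$ rounds, and each validity/improvement test scans the $\cardin{\CovSet}$ objects or points; this gives $O\pth{m\,n^{O(\exSize)}}$ in case (A) (where $N = n$ and $\cardin{\CovSet} = m$) and $O\pth{n\,m^{O(\exSize)}}$ in case (B), as claimed.
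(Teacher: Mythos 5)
Your proposal is correct and follows essentially the same route as the paper: reduce both problems to the dominating subset problem on $\IGraph{\ObjSet \cup \PntSet}$, rerun the analysis of \thmref{ptas:subset:dom}, and observe via \lemref{shallow:cover:objects} that the $(2,1)$-shallow packing of flowers yields an $O(\cDensity)$-dense packing graph, so the stronger division of \corref{dev;p:e:l:dense} \itemref{l:d:divisions} with $\exSize = O(\cDensity/\eps^d)$ can replace the polynomial-expansion one. Your explicit identification of the packing graph with the intersection graph of the flower unions (and the degenerate-flower and running-time bookkeeping) is detail the paper leaves implicit, but it is the same argument.
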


\begin{proof}
  Since points have zero diameter, the union $\ObjSet \cup \PntSet$
  also has density $\cDensity+1$. This reduces geometric hitting set
  and discrete geometric set cover to dominating subset problem on the
  intersection graph of $\graph = \IGraph{\ObjSet \cup \PntSet}$.

  The approximation algorithm is described in \thmref{ptas:subset:dom}
  (applied to $\graph$). Here we can do slightly better, using smaller
  exchange size, as the graph $\graph$ has low density. To this end,
  observe that the analysis of \thmref{ptas:subset:dom} argues about
  the induced packing graph of $\graph$ for some $(2,1)$-shallow
  packing $\FamilyA$. By \lemref{shallow:cover:objects}, the graph
  $\graphA = \ICovGraph{\graph}{\FamilyA}$ has density
  $O(\cDensity 2^d ) = O( \cDensity)$.  Thus, by
  \corref{dev;p:e:l:dense} (B), $\graphA$ has a $\exSize$-division
  with excess $(\eps/4) \cardin{\VerticesX{\graphA}}$, where
  $\exSize = O\pth{ \cDensity /\eps^d }$.  The algorithm of
  \thmref{ptas:subset:dom} modified to use these improved divisions
  implies the result.
\end{proof}

\begin{remark}
  To our knowledge, the algorithms of \thmref{g:hitting:set:cover} are
  the first \PTAS's for discrete hitting set and discrete set cover
  with shallow fat triangles and similar fat objects. Previously, such
  algorithms were known only for disks and points in the plane.
\end{remark}


\section{Hardness of approximation}
\seclab{hardness}%

Some of the results of this section appeared in an unpublished
manuscript \cite{h-bffne-09}.  Chan and Grant \cite{cg-eaahr-14} also
prove some related hardness results, which were (to some extent) a
followup work to the aforementioned manuscript.

\subsection{A review of complexity terms} %
\seclab{complexity}

The \emphi{exponential time hypothesis} (\emphi{\ETH})
\cite{ip-ocks-01, ipz-wphse-01} is that \TrSAT can not be solved in
time better than $2^{\Omega(n)}$, where $n$ is the number of
variables. The \emphi{strong exponential time hypothesis}
(\emphi{\SETH}), is that the time to solve \kSAT is at least
$2^{c_k n}$, where $c_k$ converges to $1$ as $k$ increases.

A problem that is \APXHard does not have a \PTAS unless
$\POLYT = \NP$.  For example, it is known that \ProblemC{Vertex Cover}
is \APXHard even for a graph with maximum degree $3$
\cite{acgkm-ca-99}.  Thus, showing that a problem is \APXHard implies
that one can not do better than a constant
approximation. Specifically, if one can get a $(1+\eps)$-approximation
for such a problem, for any constant $\eps > 0$, then one can
$(1+\eps)$-approximate \TrSAT (for the max version of \TrSAT, the
purpose is to maximize the number of clauses satisfied). By the
\Term{PCP} Theorem, this would imply an exact algorithm for \TrSAT.

\begin{observation}%
  \obslab{eth:no:a}%
  Consider an instance of \TrSAT of size $m = c \log^2 n$, for some
  constant $c > 0$ sufficiently large. \ETH implies that we cannot
  solve this instance in polynomial time, since the running time
  required to solve this instance is $2^{\Omega(m)}$, which is super
  polynomial. (This argument works for any function $f(n)$ such that
  $\log n = o(f(n))$.)

  This innocuous observation has a surprising implication -- we cannot
  even $(1-\eps)$ approximate a solution for such an instance by the
  \Term{PCP} result. Namely, \ETH implies that even polylogarithmic
  sized instances cannot be solved in polynomial time.
\end{observation}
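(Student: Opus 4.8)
The plan is to derive both halves of the observation from \ETH by a size-counting argument, using the sparsification lemma for the first half and the \Term{PCP} theorem for the second.

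\textbf{First half.} First I would upgrade \ETH, via the sparsification lemma, to the statement that there is a constant $\delta > 0$ such that no algorithm decides \TrSAT on instances of \emph{total size} $s$ (number of variables plus number of clauses) in time $2^{\delta s}$. Fix such a family of hard instances and restrict attention to those of size $m = c\log^2 n$; the corresponding lower bound is $2^{\delta c \log^2 n}$. An algorithm running in time polynomial in $n$, on the other hand, takes time $n^a = 2^{a \log n}$ for some constant $a$. Since $\log^2 n = \omega(\log n)$, for every fixed $c > 0$ (in particular for $c$ sufficiently large) and all sufficiently large $n$ we have $\delta c \log^2 n > a \log n$, so no such polynomial-time algorithm can exist without contradicting \ETH. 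The argument is unchanged with $\log^2 n$ replaced by any $f(n)$ with $\log n = o(f(n))$, since then $\delta f(n)$ eventually dominates $a \log n$; thus \ETH already forbids solving polylogarithmic-size \TrSAT instances in polynomial time.

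\textbf{Second half.} Next I would recall the \Term{PCP} theorem in its gap form: there is a constant $\eps_0 > 0$ and a polynomial-time reduction mapping a \TrSAT instance $\phi$ of size $s$ to a \ProblemC{Max-3SAT} instance $\psi$ of size $\poly(s)$, such that if $\phi$ is satisfiable then all clauses of $\psi$ are simultaneously satisfiable, while if $\phi$ is unsatisfiable then at most a $(1-\eps_0)$ fraction of them are. The key point is that $\poly(\polylog n)$ is still polylogarithmic, so this reduction maps polylogarithmic-size instances to polylogarithmic-size instances. Now suppose, for contradiction, that for some fixed $\eps \in (0, \eps_0)$ there were a polynomial-time algorithm that $(1-\eps)$-approximates the maximum number of simultaneously satisfiable clauses on \ProblemC{Max-3SAT} instances of polylogarithmic size. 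Composing the \Term{PCP} reduction with this approximation algorithm yields a polynomial-time procedure that, on a polylogarithmic-size \TrSAT instance, reports a value exceeding $(1-\eps)$ times the clause count in the satisfiable case and at most $(1-\eps_0) < (1-\eps)$ times the clause count in the unsatisfiable case, hence decides \TrSAT. This contradicts the first half, so \ETH also rules out $(1-\eps)$-approximating such instances in polynomial time.

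\textbf{Main obstacle.} The only delicate point is bookkeeping of instance sizes across the two reductions: one must verify that the sparsification lemma keeps the numbers of variables and clauses within constant factors of each other (so that ``size $m$'' is a faithful parameter) and that the polynomial blow-up of the \Term{PCP} reduction stays within the polylogarithmic regime. Both are routine, and no machinery beyond \ETH, sparsification, and the \Term{PCP} theorem is needed.
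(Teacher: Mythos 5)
Your proposal is correct and follows essentially the same route as the paper, which proves the observation only by the sketch embedded in its statement: \ETH forces a $2^{\Omega(m)}$ lower bound that is super-polynomial in $n$ when $m = c\log^2 n$, and the \Term{PCP} theorem's constant gap then rules out $(1-\eps)$-approximation for sufficiently small constant $\eps$, since the polynomial blow-up keeps polylogarithmic instances polylogarithmic. Your extra invocation of the sparsification lemma is just a careful way of reconciling ``instance size'' with the number of variables in the \ETH lower bound, a bookkeeping point the paper glosses over.
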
%

\begin{observation}
  Showing that a problem $X$ is \APXHard implies that:
  \begin{compactenum}[(A)]
  \item The problem $X$ does not have a \PTAS (unless $\POLYT=\NP$).

  \item Under \ETH, the problem $X$ does not have a \QPTAS, where a
    \QPTAS is an $(1+\eps)$-approximation algorithm with running time
    $n^{O(\poly( \log n,1/\eps))}$.

  \item Furthermore, under \ETH, polylogarithmic sized instances of
    $X$ cannot be approximated to within a $(1\pm\eps)$-multiplicative
    factor in polynomial time.
  \end{compactenum}
\end{observation}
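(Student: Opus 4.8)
The plan is to obtain all three items from one object: the polynomial-time, approximation-preserving reduction (an L- or AP-reduction) that witnesses that $X$ is \APXHard. Such a reduction maps a \TrSAT instance $\varphi$ of size $m$ to an instance $I_\varphi$ of $X$ of size $N = m^{O(1)}$, together with a polynomial-time back-map taking any $(1 \pm \eps)$-approximate solution of $I_\varphi$ to a $(1 \pm \eps')$-approximate solution of $\varphi$, where $\eps' \to 0$ as $\eps \to 0$. Composing with the \Term{PCP} theorem, there is an absolute constant $\eps^{\ast} > 0$ such that a $(1 \pm \eps^{\ast})$-approximation already certifies whether $\varphi$ is satisfiable; hence a $(1 \pm \eps)$-approximation of $I_\varphi$, for a small enough absolute constant $\eps$, yields in polynomial time the satisfiability status of $\varphi$. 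Item (A) is then the argument already recalled in the preliminaries: a \PTAS for $X$, run with this constant $\eps$ on $I_\varphi$, decides \TrSAT in polynomial time, so $\POLYT = \NP$.

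For item (C), I would instantiate $\varphi$ as one of the hard \TrSAT instances of \obsref{eth:no:a}, of size $m = c\log^{2} n$, whose satisfiability (via the \Term{PCP} gap) cannot be decided in $\poly(n)$ time under \ETH. Then $I_\varphi$ has size $N = m^{O(1)} = \polylog(n)$, so a hypothetical $\poly(n)$-time $(1 \pm \eps)$-approximation algorithm for $X$, run on $I_\varphi$, would resolve $\varphi$ in $\poly(n)$ time, contradicting \obsref{eth:no:a}. Thus, under \ETH, polylogarithmic-size instances of $X$ admit no polynomial-time $(1 \pm \eps)$-approximation.

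For item (B), note that a \QPTAS is exactly strong enough to finish off these polylogarithmic instances. With $\varphi$ and $I_\varphi$ as above and $\eps$ the fixed small constant, a \QPTAS with running time $N^{O(\poly(\log N, 1/\eps))}$ runs on $I_\varphi$ in time $N^{O(\poly(\log N))}$; since $N = \polylog(n)$ we have $\log N = O(\log\log n)$, so this is $2^{O(\poly(\log\log n))} = n^{o(1)}$, which is polynomial in $n$. As in item (C), this resolves $\varphi$ in polynomial time and again contradicts \obsref{eth:no:a}.

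The step I expect to be fussiest is the setup rather than the three short derivations: making explicit that being \APXHard supplies a polynomial-time, approximation-preserving reduction with polynomial size blowup and a constant-factor transfer of the approximation ratio (the standard package from L/AP-reductions together with the \Term{PCP} theorem), and then the elementary running-time check in item (B) that a quasi-polynomial dependence evaluated on a $\polylog(n)$-size instance is only $2^{\poly(\log\log n)} = n^{o(1)}$, comfortably polynomial in $n$.
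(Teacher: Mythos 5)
Your proposal is correct and follows essentially the same route the paper intends: the paper states this observation without a separate proof, relying on the package it just recalled -- the approximation-preserving reduction from \TrSAT implicit in \APXHard{}ness, the \Term{PCP} theorem for item (A), and \obsref{eth:no:a} (polylogarithmic-size \TrSAT instances are out of reach in polynomial time under \ETH) for items (B) and (C). Your running-time check for (B), that a \QPTAS on a $\polylog(n)$-size instance costs only $2^{\poly(\log\log n)} = n^{o(1)}$, is exactly the elementary calculation needed, so there is nothing to add.
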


\subsection{Discrete hitting set for fat triangles}

In the \emphi{fat-triangles discrete hitting set problem}, we are
given a set of points in the plane $\PntSet$ and a set of fat
triangles $\TriSet$, and want to find the smallest subset of $\PntSet$
such that each triangle in $\TriSet$ contains at least one point in
the set.
\begin{figure}
  \centerline{%
    \begin{tabular}{c|c|c|c}
      \IncludeGraphics[page=1,width=.21\textwidth]{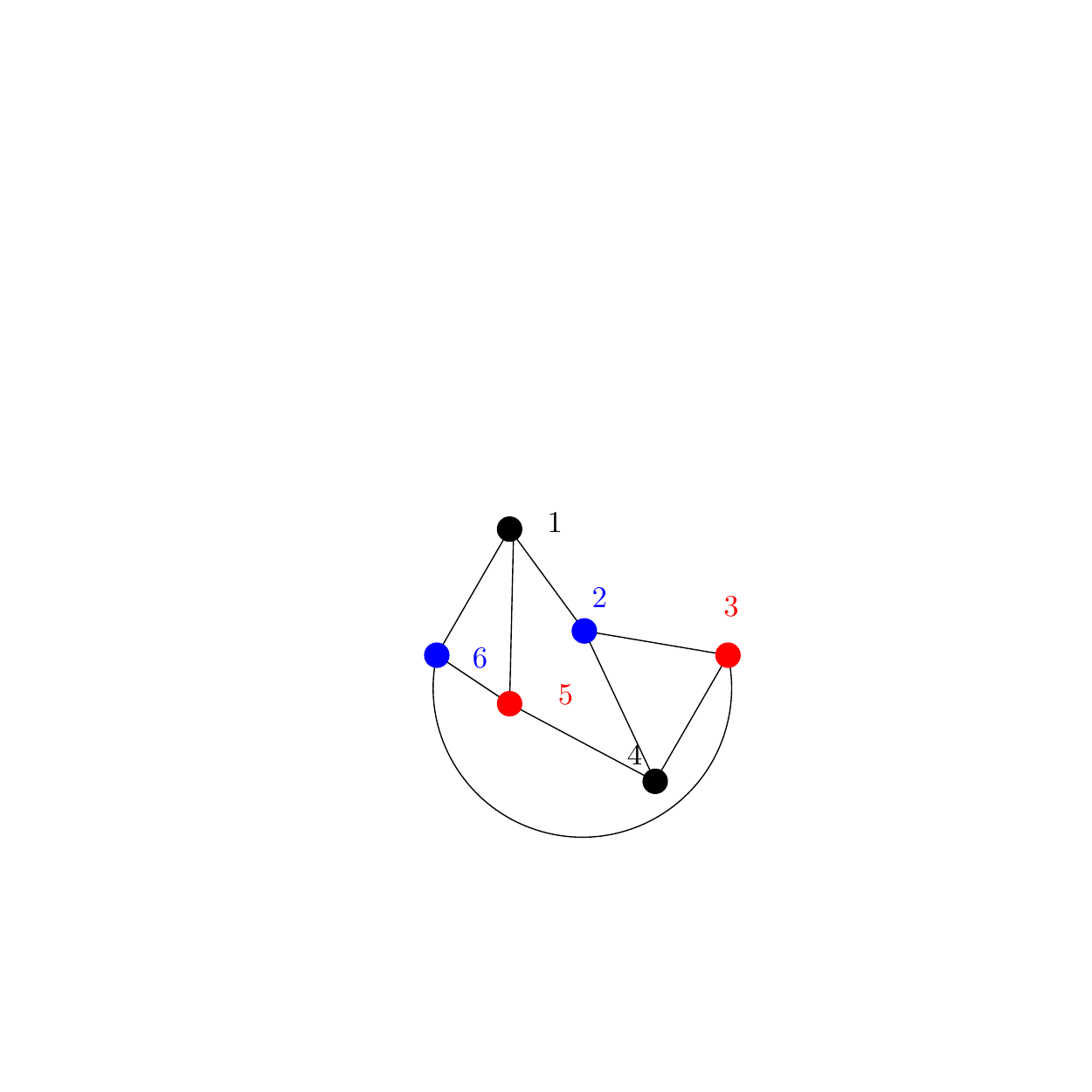}%
      &%
        \quad\IncludeGraphics[page=4,width=.21\textwidth]{figs/triangles}%
      &
        \quad\IncludeGraphics[page=2,width=.21\textwidth]{figs/triangles}%
      &%
        \quad%
        \IncludeGraphics[page=3,width=.21\textwidth]{figs/triangles}\\
      (A) & (B) & (C) & (D)
    \end{tabular}%
  }%
  \caption{%
    Illustration of the proof of \lemref{no:PTAS:fat:hit:set}: %
    (A) A $3$-regular graph with its $3$ coloring. %
    (B) Placing the vertices on a circle. %
    (C) Three edges and their associated triangles. %
    (D) All the triangles.
  }%
  \figlab{illustration}
\end{figure}
\begin{lemma}
  \lemlab{no:PTAS:fat:hit:set}%
  There is no \PTAS for the fat-triangle discrete hitting set problem,
  unless $\POLYT=\NP$.
  One can prespecify an arbitrary constant $\delta > 0$, and the claim
  would hold true even if the following conditions hold on the given
  instance $(\PntSet,\TriSet)$:
  \begin{compactenum}[\quad(A)]
  \item Every angle of every triangle in $\TriSet$ is between
    $60-\delta$ and $60 + \delta$ degrees.

  \item No point of $\PntSet$ is covered by more than three triangles
    of $\TriSet$.

  \item The points of $\PntSet$ are in convex position.

  \item All the triangles of $\TriSet$ are of similar
    size. Specifically, each triangle has side length in the range
    (say) $(\sqrt{3}-\delta, \sqrt{3}+\delta)$.

  \item The points of $\PntSet$ are a subset of the vertices of the
    triangles of $\TriSet$.
  \end{compactenum}
\end{lemma}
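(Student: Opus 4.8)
\bigskip

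\noindent\textbf{Proof proposal.}
The plan is a reduction from minimum \ProblemC{Vertex Cover} on cubic ($3$-regular) graphs, which is \APXHard{} \cite{acgkm-ca-99}; an \APXHard{} problem has no \PTAS unless $\POLYT=\NP$, so a cost-preserving reduction suffices. Given a cubic graph $G=(V,E)$, I would first fix a proper edge-coloring of $G$ with $O(1)$ colors (Vizing's theorem gives at most $4$), so that the three edges at each vertex get three distinct colors. I then build an instance $(\PntSet,\TriSet)$ as follows: represent each vertex $u$ by a single point $p_u$, set $\PntSet=\Set{p_u}{u\in V}$, and represent each edge $e=uv$ by a triangle $\Delta_e\in\TriSet$ that is within $\delta$ of an equilateral triangle of side $\sqrt3$, whose orientation is dictated by the color of $e$, and which is positioned so that $p_u$ and $p_v$ are two of its corners and $\Delta_e$ contains no point of $\PntSet$ other than $p_u,p_v$. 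Then a set $\PntSetA\subseteq\PntSet$ stabs every triangle of $\TriSet$ if and only if $\Set{u}{p_u\in\PntSetA}$ is a vertex cover of $G$, so the minimum discrete hitting set has size exactly $\mathrm{vc}(G)$. Since $\mathrm{vc}(G)=\Theta(|V|)$ for cubic graphs, a $(1+\eps)$-approximation to the hitting-set instance yields a $(1+O(\eps))$-approximation to vertex cover, which is impossible for small enough $\eps$ unless $\POLYT=\NP$.

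It then remains to verify that the construction can be carried out with conditions (A)--(E). A single $\sqrt3$-side equilateral template, perturbed by at most $\delta$ to realize the required incidences, gives (A) and (D); taking $\PntSet$ to be the set of vertex-representatives automatically makes it a subset of the triangle corners, giving (E). For (B): the three edges at a vertex $u$ have three distinct colors, so the (at most) three triangles having $p_u$ as a corner can be placed in three pairwise-disjoint $60^\circ$ wedges emanating from $p_u$, so they pairwise meet only at $p_u$; away from these shared corners the triangles are kept pairwise far apart, so no point of the plane lies in more than three triangles. For (C), the points $p_u$ are to be placed on a circle; the hitting-set/vertex-cover correspondence depends only on the incidence pattern and on the depth bound, and is insensitive to the cyclic order of the $p_u$'s and to crossings between triangles of far-apart edges, so we have freedom in choosing this circular layout.

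The step I expect to be the real obstacle is reconciling the metric rigidity forced by (A)+(D) with the convexity condition (C): a near-equilateral triangle of side $\approx\sqrt3$ with corners $p_u$ and $p_v$ forces $\normX{p_u-p_v}\approx\sqrt3$, so every edge of the realized graph must join a pair of points of a convex point set lying at a fixed distance, which is impossible for a general expander-like cubic graph (in a convex polygon only $O(N)$ pairs can be at a prescribed distance, and their structure is highly constrained). The fix is to not realize $G$ directly but an equivalent instance obtained by locally editing $G$ -- subdividing every edge the same even number $2k$ of times, which shifts the minimum vertex cover by the fixed additive term $k|E|=\Theta(\mathrm{vc}(G))$ and hence preserves the approximation gap up to a constant, and, if needed, replacing original vertices by constant-size gadgets -- chosen so that the edited graph admits a drawing with all relevant point pairs at distance $\approx\sqrt3$ in convex position on a circle and with all triangles of depth at most $3$. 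Producing such gadgets, completing the global circular layout while controlling crossings between triangles of distant edges, and doing the exact accounting of the cost shift, is the bulk of the work; once the embedding is in hand, the verification of (A)--(E) and of the reduction's correctness is routine.
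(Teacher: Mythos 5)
There is a genuine gap: you correctly isolate the hard part (reconciling the metric rigidity of (A)+(D) with convex position (C)), but your proposed fix does not resolve it. Subdividing every edge $2k$ times and inserting constant-size gadgets leaves you with exactly the same obstruction you diagnosed -- every edge of the edited graph still needs its two endpoint-points at distance $\approx\sqrt3$ inside a convex point set -- and you give no construction of such a drawing; "producing such gadgets, completing the global circular layout while controlling crossings" is precisely the content of the lemma, and it is left as a hope. Moreover, your extremal objection ("only $O(N)$ pairs at a prescribed distance in convex position") is itself off the mark, because (A) and (D) only require distances within $\pm\delta$ of $\sqrt3$; with approximate distances a convex point set consisting of a few tight clusters realizes $\Theta(N^2)$ admissible pairs, so no subdivision or gadgetry is needed at all.

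That is exactly how the paper's proof works, and it is the idea your proposal is missing. Starting from the same \APXHard{} instance of \ProblemC{Vertex Cover} on a graph of maximum degree three, apply Brooks' theorem to $3$-color it, and place the three color classes as distinct points inside three arcs of length $\delta/100$ centered at the corners of an equilateral triangle inscribed in the unit circle. For each edge $uv$, whose endpoints lie in two different arcs, add one fresh dummy point in the third arc and take the triangle spanned by these three points. All triangles are then automatically within $\delta$ of equilateral with side $\approx\sqrt3$ (giving (A), (D)), all points lie on the circle (giving (C) and (E)), and -- since the triangles are inscribed in the circle -- a point of $\PntSet$ lies in a triangle if and only if it is one of its vertices, so the depth of any point is at most its degree, i.e.\ at most $3$, giving (B). A minimum hitting set can be assumed to avoid the dummy points (each lies in a single triangle, and swapping it for an endpoint of the corresponding edge does not increase the size), so it is exactly a minimum vertex cover, and \APXHard{}ness transfers since the optimum is $\Theta(n)$. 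Your reduction skeleton and cost accounting are fine; what is missing is this $3$-coloring-plus-three-arcs realization, without which the proof does not go through.
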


\begin{proof}
  Let $\graph = (\Vertices,\Edges)$ be a connected instance of
  \ProblemC{Vertex Cover} which has maximum degree three, and it is
  not an odd cycle. We remind the reader that \ProblemC{Vertex Cover}
  is \APXHard for such instances \cite{acgkm-ca-99}.

  By Brook's theorem \cite{cr-btb-15}\footnote{Brook's theorem states
    that any connected undirected graph $\graph$ with maximum degree
    $\Delta$, the chromatic number of $\graph$ is at most $\Delta$
    unless $\graph$ is a complete graph or an odd cycle, in which case
    the chromatic number is $\Delta+1$.}, this graph is three
  colorable, and let $\Vertices_1, \Vertices_2, \Vertices_3$ be the
  partition of $\Vertices$ by their colors. Let
  $\pnt_1, \pnt_2, \pnt_3$ be three points on the unit circle that
  form a regular triangle. For $i=1,2,3$, place a circular interval
  $\Interval_i$ centered at $\pnt_i$ of length $\delta/100$. Now, for
  $i=1,2,3$, we place the vertices of $\Vertices_i$ as distinct points
  in $\Interval_i$.

  Let $\PntSetA_0 = \Vertices$ and $m = \cardin{\EdgesX{\graph}}$.
  For $i=1, \ldots, m$, let $u_i v_i$ be the $i$\th edge of
  $\graph$. Assume, for the sake of simplicity of exposition, that
  $u_i \in \Vertices_1$ and $v_i \in \Vertices_2$.  Pick an arbitrary
  point $\pntA_i $ in $\Interval_3 \setminus \PntSetA_{i-1}$, and
  create the triangle $T_i = \triangle u_i v_i \pntA_i$. Set
  $\PntSetA_i = \PntSetA_{i-1} \cup \brc{\pntA_i}$, and continue to
  the next edge.

  At the end of this process, we have $m$ triangles
  $\TriSet = \brc{ T_1, \ldots, T_m}$ that are arbitrarily close to
  being regular triangles, and all their edges are arbitrarily close
  to being of the same length, see \figref{illustration}. It is easy
  to verify that a minimum cardinality set of points
  $U \subseteq \Vertices$ that hits all the triangles in $\TriSet$ is
  a minimum vertex cover of $\graph$.
\end{proof}

\subsection{Friendly geometric set cover}

Let $\PntSet$ be a set of $n$ points in the plane, and $\Family$ be a
set of $m$ regions in the plane, such that
\begin{compactenum}[\qquad(I)]
\item the shapes of $\Family$ are convex, fat, and of similar size,
\item the boundaries of any pair of shapes of $\Family$ intersect in
  at most $6$ points,
\item the union complexity of any $m$ shapes of $\Family$ is $O(m)$,
  and
\item any point of $\PntSet$ is covered by a constant number of shapes
  of $\Family$.  %
\end{compactenum}%
\smallskip%
We are interested in finding the minimum number of shapes of $\Family$
that covers all the points of $\PntSet$.  This variant is the
\emphi{friendly geometric set cover} problem.

\begin{lemma}
  \lemlab{no:PTAS:friendly:s:c}%
  There is no \PTAS for the friendly geometric set cover problem,
  unless $\POLYT=\NP$.
\end{lemma}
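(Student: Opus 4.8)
The plan is to reduce from \ProblemC{Vertex Cover} on graphs of maximum degree three, which is \APXHard{} (as used in the proof of \lemref{no:PTAS:fat:hit:set}), and for which an approximation‑preserving reduction therefore suffices. Given such a graph $\graph=(\Vertices,\Edges)$, we build a friendly geometric set cover instance $(\PntSet,\Family)$ in which each edge of $\graph$ becomes a point of $\PntSet$ and each vertex becomes a region of $\Family$, arranged so that the incidence ``$\pnt_e\in R_v\iff v\in e$'' holds. A subfamily $\{R_v : v\in S\}$ then covers $\PntSet$ exactly when $S$ is a vertex cover of $\graph$, and this correspondence is a size‑preserving bijection between solutions, so a $(1+\eps)$‑approximate set cover yields a $(1+\eps)$‑approximate vertex cover; hence a \PTAS{} for friendly geometric set cover would give a \PTAS{} for bounded‑degree \ProblemC{Vertex Cover}, forcing $\POLYT=\NP$. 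Conceptually this is the ``dual'' of the construction in \lemref{no:PTAS:fat:hit:set}: there vertices became points and edges became fat triangles spanned by pairs of endpoints; here vertices become fat triangles spanned by triples of incident edge‑points.

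The geometric realization uses the convex‑position trick. Since $\graph$ has maximum degree three it has a proper edge coloring $\chi:\Edges\to\{1,2,3,4\}$ (Vizing's theorem). Fix a circle $C$ and four tiny, well‑separated arcs $A_1,\dots,A_4$ on it, and place $\pnt_e$ at a distinct point of the arc $A_{\chi(e)}$; all points of $\PntSet$ then lie on $C$ and hence are in convex position, so the convex hull of any three of them contains no fourth point of $\PntSet$. For a vertex $v$ with incident edges $e_1,e_2,e_3$, the colors $\chi(e_1),\chi(e_2),\chi(e_3)$ are pairwise distinct, so $\pnt_{e_1},\pnt_{e_2},\pnt_{e_3}$ lie in three distinct arcs; set $R_v=\mathrm{conv}\{\pnt_{e_1},\pnt_{e_2},\pnt_{e_3}\}$. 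Each $R_v$ is then a triangle whose corners lie in three distinct, well‑separated tiny arcs of $C$, so it has minimum angle bounded below by a fixed constant — hence is $\alpha$‑fat for a universal $\alpha$ — and diameter within a constant factor of $\diamX{C}$, so all regions are of similar size; two such triangles' boundaries meet in at most $6$ points; and since only the $\binom{4}{3}$ color patterns occur, $\Family$ is the union of a constant number of families of fat near‑congruent triangles, so any $m$ of its regions have union complexity $O(m)$. Finally, by convex position $\pnt_e\in R_v$ exactly when $v$ is an endpoint of $e$, so each point of $\PntSet$ is covered by precisely two regions. Thus $(\PntSet,\Family)$ meets conditions (I)–(IV) and realizes the required incidence. (Vertices of degree less than three, if one insists on reducing from max‑degree‑$3$ rather than cubic instances, are handled by completing their triangle with one or two auxiliary corner points placed in extra tiny arcs of $C$ that are not added to $\PntSet$; everything stays in convex position and the set of $\PntSet$‑points inside each $R_v$ is unchanged.)

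The main obstacle is to make all four ``friendliness'' conditions hold simultaneously: the regions must stay fat and of comparable size (which pushes toward large regions) while no point of $\PntSet$ may be covered by more than a constant number of regions (which pushes toward small, localized regions), and no $R_v$ may spuriously contain $\pnt_e$ for $e\not\ni v$. Routing all edge‑points onto a single circle through a few spread‑out arcs via a proper $4$‑edge‑coloring resolves all of this at once — fatness and bounded diameter ratio come from the arcs being spread around $C$, the absence of spurious containment and the depth bound come for free from convex position, and linear union complexity comes from there being only $O(1)$ triangle shapes. Once the instance is constructed, correctness is immediate from the size‑preserving bijection between set covers of $(\PntSet,\Family)$ and vertex covers of $\graph$, and the \APXHard{}ness of bounded‑degree \ProblemC{Vertex Cover} gives the claim.
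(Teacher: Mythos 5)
Your reduction is sound and approximation-preserving, but it takes a genuinely different route from the paper's proof of this lemma. The paper reduces from minimum $3$-set cover and realizes each set as a ``gear'': the convex hull of a slightly shrunken disk (a distinct radius per region) together with the at most three points of the set on the unit circle. Because the disk parts are nested and each region deviates from a disk only in tiny teeth at the chosen points, conditions (II) and (III) of friendliness (at most $6$ boundary crossings per pair, linear union complexity) are essentially immediate, and no edge coloring is needed. Your construction instead reduces from \ProblemC{Vertex Cover} on max-degree-$3$ graphs and is, in effect, the construction the paper itself uses later for \lemref{no:PTAS:fat:tr:set:cover} (edges as points in four Vizing-colored arcs, vertices as inscribed fat triangles), repurposed for the friendly setting. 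What your route buys is geometric simplicity and extra structure (plain near-congruent fat triangles, points in convex position, depth exactly $2$); what the paper's route buys is that the union-complexity condition, which is part of the definition of ``friendly,'' is trivially verified.

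That union-complexity step is the one genuine weak point in your write-up. The inference ``only $\binom{4}{3}$ color patterns occur, hence a constant number of families of fat near-congruent triangles, hence union complexity $O(m)$'' is an assertion, not an argument: fat triangles in general are only known to have near-linear (not linear) union complexity, and your triangles are not pseudo-disks -- two of them can interleave around the circle as $a_1 b_1 a_2 b_2 a_3 b_3$ and cross in the full $6$ points, so no pseudo-disk union bound applies. The claim is nevertheless true for your instance and can be patched: away from the three arc neighborhoods each triangle contributes a single chord, so the union boundary there is the boundary of an intersection of complementary halfplanes and has complexity $O(m)$; inside each arc neighborhood every triangle restricts to a fat wedge apexed in that tiny arc, and the union of fat wedges has linear complexity (Efrat--Rote--Sharir). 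Either supply such an argument (or citation), or sidestep the issue entirely by adopting the paper's disk-with-teeth regions, for which condition (III) holds by inspection.
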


\begin{proof}
  Let $U$ be a set of $n$ elements, and $\Family$ a set of subsets of
  $U$ each containing at most $k$ elements of $U$. In the
  \emphi{minimum $k$-set cover} problem, we want to find the smallest
  subcollection $\FamilyA \subseteq \Family$ that covers $U$. The
  problem is \MaxSNPHard for $k \geq 3$, meaning there is no \PTAS
  unless $\POLYT = \NP$ \cite{acgkm-ca-99}.

  We will reduce an instance $(U, \Family)$ of the minimum $k$-set
  cover problem (for $k=3$) into an instance of the friendly geometric
  set cover problem.

  \begin{figure}
    \centerline{%
      \begin{tabular}{c%
        cc}
        \begin{minipage}[b]{0.3\linewidth}
          \centerline{\IncGraphPage%
            [width=0.99\linewidth]%
            {figs}{gear}{1}}
        \end{minipage}%
        &%
          \begin{minipage}[b]{0.3\linewidth}
            \centerline{\IncGraphPage%
              [width=0.99\linewidth]%
              {figs}{gear}{2}}
          \end{minipage}%
        &%
          \begin{minipage}[b]{0.3\linewidth}
            \centerline{\IncGraphPage[width=0.99\linewidth]%
              {figs}{gear}{3}}
          \end{minipage}
        \\
        \smallskip (i) & (ii) & (iii)
      \end{tabular}%
    }

    \caption{(i) A region $\objA$ constructed for the set
      $S_t = \brc{u_i, u_j, u_k}$. Observe that in the
      construction, the inner disk is even bigger. As such, no
      two points are connected by an edge of the convex-hull when
      we add in the inner disk to the convex-hull. As such, each
      point ``contribution'' to the region $\objA$ is separated
      from the contribution of other points. %
      (ii) How  two such
      regions together.
      (iii) Their intersection.  }
    \figlab{intersection:of:regions}
  \end{figure}

  Let $U = \brc{u_1,\ldots, u_n}$ be a set of $n$ elements, and
  $\Family = \brc{S_1, \ldots, S_m}$ a collection of $m$ subsets of
  $U$. We place $n$ points equally spaced on the unit radius circle
  centered at the origin, and let
  $\PntSet = \brc{\pnt_1, \ldots, \pnt_n}$ be the resulting set of
  points. For each point $u_i \in U$, let $f(u_i) = \pnt_i$.  For each
  set $S_i \in \Family$ (of size at most 3), we define the region
  \begin{align*}
    \objA_i%
    =%
    \CH\pth{ \; \DiskOrg\pth{ 1-\frac{i}{10n^2m}} \; \cup \; f(S_i)
    \;},
  \end{align*}
  where $\CH$ is the convex hull,
  $f(S_i) = \cup_{x \in S_i} \brc{ f(x)}$, and $\DiskOrg( r )$ denotes
  the disk of radius $r$ centered at the origin. Visually, $\objA_i$
  is a disk with three (since $k=3$) teeth coming out of it, see
  \figref{intersection:of:regions}. Note that the boundary of two such
  shapes intersects in at most $6$ points.

  It is now easy to verify that the resulting instance of geometric
  set cover $\pth{\PntSet, \brc{\objA_1, \ldots, \objA_m}}$ is
  friendly, and clearly any cover of $\PntSet$ by these shapes can be
  interpreted as a cover of $U$ by the corresponding sets of
  $\Family$. Thus, a \PTAS for the friendly geometric set cover
  problem implies a \PTAS for the minimum $k$-set cover, which is
  impossible unless $\POLYT=\NP$.
\end{proof}

\subsection{Set cover by fat triangles}
\seclab{hardness-set-cover-fat-triangles}

In the \emphi{fat-triangle set cover problem}, specified by a set of
points in the plane $\PntSet$ and a set of fat triangles $\TriSet$,
one wants to find the minimum subset of $\TriSet$ such that its union
covers all the points of $\PntSet$.

\begin{lemma}
  \lemlab{no:PTAS:fat:tr:set:cover}%
  There is no \PTAS for the fat-triangle set cover problem, unless
  $\POLYT=\NP$.  Furthermore, one can prespecify an arbitrary constant
  $\delta > 0$, and the claim would hold true even if the following
  conditions hold on the given instance $(\PntSet,\TriSet)$:
  \begin{compactenum}[\quad(A)]
  \item The minimum angle of all the triangles of $\TriSet$ is larger
    than $45-\delta$ degrees.

  \item No point of $\PntSet$ is covered by more than two triangles of
    $\TriSet$.

  \item The points of $\PntSet$ are in convex position.

  \item All the triangles of $\TriSet$ are of similar
    size. Specifically, each triangle has diameter in the range (say)
    $(2-\delta, 2]$.

  \item Each triangle of $\TriSet$ has two angles in the range
    $(45-\delta, 45+\delta)$, and one angle in the range
    $(90-\delta, 90+\delta)$.

  \item The vertices of the triangles of $\TriSet$ are the points of
    $\PntSet$.
  \end{compactenum}
\end{lemma}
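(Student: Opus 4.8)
The plan is to reduce from \ProblemC{Vertex Cover} on cubic ($3$-regular) graphs, which remains \APXHard \cite{acgkm-ca-99}; as in the proof of \lemref{no:PTAS:fat:hit:set} we may assume the graph $\graph=(\Vertices,\Edges)$ is connected and not $K_4$, so Brook's theorem \cite{cr-btb-15} supplies a proper $3$-coloring of its vertices. The reduction is the ``transpose'' of the one used for the hitting-set bound: each vertex $v$ becomes a (nearly isosceles right) triangle $T_v$ and each edge $\Edge\in\Edges$ becomes a point $\pnt_{\Edge}$, laid out so that $T_v$ contains $\pnt_{\Edge}$ exactly when $\Edge$ is incident to $v$ --- concretely, $T_v$ is the triangle whose three corners are the points of the three edges at $v$. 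Then a subfamily of $\setof{T_v}$ covers all of $\setof{\pnt_{\Edge}}$ iff the corresponding set of vertices covers all edges of $\graph$, so a \PTAS for fat-triangle set cover would yield a \PTAS for cubic \ProblemC{Vertex Cover}, impossible unless $\POLYT=\NP$.

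The real content is the geometric realization meeting (A)--(F). The decisive move is to place the $\cardin{\Edges}$ points in \emph{strictly} convex position. Then (C) and (F) hold by construction, and (B) is automatic: a vertex of a strictly convex polygon lies outside the convex hull --- hence outside the triangle --- spanned by any three of the other vertices, so $\pnt_{\Edge}$ (for $\Edge=uv$) is a corner of $T_u$ and of $T_v$ and lies on no other $T_w$, and is therefore covered exactly twice. It then remains to choose the placement so that \emph{every} $T_v$ is nearly isosceles right, with the two small angles near $45^\circ$, the large one near $90^\circ$, and longest side in $(2-\delta,2]$, i.e.\ conditions (A), (D), (E). For this one exploits the structure of bounded-degree graphs --- the $3$-coloring, or a $2$-factor decomposition (the cycles of the $2$-factor together with a complementary perfect matching) when the instance is bridgeless --- to build the point set piecewise, say cycle by cycle, fixing the remaining edges last; each vertex has $3!$ ways to assign its three incident edges to the three corners of $T_v$, and this combinatorial slack, together with the freedom to rotate and rescale each local piece and the fact that the angle and diameter requirements need hold only within a tolerance driven by the fixed constant $\delta$, is what makes the system simultaneously satisfiable. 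Since the construction goes through for every $\delta>0$, the strengthened conditions (A)--(F) hold, which also gives the ``furthermore'' part.

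The main obstacle is precisely this simultaneous shape realization. The constraints are tightly coupled through the shared points --- edge $\Edge=uv$ forces $\pnt_{\Edge}$ to be a corner of \emph{both} $T_u$ and $T_v$ --- so the most naive attempts over-determine the system: inscribing everything in one circle, so that ``right-angled'' becomes ``a side is a diameter'' by Thales, pins the angular positions to a few rigid slots, whereupon some triangle is forced to have two nearly coincident corners, a tiny angle, and a violation of (A). The remedy is to build the configuration locally --- along the cycles of a $2$-factor, or along the color classes --- re-using only the genuinely shared points and checking that each local triangle can be nudged to be nearly isosceles right (within the $\delta$ tolerance) while the global point set stays strictly convex; if a bounded number of auxiliary points (and triangles) must be introduced to stitch the local pieces together, they change the optimum only by a lower-order term and hence preserve the inapproximability gap inherited from cubic \ProblemC{Vertex Cover}.
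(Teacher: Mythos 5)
Your reduction skeleton is the right one and matches the paper's: ground set $=$ edges, one triangle per vertex whose corners are the points of its (at most three) incident edges, so covering the points is exactly \ProblemC{Vertex Cover} on a max-degree-$3$ graph, and conditions (B), (C), (F) come for free from putting the edge-points in convex position with each point a corner of exactly its two endpoint-triangles. But the heart of the lemma is the geometric realization of (A), (D), (E), and there you have a genuine gap: you explicitly flag the simultaneous shape constraints as ``the main obstacle'' and then never resolve it. The appeal to the $3!$ ways of assigning incident edges to corners, to rotating/rescaling ``local pieces'' built along a $2$-factor or the vertex $3$-coloring, and to stitching with ``a bounded number of auxiliary points and triangles'' is not a construction --- no placement is exhibited, no argument shows the local pieces can be made near-isosceles-right while the global point set stays convex, and the auxiliary gadgets would themselves have to satisfy (A)--(F), be covered, and not violate the depth bound (B), none of which is checked. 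Note also that the vertex $3$-coloring, which is the right tool in the hitting-set lemma (where triangles correspond to \emph{edges}), is not the relevant combinatorial structure here: what you need is that the three edges incident to a common vertex be split among distinct location classes, i.e.\ a proper \emph{edge} coloring, and for max-degree-$3$ graphs that requires $4$ colors in general (Vizing), not $3$.

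Ironically, the approach you dismiss is exactly the one that works. The paper $4$-edge-colors the graph via Vizing's theorem, places the four color classes of edge-points on four very short arcs of the unit circle centered at the four intersections with the coordinate axes, and takes each triangle to be the convex hull of its three edge-points. Because no vertex sees the same edge color twice, every triangle has its corners near three \emph{distinct} of the four cardinal points, so it is automatically within $\delta$ of a $45$--$45$--$90$ triangle with diameter in $(2-\delta,2]$ --- conditions (A), (D), (E) hold uniformly once the arcs are short enough, with no local stitching, no auxiliary gadgets, and no loss in the optimum. Your objection to the one-circle layout (Thales forcing a side to be a near-diameter, hence ``nearly coincident corners and a tiny angle'') only bites if two corners of some triangle are forced into the same angular slot; the four-slot edge coloring is precisely what rules that out, and a near-diameter side is not a bug but the intended hypotenuse. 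So the missing idea is the Vizing edge coloring combined with the four-arc placement; without it (or an equally explicit substitute), your proof of the ``furthermore'' conditions does not go through.
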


\begin{proof}
  Consider a graph $\graph$ with maximum degree three, and observe
  that a \ProblemC{Vertex Cover} problem in such a graph can be
  reduced to \ProblemC{Set Cover} where every set is of size at most
  $3$. Indeed, the ground set $U$ is the edges of $\graph$, and every
  vertex $v \in \VerticesX{\graph}$ gives a rise to the set
  $S_v = \Set{uv \in \EdgesX{\graph}}{ u \in \VerticesX{\graph}}$,
  which is of size at most $3$. Clearly, any cover $C$ of size $t$ for
  the set system
  \begin{math}
    \mathcal{X} = \pth{\Bigl. U, \, \brc{S_v \sepw{v \in
          \VerticesX{\graph} \bigr.}}},
  \end{math}
  has a corresponding vertex cover of $\graph$ of the same size. Thus,
  \ProblemC{Set Cover} with every set of size (at most) three is
  \APXHard (this is of course well known). Note that in this set cover
  instance, every element participates in exactly two sets (i.e., the
  two vertices adjacent to the original edge).

  The graph $\graph$ has maximum degree three, and by Vizing's theorem
  \cite{bm-gta-76}, it is $4$ edge-colorable\footnote{Vizing's theorem
    states that a graph with maximum degree $\Delta$ can be edge
    colored by $\Delta+1$ colors. In this specific case, one can reach
    the same conclusion directly from Brook's theorem. Indeed, in our
    case, the adjacency graph of the edges has degree at most $4$, and
    it does not contain a clique of size $4$. As such, this graph is
    $4$-colorable, implying the original graph edges are
    $4$-colorable.}.  With regards to the set problem, the ground set
  of the set system $\mathcal{X}$ can be colored by $4$ colors such
  that no set in this set system has a color appearing more than once.

  We are given an instance of the \ProblemC{Vertex Cover} problem for
  a graph with maximum degree $3$, and we transform it into a set
  cover instance as mentioned above, denoted by
  $\mathcal{X} = \pth{ U, \Family_{\mathcal{X}}}$. Let
  $n = \cardin{U}$, and color $U$ (as described above) by $4$ colors
  such that no set of $\mathcal{X}$ has the same color repeated twice,
  let $U_1, \ldots, U_4$ be the partition of $U$ by the color of the
  points.

  \parpic[r]{\IncludeGraphics{figs/8_way}}
  Let $\mathcal{C}$ denote the circle of radius one centered at the
  origin. We place four relatively short arcs on $C$, placed on the
  four intersection points of $C$ with the $x$ and $y$ axes, see
  figure on the right.  Let $I_1, \ldots, I_4$ denote these four
  circular intervals.  We equally space the elements of $U_i$ (as
  points) on the interval $I_{i}$, for $i=1,\ldots, 4$. Let $\PntSet$
  be the resulting set of points.

  For every set $S \in \Family_{\mathcal{X}}$, take the convex hull of
  the points corresponding to its elements as its representing
  triangle $T_S$. Note, that since the vertices of $T_S$ lie on three
  intervals out of $I_1, I_2, I_3, I_4$, it follows that it must be
  fat, for all $S \in \Family_{\mathcal{X}}$. As such, the resulting
  set of triangles
  $\TriSet = \brc{T_S \sepw{S \in \Family_{\mathcal{X}}}}$ is fat, and
  clearly there is a cover of $\PntSet$ by $t$ triangles of $\TriSet$
  if and only if the original set cover problem has a cover of size
  $t$.

  Any triangle having its three vertices on three different intervals
  of $I_1, \ldots, I_4$ is close to being an isosceles triangle with
  the middle angle being $90$ degrees. As such, by choosing these
  intervals to be sufficiently short, any triangle of $\TriSet$ would
  have a minimum degree larger than, say, $45-\delta$ degrees, and
  with diameter in the range between $2-\delta$ and $2$.

  This is clearly an instance of the fat-triangle set cover
  problem. Solving it is equivalent to solving the original
  \ProblemC{Vertex Cover} problem, but since it is \APXHard, it
  follows that the fat-triangle set cover problem is \APXHard.
\end{proof}

\begin{remark}
  For fat triangles of similar size a constant factor approximation
  algorithm is known \cite{cv-iaags-07}.
  \lemref{no:PTAS:fat:tr:set:cover} implies that one can do no
  better. Naturally, it might be possible to slightly improve the
  constant of approximation provided by the algorithm of Clarkson and
  Varadarajan \cite{cv-iaags-07}.
  However, for fat triangles of different sizes, only a $\log^*$
  approximation is known \cite{abes-ibulf-14}. It is natural to ask if
  this can be improved.
\end{remark}

\subsubsection{Extensions}

\begin{lemma}
  \lemlab{no:P:T:A:S:circles}%
  Given a set of points $\PntSet$ in the plane and a set of circles
  $\Family$, finding the minimum number of circles of $\Family$ that
  covers $\PntSet$ is \APXHard; that is, there is no \PTAS for this
  problem.
\end{lemma}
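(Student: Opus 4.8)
The plan is to reduce from \emph{minimum $3$-set cover} (covering a ground set by sets of size at most three), which is \APXHard and is the same intractable core used in the proofs of \lemref{no:PTAS:friendly:s:c} and \lemref{no:PTAS:fat:tr:set:cover} (it arises from \ProblemC{Vertex Cover} on graphs of maximum degree three, where each edge participates in exactly two vertex-sets). Given such an instance $\pth{U, \Family}$ with $U = \brc{u_1, \ldots, u_n}$, the goal is to build an equivalent circle-cover instance in which each set $S \in \Family$ is represented by a single circle $C_S$ whose intersection with the constructed point set $\PntSet$ is \emph{exactly} the image of $S$. Once this is achieved, feasible covers of $\PntSet$ by circles correspond bijectively and cardinality-preservingly to feasible covers of $U$ by members of $\Family$, so the optimum value and every achievable approximation ratio are identical on the two instances, and APX-hardness transfers immediately.

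For the embedding I would place the points on a parabola: set $f(u_i) = \pnt_i = \pth{i, i^2}$ for $i = 1, \ldots, n$ and let $\PntSet = \brc{\pnt_1, \ldots, \pnt_n}$ (these are already in convex position). The key geometric fact is that \emph{no four of these points are concyclic}: substituting $y = x^2$ into a circle equation $x^2 + y^2 + Dx + Ey + F = 0$ gives the quartic $x^4 + (1+E)x^2 + Dx + F$, whose four roots --- the $x$-coordinates of any four concyclic points on the parabola --- sum to $0$ by Vieta's formulas, which is impossible for four distinct positive integers. Consequently any three of the $\pnt_i$ are non-collinear and determine a unique circle, and that circle contains no other point of $\PntSet$. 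Now for each $S \in \Family$: if $\cardin{S} = 3$, let $C_S$ be the unique circle through $f(S)$; if $\cardin{S} = 2$, pick any circle through the two points of $f(S)$ that avoids the remaining $n-2$ points (possible, since within the one-parameter family of circles through two fixed points only finitely many pass through any given third point); and if $\cardin{S} \le 1$, take a small circle through the single point of $f(S)$, again missing everything else. Let $\FamilyA = \Set{C_S}{S \in \Family}$.

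By construction $C_S \cap \PntSet = f(S)$ for every $S$, so a subfamily of $\FamilyA$ covers $\PntSet$ iff the corresponding subfamily of $\Family$ covers $U$, with the same cardinality; hence a \PTAS for covering points by circles would yield a \PTAS for minimum $3$-set cover, impossible unless $\POLYT = \NP$. I expect the only delicate point to be the general-position bookkeeping --- guaranteeing that each representing circle passes through exactly its intended points and no others --- and this is precisely what the explicit parabola placement buys us for the size-three sets (the essential case), with a generic perturbation handling the smaller sets. As in \lemref{no:PTAS:fat:tr:set:cover} one could, with a little extra care, also enforce refinements such as every point lying on at most two circles (inherited from the fact that each edge of the degree-three graph lies in exactly two vertex-sets), but these are not needed for the stated claim.
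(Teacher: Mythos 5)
Your proof is correct and takes essentially the same route as the paper: represent each (size-three) set by the circle through its three image points and guarantee that no four points of $\PntSet$ are concyclic, so each circle meets $\PntSet$ in exactly its intended triple and covers correspond cardinality-for-cardinality, transferring \APXHard{}ness. The only difference is cosmetic: the paper obtains the general-position property by slightly perturbing the point set of \lemref{no:PTAS:fat:tr:set:cover} (itself built from vertex cover on degree-three graphs, i.e.\ the same $3$-set-cover core you reduce from), whereas you enforce it explicitly via the parabola embedding and Vieta's formulas and handle the sets of size at most two separately.
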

\begin{proof}
  Slightly perturb the point set used in the proof of
  \lemref{no:PTAS:fat:tr:set:cover}, so that no four points of it are
  co-circular. Let $\PntSet$ denote the resulting set of points. For
  every set $S \in \Family_{\mathcal{X}}$, we now take the circle
  passing through the three corresponding points. Clearly, this
  results in a set of circles (that are almost identical, but yet all
  different), such that finding the minimum number of circles covering
  the set $\PntSet$ is equivalent to solving the original problem.
\end{proof}

\begin{lemma}
  \lemlab{no:PTAS:cover:planes}%
  Given a set of points $\PntSetA$ in $\Re^3$ and a set of planes
  $\Family$, finding the minimum number of planes of $\Family$ that
  covers $\PntSetA$ is \APXHard; that is, there is no \PTAS for this
  problem.
\end{lemma}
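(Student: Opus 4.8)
The plan is to reduce the \APXHard{} ``cover points by circles'' problem of \lemref{no:P:T:A:S:circles} to this problem via the standard lifting of the plane to the paraboloid in $\Re^3$. Concretely, let $\psi \colon \Re^2 \to \Re^3$ be the lifting map $\psi(x,y) = \pth{x, y, x^2 + y^2}$, whose image is the paraboloid $\mathcal{Q} = \Set{(x,y,z)}{z = x^2+y^2}$. For a genuine circle $C = \Set{(x,y)}{x^2 + y^2 + ax + by + c = 0}$ in the plane, let $\Pi_C = \Set{(x,y,z)}{z + ax + by + c = 0}$ be the associated non-vertical plane; it is the unique plane whose intersection with $\mathcal{Q}$ projects vertically onto $C$. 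The only computation needed is the incidence equivalence: for every point $\pnt$ in the plane, $\pnt \in C$ if and only if $\psi(\pnt) \in \Pi_C$ --- this is immediate by substituting $\psi(\pnt) = \pth{\pnt_x, \pnt_y, \pnt_x^2 + \pnt_y^2}$ into the defining equation of $\Pi_C$. Also, $C \mapsto \Pi_C$ is injective, since $\Pi_C$ determines $a,b,c$ and hence $C$.

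Next I would take the instance $\pth{\PntSet, \Family}$ constructed in the proof of \lemref{no:P:T:A:S:circles}: here $\PntSet$ is a point set in the plane with \emph{no four co-circular points}, and $\Family$ is a family of circles, each passing through exactly three points of $\PntSet$, namely the vertices of one of the fat triangles of \lemref{no:PTAS:fat:tr:set:cover}. Since those three vertices are not collinear (the triangles are fat, and this is preserved by the small perturbation), each $C \in \Family$ is a genuine circle, so $\Pi_C$ is well defined. Put $\PntSetA = \psi\pth{\PntSet} \subseteq \Re^3$ and $\FamilyA = \Set{\Pi_C}{C \in \Family}$. By the incidence equivalence, a subfamily $\Family' \subseteq \Family$ covers $\PntSet$ if and only if $\Set{\Pi_C}{C \in \Family'}$ covers $\PntSetA$; and since $C \mapsto \Pi_C$ is a bijection from $\Family$ onto $\FamilyA$, the minimum number of planes of $\FamilyA$ covering $\PntSetA$ equals the minimum number of circles of $\Family$ covering $\PntSet$. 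Hence any approximation algorithm for covering points by planes would yield one of the same ratio for the circle cover problem, which is \APXHard, so there is no \PTAS for covering points by planes unless $\POLYT = \NP$.

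This reduction is almost free given \lemref{no:P:T:A:S:circles}; the only place that needs a moment's care is ensuring that each $C \in \Family$ is an honest circle rather than a degenerate one (a line), so that $\Pi_C$ is a genuine plane and the lifting correspondence applies. That follows from non-collinearity of the fat-triangle vertices. If one prefers a self-contained argument, one can instead redo the perturbation directly: perturb the points of \lemref{no:PTAS:fat:tr:set:cover} so that for every set $S$ and every point $\pnt$ outside the three points of $S$, the point $\pnt$ lies off the circle through those three points --- a finite list of codimension-one conditions, generically avoided --- which again guarantees each lifted plane meets $\PntSetA$ in exactly the three lifted points of $S$. I do not foresee any further obstacle.
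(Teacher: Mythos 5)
Your proof is correct and is essentially the paper's own argument: reduce from the circle-cover hardness of \lemref{no:P:T:A:S:circles} via the standard paraboloid lifting $(x,y) \mapsto (x,y,x^2+y^2)$, under which point-circle incidences become point-plane incidences, so the two cover problems have identical optima. The extra details you supply (explicit plane equation, injectivity, non-collinearity of the fat-triangle vertices) are fine and only make the paper's ``easy to verify'' step explicit.
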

\begin{proof}
  Let $\PntSet$ be the point set and $\Family$ be the set of circles
  constructed in the proof of \lemref{no:P:T:A:S:circles}, and map
  every point in it to three dimensions using the mapping
  $f: (x,y) \rightarrow (x,y,x^2 + y^2)$.  This is a standard lifting
  map used in computing planar Delaunay triangulations via convex-hull
  in three dimensions, see \cite{bcko-cgaa-08}.  Let
  $\PntSetA = f(\PntSet)$ be the resulting point set.

  It is easy to verify that a circle of $c \in \Family$ is mapped by
  $f$ into a curve that lies on a plane. We will abuse notations
  slightly, and use $f(c)$ to denote this plane.  Let
  $\FamilyA = f(\Family)$. Furthermore, for a circle $c \in \Family$,
  we have that $f(c \cap \PntSet) = f(c) \cap \PntSetA$.  Namely,
  solving the set cover problem $(\PntSetA, \FamilyA)$ is equivalent
  to solving the original set cover instance $(\PntSet, \Family)$.
\end{proof}

The recent work of Mustafa \etal \cite{mr-irghs-10} gave a \QPTAS for
set cover of points by disks (i.e., circles with their interior), and
for set cover of points by half-spaces in three dimensions. Thus,
somewhat surprisingly, the ``shelled'' version of these problems are
harder than the filled-in version.  %

\subsection{Independent set of triangles in 3D}

Given a set $\ObjSet$ of $n$ objects in $\Re^d$ (say, triangles in
3d), we are interested in computing a maximum number of objects that
are \emphi{independent}; that is, no pair of objects in this set
(i.e., independent set) intersects. This is the geometric realization
of the \emphi{independent set} problem for the intersection graph
induced by these objects.

\begin{lemma}
  \lemlab{no:PTAS:3:d:i:s}%
  There is no \PTAS for the maximum independent set of triangles in
  $\Re^3$, unless $\POLYT=\NP$.
\end{lemma}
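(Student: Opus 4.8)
The plan is to prove \APXHard{}ness by a polynomial-time, approximation-preserving reduction from Maximum Independent Set on graphs of bounded degree, which is \MaxSNPHard{} — and hence admits no \PTAS{} unless $\POLYT=\NP$ — by Papadimitriou and Yannakakis (see also the degree-bounded hardness results underlying \cite{acgkm-ca-99}). So let $G=(V,E)$ be a graph of maximum degree three with no isolated vertices. The goal is to construct, in polynomial time, a family of triangles $\mathcal{T}=\{T_v : v\in V\}$ in $\Re^3$ whose intersection graph is exactly $G$, i.e.\ $T_u\cap T_v\neq\emptyset$ iff $uv\in E$. Given such a family, independent sets of $\mathcal{T}$ are in size-preserving bijection with independent sets of $G$, so the two optima coincide; therefore any $(1+\eps)$-approximation for maximum independent set of triangles in $\Re^3$ gives a $(1+\eps)$-approximation for the bounded-degree instance, which for $\eps$ below the hardness threshold contradicts $\POLYT\neq\NP$. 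Thus the whole content of the lemma is the geometric realization of $G$ by triangles in $\Re^3$.

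For the realization, first properly $4$-edge-color $G$ (possible in polynomial time since the maximum degree is three, by Vizing's theorem), writing $E=M_1\sqcup M_2\sqcup M_3\sqcup M_4$ with each $M_c$ a matching. Place four pairwise far-apart tiny ``hubs'' $R_1,\dots,R_4\subseteq\Re^3$ in general position, and for every edge $e$ of color $c$ choose a distinct point $q_e$ inside $R_c$. For a vertex $v$ whose incident edges are $e_1,\dots,e_d$ (so $d=\deg v\le 3$, and these edges have pairwise distinct colors) set $T_v=\mathrm{conv}\{q_{e_1},\dots,q_{e_d}\}$: a long, thin triangle with one corner in each of the $d$ hubs touched by $v$. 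If $d<3$, pad $T_v$ with one or two auxiliary points in a private, otherwise-unused pocket so that it becomes a genuine nondegenerate triangle meeting nothing except through its $q_e$ corners. Since for each edge $e=uv$ both $T_u$ and $T_v$ contain the corner $q_e$, adjacent triangles intersect, as required.

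The crux — and the step I expect to be the main obstacle — is to place the $q_e$'s inside the hubs so that \emph{non-adjacent} triangles are disjoint. If $uv\notin E$ then $T_u$ and $T_v$ share no corner, and we must prevent them from meeting either inside a common hub or along a ``corridor'' between two hubs. Since four generic points in $\Re^3$ form a tetrahedron whose six edges are pairwise non-crossing, the six corridors are pairwise disjoint thin tubes, and each $T_v$ hugs one triangular face of this tetrahedron; for two triangles lying near different faces, the only shared region is the one corridor along the common edge, where each contributes a single thin near-parallel slab — make these slabs pairwise skew by spreading the color-$c$ points generically inside $R_c$, using the full dimension of $\Re^3$. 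For two triangles near the same face (same colour triple) but non-adjacent, all six corners are distinct, and one arranges the edge points so that the corresponding thin membranes over that face are pairwise non-crossing (e.g.\ stacking them along a direction transverse to the face). In all cases one must also verify that no edge of one triangle pierces the interior of another; because everything is thin and the hubs are tiny relative to the inter-hub distances, these finitely many ``bad'' coincidence conditions are open and can be removed by a generic perturbation of the $q_e$'s — this general-position bookkeeping is the delicate but routine part, and it is of the same flavor as the arguments in \lemref{no:PTAS:fat:hit:set} and \lemref{no:PTAS:fat:tr:set:cover}. Once $\mathcal{T}$ is confirmed to have intersection graph exactly $G$, the reduction is gap-preserving and the lemma follows.
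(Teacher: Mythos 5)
Your high-level plan (an approximation-preserving reduction from \ProblemC{Independent Set} on degree-$3$ graphs, realized by constructing triangles in $\Re^3$ whose intersection graph is exactly $G$) is the same as the paper's, but the geometric realization you propose has a genuine gap at exactly the step you flag as the crux: making \emph{non-adjacent} triangles disjoint. Generic perturbation of the corner points inside the hubs does not do this. Consider two non-adjacent vertices whose triangles use the same three hubs $R_1,R_2,R_3$. Both triangles are tiny perturbations of the same reference face; writing the perturbation of each corner as a ``height'' in the direction normal to that face, the two supporting planes meet along the zero set of the difference of two affine functions, and the triangles intersect in a segment precisely when the three corner-height differences do not all have the same sign. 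For a generic (independent, random) choice of corners this mixed-sign event happens with constant probability, so ``spread the points generically'' fails outright; avoiding it requires a coordinated global assignment of heights in which every non-adjacent same-triple pair is componentwise comparable, while adjacency forces exact equalities of heights at shared corners. The same issue recurs for pairs sharing two hubs (your ``corridor'' case): making the two hub-to-hub chords skew is not enough, because the two planes cross along a line that runs the length of the corridor at distance comparable to the hub radius, and whether the two $2$-dimensional triangles overlap along it is again a sign condition on corner heights, not a measure-zero coincidence removable by perturbation. You would need to exhibit a consistent global choice satisfying all these constraints for every degree-$3$ graph (or prove one exists), and the proposal does not do this.

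The paper sidesteps the whole difficulty with a different realization: place $n$ points on the moment curve in $\Re^3$, so that their Voronoi diagram is \emph{neighborly} (every pair of cells shares a two-dimensional face); for each edge $v_iv_j$ put one point $p_{ij}$ on the common face of $C_i$ and $C_j$, and let $f_i$ be the convex hull of the at most three points assigned to $v_i$. Since each $f_i$ lies inside the convex cell $C_i$ and distinct cells are interior-disjoint, non-adjacent triangles are disjoint automatically, while adjacent ones meet at the shared vertex $p_{ij}$ -- no case analysis or genericity argument is needed. If you want to salvage your hub construction you would have to supply the missing combinatorial ordering argument; otherwise, switching to a ``one private convex container per vertex, shared boundary points per edge'' scheme (as in the paper, or any Tietze-style neighborly family) is the clean way to get the intersection graph exactly equal to $G$ and complete the reduction.
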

\begin{proof}
  The problem \ProblemC{Independent Set} is \APXHard even for graphs
  with maximum degree $3$ \cite{acgkm-ca-99}. Let
  $\graph=(\Vertices,\Edges)$ be a given graph with maximum degree
  $3$, where $\Vertices = \brc{v_1,\ldots, v_n}$.  We will create a
  set of triangles, such that their intersection graph is $\graph$.

  If one spreads $n$ points $\pnt_1,\ldots, \pnt_n$ on the positive
  branch of the moment curve in $\Re^3$ \cite{s-eubnf-91,
    ek-alnfc-03}, their Voronoi diagram is \emphi{neighborly}; that
  is, every Voronoi cell is a convex polytope that shares a non-empty
  two dimensional boundary face with each of the other cells of the
  diagram. Let $C_i$ denote the cell of the point $\pnt_i$ in this
  Voronoi diagram, for $i=1,\ldots, n$.

  Now, for every vertex $v_i \in \Vertices$, we form a set $\PntSet_i$
  of (at most) three points, as follows. If $v_iv_j \in \Edges$, then
  we place a point $p_{ij}$ on the common boundary of $C_i$ and $C_j$,
  and we add this point to both $\PntSet_i$ and $\PntSet_j$.  After
  processing all the edges in $\Edges$, each point set $\PntSet_i$ has
  at most three points, as the maximum degree in $\graph$ is three.

  For $i=1,\ldots, n$, let $f_i$ be the triangle formed by the
  convex-hull of $\PntSet_i$ (if $\PntSet_i$ has fewer than three
  points then the triangle is degenerate).

  Let $\TriSet = \brc{f_1,\ldots, f_n}$.  Observe that the triangles
  of $\TriSet$ are disjoint except maybe in their common vertices, as
  their interior is contained inside the interior of $C_i$, and the
  cells $C_1, \ldots, C_n$ are interior disjoint. Clearly
  $f_i \cap f_j \ne \emptyset$ if and only if $v_iv_j \in E$. Thus,
  finding an independent set in $\graph$ is equivalent to finding an
  independent set of triangles of the same size in $\TriSet$. We
  conclude that the problem of finding maximum independent set of
  triangles is \APXHard, and as such does not have a \PTAS unless
  $\POLYT=\NP$.
\end{proof}

Implicit in the above proof is that any graph can be realized as the
intersection graph of convex bodies in $\Re^3$ (we were a bit more
elaborate for the sake of completeness and since we needed slightly
more structure). This is well known and can be traced to a result of
Tietze from 1905 \cite{t-upnr-05}.

\subsection{Hardness of approximation with respect %
  to depth}


Using \obsref{eth:no:a}, we can derive hardness of approximation
results even for ``small'' instances.  Here, we reconsider the
\emph{geometric set cover problem}: Given a set of objects $\ObjSet$
in $\Re^d$, and a set of points $\PntSet$, we would like to find
minimum cardinality subset of the objects in $\ObjSet$ that covers the
points of $\PntSet$.

\begin{lemma}
  \lemlab{e:t:h:g:polylog:d}%
  Assuming the exponential time hypothesis ($\ETH$) (see
  \secref{complexity}), consider a given set of fat triangles
  $\TriSet$ of density $\cDensity$, and a set of points $\PntSet$,
  such that $\cardin{\PntSet} + \cardin{\TriSet} = O(n)$.  We have the
  following:
  \begin{compactenum}[\quad(A)]
  \item If $\cDensity = \Omega(\log^c n)$ then one cannot
    $(1+\eps)$-approximate the geometric set cover (or geometric
    hitting set) instance $(\PntSet,\TriSet)$ in polynomial time,
    where $c$ is a sufficiently large constant.

  \item There is an absolute constant $c''$, such that no
    $(1+\eps)$-approximation algorithm for the geometric set cover (or
    hitting set) instance $(\PntSet,\TriSet)$ has running time
    $n^{ \cDensity^{c''} /\eps^{O(1)}}$.
  \end{compactenum}
\end{lemma}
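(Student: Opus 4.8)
The plan is to derive both bounds from the reduction behind \lemref{no:PTAS:fat:tr:set:cover}, combined with the \Term{PCP} theorem and \ETH, using one extra observation about that reduction: the $\Theta(m)$ fat triangles built from an $m$‑vertex graph are convex hulls of triples of points lying on four short arcs of the unit circle, and every such triangle contains a fixed neighbourhood of the centre. Hence the arrangement has depth $\Theta(m)$, and since the triangles are fat and of similar size, $\densityX{\TriSet}=\Theta(m)$: the density of the instance is, up to constants, its size. This rigidity is exactly what lets us trade instance size against density.

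First, recall that by the \Term{PCP} theorem there is an absolute constant $\gamma>1$ such that it is \NPHard to decide whether a bounded‑degree graph $\graph$ has a vertex cover of size $\le t$ or $>\gamma t$, and, via the sparsification lemma and a (near‑)linear‑size PCP, this gap problem on $m$‑vertex graphs cannot be decided in $2^{o(m)}$ time under \ETH. Feeding such a $\graph$ into the construction of \lemref{no:PTAS:fat:tr:set:cover} yields an instance $(\PntSet,\TriSet)$ with $\cardin{\PntSet}+\cardin{\TriSet}=\Theta(m)$, density $\Theta(m)$, and $\opt$ equal to the minimum vertex cover of $\graph$ — so the $\gamma$‑gap is preserved, and a $(1+\eps)$‑approximation with any fixed $\eps<\gamma-1$ decides the gap for $\graph$. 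To decouple the density from the total size, place several such gadgets, all built from the \emph{same} $\graph$, far apart from one another: a query object meeting two gadgets has diameter far larger than any triangle and therefore sees no triangle of comparable or larger diameter, so the densities do not add, while the combined optimum is just a multiple of the vertex‑cover number, keeping the $\gamma$‑gap.

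For part (A), take $\graph$ on $m=\Theta(\log^{c}n)$ vertices and use $\Theta(n/\log^{c}n)$ far‑apart copies of the gadget, giving an instance of size $\Theta(n)$ and density $\Theta(\log^{c}n)$ on which a $(1+\eps)$‑approximation in $\poly(n)$ time would decide gap‑vertex‑cover on $\graph$ in $2^{O(\log n)}$ time; for $c$ large enough this is $2^{o(m)}$, contradicting \ETH (exactly the phenomenon highlighted in \obsref{eth:no:a}). For part (B), pick a seed graph on $\Theta(\cDensity)$ vertices and $\Theta(n/\cDensity)$ far‑apart copies, so the instance has density $\Theta(\cDensity)$, size $\Theta(n)$, and $\opt=\Theta(n/\cDensity)\cdot(\text{cover number})$; a $(1+\eps)$‑approximation running in $n^{\cDensity^{c''}/\eps^{O(1)}}$ time would decide gap‑vertex‑cover on the seed graph in $2^{O(\cDensity^{c''}\log n)/\eps^{O(1)}}$ time, which for a small enough absolute constant $c''$ (say $c''=\tfrac12$, in the regime $\cDensity=\omega(\log^{2}n)$) and for $\eps$ fixed below $\gamma-1$ is $2^{o(\cDensity)}$ — again contradicting \ETH. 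The hitting‑set versions follow verbatim, using the point--object duality already invoked in \lemref{no:PTAS:fat:tr:set:cover} and \thmref{g:hitting:set:cover}.

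The hard part is the three‑way parameter balance rather than any single step: the construction must simultaneously keep $\cardin{\PntSet}+\cardin{\TriSet}=O(n)$, make the density \emph{over the entire plane} equal to $\Theta(\cDensity)$ (this is where the ``all triangles cover the centre'' remark, and the far‑apart placement, are used), and arrange that the running time allowed in the statement, $n^{\cDensity^{c''}/\poly(\eps^{-1})}$, is still sub‑$2^{\Omega(m)}$ for the seed size $m$. The remaining ingredients are standard: the \Term{PCP} theorem, to get the fixed gap $\gamma$ that a $(1+\eps)$‑approximation must cross, and an \ETH‑hardness of bounded‑degree gap‑vertex‑cover that is size‑efficient enough to be meaningful at the (poly‑logarithmic to polynomial) scales we use.
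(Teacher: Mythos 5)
Your proposal follows essentially the same route as the paper's proof: both feed polylogarithmic-size hard gap instances (obtained from \ETH together with the \Term{PCP} machinery, as in \obsref{eth:no:a}) through the construction of \lemref{no:PTAS:fat:tr:set:cover}, observe that the resulting fat-triangle instance has density polynomial in its size, and then balance parameters against the allowed running times; the paper simply uses the polylog-size instance directly (which already satisfies $\cardin{\PntSet}+\cardin{\TriSet}=O(n)$), so your far-apart replication to pad the instance to size $\Theta(n)$ is sound (far-apart copies indeed do not add densities, and the multiplicative gap is preserved) but not actually required. Two inaccuracies are worth flagging, though neither is fatal. First, the assertion that gap vertex cover on $m$-vertex bounded-degree graphs cannot be decided in $2^{o(m)}$ time under \ETH presupposes a linear-size \Term{PCP}; with known (quasi-linear) PCPs one only gets a lower bound of the form $2^{\Omega(m/\polylog m)}$, so your concrete numbers for part (B) (e.g., $c''=1/2$ working for every $\cDensity = \omega(\log^2 n)$) need a slight relaxation -- this is precisely the slack that the phrases ``$c$ sufficiently large'' and ``absolute constant $c''$'' are meant to absorb, and it is why the paper's argument is content with polynomial \Term{PCP} blow-up. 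Second, in the set-cover gadget the triangles do \emph{not} contain a fixed neighbourhood of the centre: any three of the four arcs include an antipodal pair, so the centre lies essentially on an edge of each triangle; what is true (and suffices) is that every triangle meets a small disk around the centre, and that the triangles of a common type overlap in a region of positive area, which gives depth and density $\Theta(m)$ as you need. Finally, the hitting-set half is handled in the paper by the separate construction of \lemref{no:PTAS:fat:hit:set} rather than by point--object duality, but your ``verbatim'' claim goes through with that gadget, whose triangles do all contain the centre.
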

\begin{proof}
  (A) Suppose we had such a \PTAS, and consider an instance
  $\Instance$ of \TrSAT of size at least $c' \log^2 n$, where $c'$ is
  a sufficiently large constant.  \ETH implies that any algorithm
  solving such an instance must have running time at least
  $n^{\Omega(\log n)}$.  On the other hand, the instance $\Instance$
  can be converted to a set cover instance of fat triangles with
  $\polylog n$ triangles/points and $\polylog n$ density, by
  \lemref{no:PTAS:fat:tr:set:cover}. As such, a \PTAS in this case,
  would contradict \ETH.

  (B) Consider a constant $c''$ sufficient small.  By part (A), an
  instance of \TrSAT with $\log^{2} n$ variables, can be converted
  into an instance of geometric set cover (of fat triangles) with
  depth $\log^{c_3} n$ (where $c_3$ is some constant). But then, if
  $c'' <1/c_3$, this implies that this instance be solved in
  polynomial time, contradicting \ETH.

  The same conclusions holds for geometric hitting set, by using
  \lemref{no:PTAS:fat:hit:set}.
\end{proof}

\section{Conclusions}
\seclab{conclusions}

In this paper, we studied the class of graphs arising out of low
density objects in $\Re^d$, and showed that they are subclass of
graphs with polynomial expansion. We provided \PTAS's for independent
set and dominating set problems (and some variants) for such
graphs. This gives rise to \PTAS for some generic variants of these
problems. Coupled with hardness results, we characterize the
complexity of geometric variants of set cover and hitting set as a
function of depth (for example, for fat triangles).

At this point in time, it seems interesting to better understand low
density graphs. In particular, how exactly do they relate to graphs of
low genus, and whether one can develop efficient approximation
algorithms and hardness of approximations to other problems for this
family of graphs. (In particular, there is strong evidence that low
genus graphs are low density graphs.) For example, as a concrete
problem, can one get a \PTAS for TSP for low-density graphs or
polynomial expansion graphs?

\paragraph{Acknowledgments.}

The authors thank Mark \si{de} Berg for useful discussions related to
the problems studied in this paper. In particular, he pointed out the
improved bound for \lemref{separator:low:density}. We also thank the
anonymous referees. We are particularly grateful to the anonymous
referee of a previous version of this paper who pointed out the
connection of our work to graphs with bounded expansion.

\hypersetup{%
  allcolors=black%
}

\BibTexMode{%
\newcommand{\etalchar}[1]{$^{#1}$}
 \providecommand{\CNFX}[1]{ {\em{\textrm{(#1)}}}}
  \providecommand{\tildegen}{{\protect\raisebox{-0.1cm}{\symbol{'176}\hspace{-0.03cm}}}}
  \providecommand{\SarielWWWPapersAddr}{http://sarielhp.org/p/}
  \providecommand{\SarielWWWPapers}{http://sarielhp.org/p/}
  \providecommand{\urlSarielPaper}[1]{\href{\SarielWWWPapersAddr/#1}{\SarielWWWPapers{}/#1}}
  \providecommand{\Badoiu}{B\u{a}doiu}
  \providecommand{\Barany}{B{\'a}r{\'a}ny}
  \providecommand{\Bronimman}{Br{\"o}nnimann}  \providecommand{\Erdos}{Erd{\H
  o}s}  \providecommand{\Gartner}{G{\"a}rtner}
  \providecommand{\Matousek}{Matou{\v s}ek}
  \providecommand{\Merigot}{M{\'{}e}rigot}
  \providecommand{\Hastad}{H\r{a}stad\xspace}
  \providecommand{\CNFCCCG}{\CNFX{CCCG}}
  \providecommand{\CNFBROADNETS}{\CNFX{BROADNETS}}
  \providecommand{\CNFESA}{\CNFX{ESA}}
  \providecommand{\CNFFSTTCS}{\CNFX{FSTTCS}}
  \providecommand{\CNFIJCAI}{\CNFX{IJCAI}}
  \providecommand{\CNFINFOCOM}{\CNFX{INFOCOM}}
  \providecommand{\CNFIPCO}{\CNFX{IPCO}}
  \providecommand{\CNFISAAC}{\CNFX{ISAAC}}
  \providecommand{\CNFLICS}{\CNFX{LICS}}
  \providecommand{\CNFPODS}{\CNFX{PODS}}
  \providecommand{\CNFSWAT}{\CNFX{SWAT}}
  \providecommand{\CNFWADS}{\CNFX{WADS}}

}

\BibLatexMode{\printbibliography}

\begin{thebibliography}{vdSOdBV98}

\bibitem[ACG{\etalchar{+}}99]{acgkm-ca-99}
G.~Ausiello, P.~Crescenzi, G.~Gambosi, V.~Kann, A.~Marchetti-Spaccamela, and
  M.~Protasi.
\newblock  {\em Complexity and approximation}.
\newblock Springer-Verlag, Berlin, 1999.

\bibitem[AdBES14]{abes-ibulf-14}
\href{http://cis.poly.edu/~aronov/}{B.~{Aronov}}, \href{http://www.win.tue.nl/~mdberg/}{M.~de~{Berg}}, E.~Ezra, and \href{http://www.math.tau.ac.il/~michas}{M.~{Sharir}}.
\newblock \href{http://dx.doi.org/10.1137/120891241}{Improved bounds for the
  union of locally fat objects in the plane}.
\newblock {\em SIAM J. Comput.}, 43(2):543--572, 2014.

\bibitem[AES10]{aes-ssena-10}
\href{http://cis.poly.edu/~aronov/}{B.~{Aronov}}, E.~Ezra, and \href{http://www.math.tau.ac.il/~michas}{M.~{Sharir}}.
\newblock \href{http://dx.doi.org/10.1137/090762968}{Small-size
  {$\varepsilon$}-nets for axis-parallel rectangles and boxes}.
\newblock {\em SIAM J. Comput.}, 39(7):3248--3282, 2010.

\bibitem[And70]{a-ocpls-70}
E.M. Andreev.
\newblock  On convex polyhedra in lobachevsky spaces.
\newblock {\em Sbornik: Mathematics}, 10:413--440, April 1970.

\bibitem[APS08]{aps-sugo-08}
\href{http://www.cs.duke.edu/~pankaj}{P.~K.~{Agarwal}}, \href{http://www.math.nyu.edu/~pach}{J.~{Pach}}, and \href{http://www.math.tau.ac.il/~michas}{M.~{Sharir}}.
\newblock
  \href{https://users.cs.duke.edu/~pankaj/publications/surveys/union.pdf}{State
  of the union--of geometric objects}.
\newblock In J.~E. Goodman, \href{http://www.math.nyu.edu/~pach}{J.~{Pach}}, and R.~Pollack, editors, {\em Surveys in
  Discrete and Computational Geometry Twenty Years Later}, volume 453 of {\em
  Contemporary Mathematics}, pages 9--48. Amer. Math. Soc., 2008.

\bibitem[AW13]{aw-asmwi-13}
A.~Adamaszek and A.~Wiese.
\newblock  Approximation schemes for maximum weight independent set of
  rectangles.
\newblock In {\em Proc. 54th Annu. IEEE Sympos. Found. Comput. Sci.
  {\em(FOCS)}}, pages 400--409, 2013.

\bibitem[AW14]{aw-qmwis-14}
A.~Adamaszek and A.~Wiese.
\newblock  A {QPTAS} for maximum weight independent set of polygons with
  polylogarithmic many vertices.
\newblock In {\em Proc. 25th ACM-SIAM Sympos. Discrete Algs. {\em(SODA)}},
  pages 400--409, 2014.

\bibitem[Bak94]{b-aancp-94}
B.~S. Baker.
\newblock  Approximation algorithms for {NP}-complete problems on planar
  graphs.
\newblock {\em \href{http://www.acm.org/jacm/}{J. Assoc. Comput. {Mach.}}}, 41:153--180, 1994.

\bibitem[BM76]{bm-gta-76}
J.~A. Bondy and U.~S.~R. Murty.
\newblock \href{http://www.ecp6.jussieu.fr/pageperso/bondy/books/gtwa/gtwa\
  .html}{{\em Graph Theory with Applications}}.
\newblock North-Holland, 1976.

\bibitem[CCH12]{cch-smcpg-12}
C.~Chekuri, K.~Clarkson, and \href{http://sarielhp.org}{S.~{{Har-Peled}}}.
\newblock \href{http://sarielhp.org/papers/08/multi_cover}{On the set
  multi-cover problem in geometric settings}.
\newblock {\em ACM Trans. Algo.}, 9(1):9, 2012.

\bibitem[CG09]{cg-epgig-09}
J.~Chalopin and D.~Gon{\c{c}}alves.
\newblock  Every planar graph is the intersection graph of segments in the
  plane: extended abstract.
\newblock In {\em Proc. 41st Annu. ACM Sympos. Theory Comput. {\em(STOC)}},
  pages 631--638, 2009.

\bibitem[CG14a]{cg-spfgem-14}
S.~Cabello and D.~Gajser.
\newblock \href{http://arxiv.org/abs/1410.5778}{Simple ptas's for families of
  graphs excluding a minor}.
\newblock {\em CoRR}, abs/1410.5778, 2014.

\bibitem[CG14b]{cg-eaahr-14}
Timothy~M. Chan and Elyot Grant.
\newblock  Exact algorithms and {APX}-hardness results for geometric packing
  and covering problems.
\newblock {\em Comput. Geom. Theory Appl.}, 47(2):112--124, 2014.

\bibitem[CH12]{ch-aamis-12}
\href{http://www.math.uwaterloo.ca/~tmchan/}{T.~M.~{Chan}} and \href{http://sarielhp.org}{S.~{{Har-Peled}}}.
\newblock  Approximation algorithms for maximum independent set of
  pseudo-disks.
\newblock {\em \href{http://link.springer.com/journal/454}{Discrete Comput. {}Geom.}}, 48:373--392, 2012.

\bibitem[Cha03]{c-ptasp-03}
Timothy~M. Chan.
\newblock  Polynomial-time approximation schemes for packing and piercing fat
  objects.
\newblock {\em J. Algorithms}, 46(2):178--189, 2003.

\bibitem[CR15]{cr-btb-15}
Daniel~W. Cranston and Landon Rabern.
\newblock  Brooks' theorem and beyond.
\newblock {\em J. Graph Theo.}, 80(3):199--225, 2015.

\bibitem[CV07]{cv-iaags-07}
\href{http://cm.bell-labs.com/who/clarkson/}{K.~L. {Clarkson}} and \href{http://www.cs.uiowa.edu/~kvaradar/}{K.~R. {Varadarajan}}.
\newblock  Improved approximation algorithms for geometric set cover.
\newblock {\em \href{http://link.springer.com/journal/454}{Discrete Comput. {}Geom.}}, 37(1):43--58, 2007.

\bibitem[dB08]{b-ibucf-08}
\href{http://www.win.tue.nl/~mdberg/}{M.~de~{Berg}}.
\newblock  Improved bounds on the union complexity of fat objects.
\newblock {\em \href{http://link.springer.com/journal/454}{Discrete Comput. {}Geom.}}, 40(1):127--140, 2008.

\bibitem[dBCKO08]{bcko-cgaa-08}
\href{http://www.win.tue.nl/~mdberg/}{M.~de~{Berg}}, \href{http://www.win.tue.nl/~ocheong}{O.~{Cheong}}, {M. van} Kreveld, and \href{http://www.cs.uu.nl/people/markov/}{M.~H. {Overmars}}.
\newblock \href{http://www.cs.uu.nl/geobook/}{{\em Computational Geometry:
  Algorithms and Applications}}.
\newblock Springer-Verlag, Santa Clara, CA, USA, 3rd edition, 2008.

\bibitem[dBKSV02]{bksv-rimga-02}
\href{http://www.win.tue.nl/~mdberg/}{M.~de~{Berg}}, M.~J. Katz, {A. F.}~{van~der} Stappen, and J.~Vleugels.
\newblock  Realistic input models for geometric algorithms.
\newblock {\em Algorithmica}, 34(1):81--97, 2002.

\bibitem[DN15]{dn-ssspe-15}
Z.~{Dvo{\v{r}}{\'{a}}k} and S.~{Norin}.
\newblock \href{http://arxiv.org/abs/1504.04821}{{Strongly sublinear separators
  and polynomial expansion}}.
\newblock {\em ArXiv e-prints}, April 2015.

\bibitem[EK03]{ek-alnfc-03}
\href{http://compgeom.cs.uiuc.edu/~jeffe/}{J.~{Erickson}} and S.~Kim.
\newblock  Arbitrarily large neighborly families of congruent symmetric convex
  3-polytopes.
\newblock In A.~Bezdek, editor, {\em Discrete Geometry:~In Honor of W.
  Kuperberg's 60th Birthday}, Lecture Notes Pure Appl. Math., pages 267--278.
  Marcel-Dekker, 2003.

\bibitem[Epp00]{e-dtmcg-00}
\href{http://www.ics.uci.edu/~eppstein/}{D.~{Eppstein}}.
\newblock  Diameter and treewidth in minor-closed graph families.
\newblock {\em Algorithmica}, 27:275--291, 2000.

\bibitem[FG88]{fg-oafac-88}
T.~Feder and D.~H. Greene.
\newblock  Optimal algorithms for approximate clustering.
\newblock In {\em Proc. 20th Annu. ACM Sympos. Theory Comput. {\em(STOC)}},
  pages 434--444, 1988.

\bibitem[Fre87]{f-faspp-87}
G.~N. Frederickson.
\newblock  Fast algorithms for shortest paths in planar graphs, with
  applications.
\newblock {\em SIAM J. Comput.}, 16(6):1004--1022, 1987.

\bibitem[GKS14]{gks-dfopndg-14}
M.~Grohe, S.~Kreutzer, and S.~Siebertz.
\newblock  Deciding first-order properties of nowhere dense graphs.
\newblock In {\em Proc. 46th Annu. ACM Sympos. Theory Comput. {\em(STOC)}},
  pages 89--98, 2014.

\bibitem[Gro03]{g-ltwem-03}
M.~Grohe.
\newblock  Local tree-width, excluded minors, and approximation algorithms.
\newblock {\em Combinatorica}, 23(4):613--632, 2003.

\bibitem[{Har}09]{h-bffne-09}
\href{http://sarielhp.org}{S.~{{Har-Peled}}}.
\newblock  Being fat and friendly is not enough.
\newblock {\em CoRR}, 2009.

\bibitem[{Har}13]{h-speps-13}
\href{http://sarielhp.org}{S.~{{Har-Peled}}}.
\newblock  A simple proof of the existence of a planar separator.
\newblock {\em ArXiv e-prints}, April 2013.

\bibitem[{Har}14]{h-qssp-14}
\href{http://sarielhp.org}{S.~{{Har-Peled}}}.
\newblock  Quasi-polynomial time approximation scheme for sparse subsets of
  polygons.
\newblock In {\em Proc. 30th Annu. Sympos. Comput. Geom. {\em(SoCG)}}, pages
  120--129, 2014.

\bibitem[H{\aa}s99]{h-chaw-99}
Johan H{\aa}stad.
\newblock \href{http://dx.doi.org/10.1007/BF02392825}{Clique is hard to
  approximate withinn {$n^{1-\varepsilon}$}}.
\newblock {\em Acta Mathematica}, 182(1):105--142, 1999.

\bibitem[HKRS97]{hkrs-fspap-97}
M.~R. Henzinger, P.~Klein, S.~Rao, and S.~Subramanian.
\newblock  Faster shortest-path algorithms for planar graphs.
\newblock {\em J. Comput. Sys. Sci.}, 55:3--23, August 1997.

\bibitem[HQ15]{hq-aapel-15}
\href{http://sarielhp.org}{S.~{{Har-Peled}}} and K.~Quanrud.
\newblock  Approximation algorithms for polynomial-expansion and low-density
  graphs.
\newblock In {\em Proc. 23nd Annu. Euro. Sympos. Alg.\CNFESA}, volume 9294 of
  {\em Lect. Notes in Comp. Sci.}, pages 717--728, 2015.

\bibitem[HQ16]{hq-naape-16-arxiv}
\href{http://sarielhp.org}{S.~{{Har-Peled}}} and K.~{Quanrud}.
\newblock \href{http://arxiv.org/abs/1603.03098}{Notes on approximation
  algorithms for polynomial-expansion and low-density graphs}.
\newblock {\em ArXiv e-prints}, March 2016.

\bibitem[HR13]{hr-nplta-13}
\href{http://sarielhp.org}{S.~{{Har-Peled}}} and B.~Raichel.
\newblock \href{http://cs.uiuc.edu/~sariel/papers/12/aggregate/}{Net and prune:
  A linear time algorithm for {Euclidean} distance problems}.
\newblock In {\em Proc. 45th Annu. ACM Sympos. Theory Comput. {\em(STOC)}},
  pages 605--614, New York, NY, USA, 2013. ACM.

\bibitem[HT74]{ht-ept-74}
J.~E. Hopcroft and R.~E. Tarjan.
\newblock  Efficient planarity testing.
\newblock {\em \href{http://www.acm.org/jacm/}{J. Assoc. Comput. {Mach.}}}, 21(4):549--568, 1974.

\bibitem[IP01]{ip-ocks-01}
R.~Impagliazzo and R.~Paturi.
\newblock  On the complexity of {$k$}-{SAT}.
\newblock {\em J. Comput. Sys. Sci.}, 62(2):367--375, 2001.

\bibitem[IPZ01]{ipz-wphse-01}
R.~Impagliazzo, R.~Paturi, and F.~Zane.
\newblock  Which problems have strongly exponential complexity?
\newblock {\em J. Comput. Sys. Sci.}, 63(4):512--530, 2001.

\bibitem[Kar72]{k-racp-72}
R.~M. Karp.
\newblock
  \href{http://www.cs.berkeley.edu/$\sim$luca/cs172/karp.pdf}{Reducibility
  among combinatorial problems}.
\newblock In {\em Complexity of Computer Computations}, pages 85--103, 1972.

\bibitem[Koe36]{k-kdka-36}
P.~Koebe.
\newblock  Kontaktprobleme der konformen {Abbildung}.
\newblock {\em Ber. Verh. S{\"a}chs. Akademie der Wissenschaften Leipzig,
  Math.-Phys. Klasse}, 88:141--164, 1936.

\bibitem[LT79]{lt-stpg-79}
R.~J. Lipton and R.~E. Tarjan.
\newblock  A separator theorem for planar graphs.
\newblock {\em SIAM J. Appl. Math.}, 36:177--189, 1979.

\bibitem[LT80]{lt-apst-80}
R.~J. Lipton and R.~E. Tarjan.
\newblock  Applications of a planar separator theorem.
\newblock {\em SIAM J. Comput.}, 9(3):615--627, 1980.

\bibitem[Mat14]{m-nossg-14}
\href{http://kam.mff.cuni.cz/~matousek}{J. Matou{\v s}ek}.
\newblock  Near-optimal separators in string graphs.
\newblock {\em Combin., Prob. {\&} Comput.}, 23(1):135--139, 2014.

\bibitem[MR10]{mr-irghs-10}
Nabil~H. Mustafa and Saurabh Ray.
\newblock  Improved results on geometric hitting set problems.
\newblock {\em \href{http://link.springer.com/journal/454}{Discrete Comput. {}Geom.}}, 44(4):883--895, 2010.

\bibitem[MRR14a]{mrr-qgscp-14}
N.~H. {Mustafa}, R.~{Raman}, and S.~{Ray}.
\newblock \href{http://arxiv.org/abs/1403.0835}{{QPTAS} for geometric set-cover
  problems via optimal separators}.
\newblock {\em ArXiv e-prints}, 2014.

\bibitem[MRR14b]{mrr-sahsg-14}
N.~H. Mustafa, R.~Raman, and S.~Ray.
\newblock  Settling the {APX}-hardness status for geometric set cover.
\newblock In {\em Proc. 55th Annu. IEEE Sympos. Found. Comput. Sci.
  {\em(FOCS)}}, pages 541--550, 2014.

\bibitem[MTTV97]{mttv-sspnng-97}
G.~L. Miller, S.~H. Teng, W.~P. Thurston, and S.~A. Vavasis.
\newblock  Separators for sphere-packings and nearest neighbor graphs.
\newblock {\em \href{http://www.acm.org/jacm/}{J. Assoc. Comput. {Mach.}}}, 44(1):1--29, 1997.

\bibitem[NO08a]{no-gcbe1-08}
J.~{Ne{\v s}et{\v r}il} and P.~{Ossona de Mendez}.
\newblock  Grad and classes with bounded expansion {I}. decompositions.
\newblock {\em Eur. J. Comb.}, 29(3):760--776, 2008.

\bibitem[NO08b]{no-gcbe2-08}
J.~{Ne{\v s}et{\v r}il} and P.~{Ossona de Mendez}.
\newblock  Grad and classes with bounded expansion {II}. algorithmic aspects.
\newblock {\em Eur. J. Comb.}, 29(3):777--791, 2008.

\bibitem[NO12]{no-s-12}
J.~Ne{\v s}et{\v r}il and P.~{Ossona de Mendez}.
\newblock  {\em Sparsity -- Graphs, Structures, and Algorithms}, volume~28 of
  {\em Alg. Combin.}
\newblock Springer, 2012.

\bibitem[PA95]{pa-cg-95}
\href{http://www.math.nyu.edu/~pach}{J.~{Pach}} and \href{http://www.cs.duke.edu/~pankaj}{P.~K.~{Agarwal}}.
\newblock \href{http://www.addall.com/Browse/Detail/0471588903.html}{{\em
  Combinatorial Geometry}}.
\newblock John Wiley \& Sons, 1995.

\bibitem[RS97]{rs-sbepl-97}
R.~Raz and S.~Safra.
\newblock  A sub-constant error-probability low-degree test, and a sub-constant
  error-probability {PCP} characterization of {NP}.
\newblock In {\em Proc. 29th Annu. ACM Sympos. Theory Comput. {\em(STOC)}},
  pages 475--484, 1997.

\bibitem[Sei91]{s-eubnf-91}
\href{http://www-tcs.cs.uni-sb.de/seidel/}{R.~{Seidel}}.
\newblock \href{http://dimacs.rutgers.edu/Volumes/Vol04.html}{Exact upper
  bounds for the number of faces in {$d$}-dimensional {V}oronoi diagrams}.
\newblock In P.~Gritzman and B.~Sturmfels, editors, {\em Applied Geometry and
  Discrete Mathematics: The Victor Klee Festschrift}, volume~4 of {\em DIMACS
  Series in Discrete Mathematics and Theoretical Computer Science}, pages
  517--530. Amer. Math. Soc., 1991.

\bibitem[SS85]{ss-empae-85}
J.~T. Schwartz and \href{http://www.math.tau.ac.il/~michas}{M.~{Sharir}}.
\newblock  Efficient motion planning algorithms in environments of bounded
  local complexity.
\newblock Report 164, Dept. Comput. Sci., Courant Inst. Math. Sci., New York
  Univ., New York, NY, 1985.

\bibitem[{Sta}92]{f-mpafo-92}
{A. F. van~der} {Stappen}.
\newblock
  \href{http://www.staff.science.uu.nl/~stapp101/PhDThesis_AFvanderStappen.pdf}{{\em
  Motion Planning Amidst Fat Obstacles}}.
\newblock PhD thesis, Utrecht University, Netherlands, 1992.

\bibitem[SW98]{sw-gsta-98}
W.~D. Smith and N.~C. Wormald.
\newblock  Geometric separator theorems and applications.
\newblock In {\em Proc. 39th Annu. IEEE Sympos. Found. Comput. Sci.
  {\em(FOCS)}}, pages 232--243, 1998.

\bibitem[Tie05]{t-upnr-05}
H.~Tietze.
\newblock  Uber das problem der nachbargeibiete im raum.
\newblock {\em Monatshefte Math.}, 15:211--216, 1905.

\bibitem[vdSOdBV98]{sobv-mpelo-98}
A.~F. van~der Stappen, \href{http://www.cs.uu.nl/people/markov/}{M.~H. {Overmars}}, \href{http://www.win.tue.nl/~mdberg/}{M.~de~{Berg}}, and J.~Vleugels.
\newblock  Motion planning in environments with low obstacle density.
\newblock {\em \href{http://link.springer.com/journal/454}{Discrete Comput. {}Geom.}}, 20(4):561--587, 1998.

\bibitem[{Ver}05]{v-cbseb-05}
J.-L. {Verger-Gaugry}.
\newblock  Covering a ball with smaller equal balls in {$\Re^n$}.
\newblock {\em \href{http://link.springer.com/journal/454}{Discrete Comput. {}Geom.}}, 33(1):143--155, 2005.

\end{thebibliography}




\end{document}